\newif\iflong
\longtrue  		

\iflong
\documentclass[12pt]{article}
\topmargin 0.4cm
\oddsidemargin -0.1cm
\textwidth  16.4cm 
\headheight 0.0cm
\textheight 21.7cm
\parindent  6mm
\parskip    5pt
\tolerance  1000
\newcommand{\IEEEPARstart}[2]{#1#2}
\else
\documentclass[journal]{IEEEtran} 
\fi

\usepackage[dvipsnames, table]{xcolor}
\usepackage{url}
\usepackage{graphicx}
\usepackage{algorithmic, algorithm}
\usepackage[caption=false]{subfig}
\usepackage{amsmath}
\usepackage{amssymb}
\usepackage{multicol}
\usepackage{theorem}
\usepackage{booktabs}
\usepackage{multirow}
\usepackage{cite}

\newcommand{\scal}[2]{{\left\langle{{#1}\mid{#2}}\right\rangle}}
 
\newcommand{\menge}[2]{\big\{{#1}~\big |~{#2}\big\}} 

\newcommand{\HH}{\ensuremath{{\mathcal H}}}

\newcommand{\GG}{\ensuremath{{\mathcal G}}}
\newcommand{\emp}{\ensuremath{{\varnothing}}}
\newcommand{\epi}{\operatorname{epi}}

\newcommand{\Id}{\ensuremath{\operatorname{Id}}\,}
\newcommand{\RR}{\ensuremath{\mathbb{R}}}
\newcommand{\RP}{\ensuremath{\left[0,+\infty\right[}}

\newcommand{\RPP}{\ensuremath{\left]0,+\infty\right[}}

\newcommand{\RPX}{\ensuremath{\left[0,+\infty\right]}}
\newcommand{\RX}{\ensuremath{\left]-\infty,+\infty\right]}}

\newcommand{\NN}{\ensuremath{\mathbb N}}

\newcommand{\pinf}{\ensuremath{{+\infty}}}

\newcommand{\dom}{\ensuremath{\operatorname{dom}}}
\newcommand{\prox}{\ensuremath{\operatorname{prox}}}

 \newcommand{\reli}{\ensuremath{\operatorname{ri}}}
\newcommand{\minimize}[2]{\ensuremath{\underset{\substack{{#1}}}%
{\mathrm{minimize}}\;\;#2 }}

\newcommand{\argmind}[2]{\ensuremath{\underset{\substack{{#1}}}%
{\mathrm{argmin}}\;\;#2 }}

\newcommand{\pp}{p}
\newcommand{\qq}{q}
\newcommand{\rr}{r}
\newcommand{\PP}{{P'}}
\newcommand{\QQ}{{Q'}}
\renewcommand{\AA}{A'}
\newcommand{\BB}{B'}
\newcommand{\UU}{U}
\newcommand{\VV}{V}

\newcommand{\Prob}{\mathbb{P}}
  

\newtheorem{theorem}{Theorem}[section]
\newtheorem{lemma}[theorem]{Lemma}

\newtheorem{proposition}[theorem]{Proposition}

\theoremstyle{plain}{\theorembodyfont{\rmfamily}%
\newtheorem{example}[theorem]{Example}
\newtheorem{remark}[theorem]{Remark}}
\theoremstyle{plain}{\theorembodyfont{\rmfamily}%
\theoremstyle{plain}{\theorembodyfont{\rmfamily}%
}
\theoremstyle{plain}{\theorembodyfont{\rmfamily}%
}
\theoremstyle{plain}{\theorembodyfont{\rmfamily}%
\newtheorem{problem}[theorem]{Problem}}


\hyphenation{op-tical net-works semi-conduc-tor}

\newcommand{\authorOne}  {Mireille El Gheche}
\newcommand{\authorTwo}  {Giovanni Chierchia}
\newcommand{\authorThree}{Jean-Christophe Pesquet}

\newcommand{\affiliationOne}  {Universit\'e de Bordeaux, IMS UMR 5218 and IMB UMR 5251, Talence, France.}
\newcommand{\affiliationTwo}  {Universit\'e Paris-Est, LIGM UMR 8049, CNRS, ENPC, ESIEE Paris, UPEM, Noisy-le-Grand, France.}
\newcommand{\affiliationThree}{Center for Visual Computing, CentraleSup\'elec, University Paris-Saclay, Chatenay-Malabry, France.}

\begin{document}

\title{Proximity Operators of Discrete Information Divergences \iflong -- Extended Version\fi}

\iflong
\author{%
\authorOne\thanks{\affiliationOne}%
\and%
\authorTwo\thanks{\affiliationTwo}%
\and%
\authorThree\thanks{\affiliationThree}%
}
\else
\author{\authorOne, \authorTwo,~\IEEEmembership{Member,~IEEE}, and \authorThree,~\IEEEmembership{Fellow,~IEEE}
\thanks{\authorOne~is with \affiliationOne}
\thanks{\authorTwo~is with \affiliationTwo} 
\thanks{\authorThree~is with \affiliationThree}
\thanks{Part of the material in this paper was presented in \cite{ElGheche_2013_EUSIPCO, ElGheche_2013_ICASSP}.
}
}
\fi


\maketitle

\begin{abstract}
\iflong
Information divergences allow one to assess how close two distributions are from each other. Among the large panel of available measures, a special attention has been paid to convex $\varphi$-divergences, such as Kullback-Leibler, Jeffreys, Hellinger, Chi-Square, Renyi, and I$_{\alpha}$ divergences.
\fi
While $\varphi$-divergences have been extensively studied in convex analysis, their use in optimization problems often remains challenging. In this regard, one of the main shortcomings of existing methods is that the minimization of $\varphi$-divergences is usually performed with respect to one of their arguments, possibly within alternating optimization techniques. In this paper, we overcome this limitation by deriving new closed-form expressions for the proximity operator of such two-variable functions. This makes it possible to employ standard proximal methods for efficiently solving a wide range of convex optimization problems involving $\varphi$-divergences. In addition, we show that these proximity operators are useful to compute the epigraphical projection of several functions. The proposed proximal tools are numerically validated in the context of optimal query execution within database management systems, where the problem of selectivity estimation plays a central role. Experiments are carried out on small to large scale scenarios. 
\end{abstract}

\iflong
\else
\begin{IEEEkeywords}
Convex Optimization, Divergences, Proximity Operator, Proximal Algorithms, Epigraphical Projection.
\end{IEEEkeywords}
\IEEEpeerreviewmaketitle
\fi

\section{Introduction}
\IEEEPARstart{D}{ivergence} measures play a crucial role in evaluating the dissimilarity between two information sources. The idea of quantifying how much information is shared between two probability distributions can be traced back to the work by Pearson \cite{Pearson1900} and Hellinger \cite{Hellinger1909}. Later, Shannon \cite{shannon48} introduced a powerful mathematical framework that links the notion of information with communications and related areas, laying the foundations for information theory. 
\iflong
However, information theory was not just a product of Shannon's work, it was the result of fundamental contributions made by many distinct individuals, from a variety of backgrounds, who took his ideas and expanded upon them. As a result, information theory has broadened to applications in statistical inference, natural language processing, cryptography, neurobiology, quantum computing, and other forms of data analysis. Important sub-fields of information theory are algorithmic information theory, information quantification, and source/channel coding. 
\fi
\textcolor{black}{In this context, a key measure of information is the Kullback-Leibler divergence~\cite{Kullback_1951_KL_divergences}, which can be regarded as an instance of the wider class of $\varphi$-divergences \cite{csiszar1963, ali_silvey_1966_jrs,bookelements_info_theory}, including also Jeffreys, Hellinger, Chi-square, R\'enyi, and $\mathrm{I}_\alpha$ divergences \cite{Sason2016}.}

\iflong

\subsection{Kullback-Leibler divergence}
The Kullback-Leibler (KL) divergence is known to play a prominent role in the computation of channel capacity and rate-distortion functions. One can address these problems with the celebrated alternating minimization algorithm proposed by Blahut and Arimoto \cite{Blahut72,Arimoto1972}. However, other approaches based on geometric programming may provide more efficient numerical solutions \cite{Chiang04geometricprogramming}. As the KL divergence is a Bregman distance, optimization problems involving this function can also be addressed by using the alternating minimization approach proposed by Bauschke \emph{et al} \cite{Bauschke06jointminimization} (see also \cite{Combettes_2016} for recent related works). However, the required optimization steps may be difficult to implement, and the convergence of the algorithm is only guaranteed under restrictive conditions. Moreover, a proximal algorithm generalizing the EM algorithm was investigated in \cite{citeulike:10288786}, where the KL divergence is a metric for maximizing a log-likelihood.

The generalized KL divergence (also called I-divergence) is widely used in inverse problems for recovering a signal of interest from an observation degraded by Poisson noise. In such a case, the generalized KL divergence is usually employed as a data fidelity term. The resulting optimization approach can be solved through an alternating projection technique \cite{Byrne1993}, where both the data fidelity term and the regularization term are based on the KL divergence. The problem was formulated in a similar manner by Richardson and Lucy \cite{RichardsonLucy72, Lucy74},  whereas more general forms of the regularization functions were considered by others \cite{Fessler1995, Dupe_FX_2008_ip_proximal_ifdpniusr, zanni_2009_invProb_Poisson_ProjMeth, anthoine_2010_Divergence_videoIndexing, Pustelnik_N_2011_tip_PPXA, TeuberSteidlChan2012, laurePoissonNoiszy_tip_2012, steidl_Pesquet_epigraphicalProj_2013}. In particular, some of these works are  grounded on proximal splitting methods \cite{Dupe_FX_2008_ip_proximal_ifdpniusr,Pustelnik_N_2011_tip_PPXA,TeuberSteidlChan2012}. These methods offer efficient and flexible solutions to a wide class of possibly nonsmooth convex minimization problems (see \cite{Combettes_P_2010_inbook_proximal_smsp,Boyd_proximalalgorithm_2013} and references therein). However, in all the aforementioned works, one of the two variables of the KL divergence is fixed.

\subsection{Other Divergences} 
Recently, the authors in \cite{nielsen_j_ieee_tmi_total_bregman_div, nielsen_p_cvpr_total_bregman_shapeRetrieval} defined a new measure called Total KL divergence, which has the benefit of being invariant to transformations from a special linear group. On the other side, the classical symmetrization of KL divergence, also known as Jeffreys-Kullback (JK) divergence \cite{jeffreys_1946_definition}, was recently used in the $k$-means algorithm as a replacement of the squared difference \cite{nielsen_2009_bregman_centroids,nielson_kullbackLeiber_centroid_2013}, yielding analytical expression of the divergence centroids in terms of the Lambert W function.

The Hellinger (Hel) divergence was originally introduced by Beran \cite{beran1977} and later rediscovered  under different names \cite{bookDevijver, gibbshellinger2002, liesevejda2006ieeeit, rauber2008prob}, such as Jeffreys-Masutita distance. In the field of information theory, the Hel divergence is commonly used for nonparametric density estimation \cite{lecam1973, sara1993}, statistics, and data analytics \cite{lee2012rule}, as well as machine learning~\cite{cieslak2011tree}. 

The Chi-square divergence was introduced by Pearson \cite{Pearson1900}, who used it to quantitatively assess whether an observed phenomenon tends to confirm or deny a given hypothesis. This work heavily contributed to the development of modern statistics. In 1984, the journal \emph{Science} referred to it as ``one of the 20 most important scientific breakthroughs''. Moreover, Chi-square was also successfully applied in different contexts, such as information theory and signal processing, as a dissimilarity measure between two probability distributions \cite{bookelements_info_theory,park2011chi}.   

R\'enyi divergence was introduced as a measure of information related to the R\'enyi entropy \cite{renyi1961mesureinfo}. According to the definition by Harremo \cite{harremo2006renyi_div}, R\'enyi divergence measures ``how much a probabilistic mixture of two codes can be compressed''. It has been studied and applied in many areas \cite{book_vajda_1989,lieseVajda1987_convstatdist,liesevejda2006ieeeit}, including image registration and alignement problems \cite{hero2002}.

The $\mathrm{I}_{\alpha}$ divergence was originally proposed by Chernoff \cite{chernoff1952} to statistically evaluate the efficiency of an hypothesis test. Subsequently, it was recognized as an instance of more general classes of divergences \cite{ali_silvey_1966_jrs}, such as the $\varphi$-divergences \cite{csiszar1974_infmeasure} and the Bregman divergences \cite{Amari2009}, and further extended by many researchers \cite{lieseVajda1987_convstatdist,zhangduality_div_2004,minka_div_2005_techReport,Amari2009}. The $\mathrm{I}_{\alpha}$ divergence has been also considered in the context of Non-negative Matrix Factorization, where the hyperparameter $\alpha$ is associated with characteristics of a learning machine \cite{Cichocki2008}.

\else 

The Kullback-Leibler (KL) divergence has been known to play a prominent role in the computation of channel capacity and rate-distortion functions \cite{Blahut72,Arimoto1972}. These problems can be addressed either with alternating minimization approaches \cite{Bauschke06jointminimization,Combettes_2016} or geometric programming  \cite{Chiang04geometricprogramming}. The KL divergence was also used as a metric for maximizing a log-likelihood in a proximal method generalizing the EM algorithm \cite{citeulike:10288786}, but here one of its two variables is fixed.
The generalized KL divergence (also called I-divergence) is widely used in inverse problems for recovering a signal of interest from an observation degraded by Poisson noise. In such a case, the generalized KL divergence is employed as a data fidelity term, and the resulting optimization approach can be solved through an alternating projection scheme~\cite{Byrne1993}.  The problem was formulated in a similar manner by Richardson \cite{RichardsonLucy72}, Lucy \cite{Lucy74}, and others \cite{Fessler1995, Dupe_FX_2008_ip_proximal_ifdpniusr, zanni_2009_invProb_Poisson_ProjMeth, anthoine_2010_Divergence_videoIndexing, Pustelnik_N_2011_tip_PPXA, TeuberSteidlChan2012, laurePoissonNoiszy_tip_2012, steidl_Pesquet_epigraphicalProj_2013}. However, in the latter works, one of the two variables of the I-divergence is fixed.

The classical symmetrization of KL divergence, known as Jeffreys (Jef) divergence \cite{jeffreys_1946_definition}, was recently used in the $k$-means algorithm as a replacement of the squared difference \cite{nielsen_2009_bregman_centroids, nielson_kullbackLeiber_centroid_2013}, yielding an analytical expression of the centroids in terms of the Lambert W function. \textcolor{black}{Moreover, tight bounds for this divergence were recently derived in terms of the total variation distance \cite{Sason2015}, similarly to KL divergence~\cite{Gilardoni2006}.}

The Hellinger (Hel) divergence was originally introduced in \cite{beran1977} and later rediscovered  under different names \cite{bookDevijver, gibbshellinger2002, liesevejda2006ieeeit, rauber2008prob}. In the field of information theory, the Hel divergence is commonly used for nonparametric density estimation \cite{lecam1973, sara1993}, data analytics \cite{lee2012rule}, and machine learning~\cite{cieslak2011tree}. 

The Chi-square divergence was originally used to quantitatively assess whether an observed phenomenon tends to confirm or deny a given hypothesis \cite{Pearson1900}. 
It was also successfully applied in different contexts, such as information theory and signal processing, as a dissimilarity measure between two probability distributions \cite{bookelements_info_theory,park2011chi}.   

R\'enyi divergence was introduced as a measure of information related to the R\'enyi entropy \cite{renyi1961mesureinfo}, indicating how much a probabilistic mixture of two codes can be compressed \cite{harremo2006renyi_div}. It has been studied and applied in many areas \cite{book_vajda_1989,lieseVajda1987_convstatdist,liesevejda2006ieeeit}, including image registration and alignment problems \cite{hero2002}.

The $\mathrm{I}_{\alpha}$ divergence was originally proposed to statistically evaluate the efficiency of an hypothesis test \cite{chernoff1952}. Subsequently, it was recognized as an instance of the $\varphi$-divergences \cite{csiszar1974_infmeasure} and the Bregman divergences \cite{Amari2009}, and further extended by many researchers \cite{lieseVajda1987_convstatdist,zhangduality_div_2004,minka_div_2005_techReport,Amari2009}. This divergence was also considered in the context of non-negative matrix factorization~\cite{Cichocki2008}.
\fi

\subsection{Contributions}
To the best of our knowledge, existing approaches for optimizing convex criteria involving $\varphi$-divergences are often restricted to specific cases, such as performing the minimization w.r.t.\ one of the divergence arguments. In order to take into account both arguments, one may resort to alternating minimization schemes, but only in the case when specific assumptions are met. Otherwise, there exist some approaches that exploit the presence of additional moment constraints \cite{csiszar_matus_momentconst_div}, or the equivalence between $\varphi$-divergences and some loss functions \cite{nguyen_lossfunctions_div}, but they provide little insight into the numerical procedure for solving the resulting optimization problems.

In the context of proximal methods, there exists no general approach for performing the minimization w.r.t.\ both the arguments of a $\varphi$-divergence. This limitation can be explained by the fact that a few number of closed-form expressions are available for the proximity operator of non-separable convex functions, as opposed to separable ones \cite{Chaux_C_2007_j-ip_variational_ffbip,Combettes_P_2010_inbook_proximal_smsp}. Some examples of such functions are the Euclidean norm~\cite{Combettes_PL_2008_j-ip_proximal_apdmfscvip}, the squared Euclidean norm composed with an arbitrary linear operator~\cite{Combettes_PL_2008_j-ip_proximal_apdmfscvip}, a separable function composed with an orthonormal or semi-orthogonal linear operator \cite{Combettes_PL_2008_j-ip_proximal_apdmfscvip}, the max function \cite{Condat_2014_simplex}, the quadratic-over-linear function \cite{raey,Benamou2000,Combettes2016perspective}, and the indicator function of some closed convex sets \cite{Combettes_PL_2008_j-ip_proximal_apdmfscvip, Chierchia_2012_Epigraphical_Projection}.

In this work, we develop a novel proximal approach that allows us to address more general forms of optimization problems involving $\varphi$-divergences. Our main contribution is the derivation of new closed-form expressions for the proximity operator of such functions. This makes it possible to employ standard proximal methods for efficiently solving a wide range of convex optimization problems involving $\varphi$-divergences. In addition to its flexibility, the proposed approach leads to parallel algorithms that can be efficiently implemented on both multicore and GPGPU architectures \cite{Gaetano2012}.

\subsection{Organization}
The remaining of the paper is organized as follows. Section~\ref{s:problemformulation} presents the general form of the optimization problem that we aim at solving.  Section~\ref{s:mainresult} studies the proximity operator of $\varphi$-divergences and some of its properties. Section~\ref{s:examples} details the closed-form expressions of the aforementioned proximity operators. Section~\ref{sec:epigraphicalProj} makes the connection with epigraphical projections. Section~\ref{s:results} illustrates the application to selectivity estimation for query optimization in database management systems. Finally, Section~\ref{sec:conclch3} concludes the paper.

\subsection{Notation}
Throughout the paper, $\Gamma_0(\HH)$ denotes the class of convex functions $f$ defined
on a real Hilbert space $\HH$ and taking their values in $]-\infty,\pinf~]$
which are lower-semicontinuous~and~proper (i.e. their domain $\dom f$ on which they take
finite values is nonempty).  $\|\cdot\|$ and $\scal{\cdot}{\cdot}$ denote the norm and the scalar product
of $\HH$, respectively. 
The Moreau subdifferential of $f$ at $x\in \HH$ is
$\partial f(x) = \menge{u\in\HH}{(\forall y\in\HH)\;\scal{y-x}{u}+f(x)\leq f(y)}$.
If  $f\in \Gamma_0(\HH)$ is G\^ateaux differentiable at $x$, $\partial f(x) = \{\nabla f(x)\}$
where $\nabla f(x)$ denotes the gradient of $f$ at $x$.
The conjugate of $f$ is $f^*\in \Gamma_0(\HH)$ such that
$(\forall  u\in \HH)$ $f^*(u) = \sup_{x\in\HH}\big(\scal{x}{u}-f(x)\big)$.
\textcolor{black}{The proximity operator of $f$  is the mapping $\prox_f\colon\HH\to\HH$ defined as \cite{Moreau_J_1965_bsmf_Proximite_eddueh}
	\begin{equation}
	(\forall x \in \HH)\qquad \prox_f(x) = \argmind{y\in\HH}{f(y)+\frac12 \|x-y\|^2}.
	\end{equation}}%
Let $C$ be a nonempty closed convex subset $C$ of $\HH$.
The indicator function of  $C$ is defined as
\begin{equation}
(\forall x \in \HH)\qquad \iota_{C}(x) = 
\begin{cases}
0 & \mbox{if $x \in C$}\\
\pinf & \mbox{otherwise.}
\end{cases}
\end{equation}
The elements of a vector $x\in\HH = \RR^N$ are denoted by $x=(x^{(\ell)})_{1\le \ell\le N}$, whereas $I_N$ is the $N\times N$ identity matrix.

\section{Problem formulation}
\label{s:problemformulation}

The objective of this paper is to address convex optimization problems involving
a discrete information divergence. In particular, the focus is put on the following formulation.

\begin{problem}\label{p:gen_div}
	Let $D$ be a function in $\Gamma_0(\RR^P\times \RR^P)$.
	Let $A$ and $B$ be matrices in $\RR^{P\times N}$, and let $u$ and $v$
	be vectors in $\RR^P$. For every $s\in \{1,\ldots,S\}$,
	let $R_s$ be a function in $\Gamma_0(\RR^{K_s})$ and $T_s \in \RR^{K_s
		\times N}$.
	We want to
	\begin{equation}
	\label{prob:probgen}
	\minimize{x\in \RR^N}{D(A x+u,B x+v) + \sum_{s=1}^S R_s(T_s x)}.
	\end{equation}
\end{problem}

Note that the functions $D$ and $(R_s)_{1\le s \le S}$ are allowed to take the value $\pinf$, so that Problem \ref{p:gen_div} can include convex constraints by letting some of the functions $R_s$ be equal to the indicator function $\iota_{C_s}$ of some nonempty closed convex set $C_s$. In inverse problems, $R_s$ may also model some additional prior information, such as the sparsity of coefficients after some appropriate linear transform $T_s$.

{\color{black}
\subsection{Applications in information theory}
A special case of interest in information theory arises by decomposing $x$ into two vectors $\pp\in\RR^\PP$ and $\qq\in\RR^\QQ$, that is $x = [\pp^\top\;\qq^\top]^\top$ with $N=\PP+\QQ$. 
Indeed, set $u=v = 0$, $A = [\AA\; 0]$ with $\AA \in \RR^{P \times \PP}$, $B = [0\; \BB]$ with $\BB\in \RR^{P\times \QQ}$ and, for every $s \in \{1,\ldots,S\}$,
$T_s = [\UU_s\; \VV_s]$ with $\UU_s \in \RR^{K_s \times \PP}$ and $\VV_s \in \RR^{K_s \times \QQ}$. Then, Problem~\ref{p:gen_div} takes the following form:
\vspace*{-0.2cm}
\begin{problem}\label{p:prob}
Let $\AA$, $\BB$, $(\UU_s)_{1\le s \le S}$, and $(\VV_s)_{1\le s \le S}$ be matrices as defined above.
Let $D$ be a function in $\Gamma_0(\RR^P\times \RR^P)$ and, for every $s\in \{1,\ldots,S\}$, let $R_s$ be a function in $\Gamma_0(\RR^{K_s})$.
We want to
\begin{equation}
\minimize{(\pp,\qq)\in \RR^\PP\times\RR^\QQ}{D(\AA\pp,\BB\qq) + \sum_{s=1}^S
R_s(\UU_s\pp+\VV_s\qq)}.
\end{equation}
\end{problem}
Several tasks can be formulated within this framework, such as the computation of  channel  capacity  and  rate-distortion  functions  \cite{Arimoto1972, Blahut72}, the selection of log-optimal portfolios \cite{Cover1984}, maximum likelihood estimation from incomplete data \cite{Dempster1977}, soft-supervised learning for text classification \cite{Subramanya2008}, simultaneously estimating a regression vector and an additional model parameter \cite{Combettes2016perspective} or the image gradient distribution and a parametric model distribution \cite{Tartavel2016}, as well as image registration \cite{ElGheche_2013_EUSIPCO}, deconvolution \cite{ElGheche_2013_ICASSP}, and recovery~\cite{Byrne1993}.
We next detail an important application example in source coding.
	
	\begin{example}
		Assume that a discrete memoryless source $E$, taking its values in a finite alphabet $\{e_1,\dots,e_{P_1}\}$ with probability $\Prob(E)$, is to be encoded by a compressed signal $\widehat{E}$ in terms of a second alphabet $\{\widehat{e}_1,\dots,\widehat{e}_{P_2}\}$. Furthermore, for every $j\in \{1,\ldots,P_1\}$ and $k \in \{1,\ldots,P_2\}$, let 
		$\delta^{(k,j)}$ be the distortion induced when substituting $\widehat{e}_k$ for $e_j$.
		We wish to find an encoding $\Prob(\widehat{E}|E)$ that yields a point on the 
		rate-distortion curve at a given distortion value $\overline{\delta} \in\RPP$.
		It is well-known \cite{Cover2006} that this amounts to minimizing the mutual information $\mathcal{I}$ between $E$ and $\widehat{E}$, more precisely
		the rate-distortion function $\mathsf{R}$ is given by
		\begin{equation}\label{eq:RD_fun}
		\mathsf{R}(\overline{\delta}) = \min_{\Prob(\widehat{E}|E)} \mathcal{I}(E,\widehat{E}), %
		\end{equation}
		subject to the constraint
		\begin{equation}\label{eq:RD_set}
		\sum_{j=1}^{P_1}\sum_{k=1}^{P_2} \delta^{(k,j)} \, \Prob(E=e_j)\Prob(\widehat{E}=\widehat{e}_k|E=e_j) \le \overline{\delta}.
		\end{equation}
		The mutual information can be written as \cite[Theorem~4(a)]{Blahut72}
		\begin{equation}
		\min_{\Prob(\widehat{E})} \; \sum_{j=1}^{P_1}\sum_{k=1}^{P_2}\Prob(E=e_j,\widehat{E}=\widehat{e}_k) \ln\left( \frac{\Prob(\widehat{E}=\widehat{e}_k,E=e_j)}{\Prob(E=e_j)\Prob(\widehat{E}=\widehat{e}_k)}\right),
		\end{equation}
		subject to the constraint 
		\begin{equation}\label{e:C1RD}
		\sum_{k=1}^{P_2} \Prob(\widehat{E}=\widehat{e}_k) = 1.
		\end{equation}
		Moreover, the constraint in \eqref{eq:RD_set} can be reexpressed as
		\begin{equation}\label{e:C2RD}
		\sum_{j=1}^{P_1}\sum_{k=1}^{P_2} \delta^{(k,j)} \, \Prob(E=e_j,\widehat{E}=\widehat{e}_k)   \le \overline{\delta},
		\end{equation}
		with
		\begin{equation}\label{e:C3RD}
		(\forall j\in \{1,\ldots,P_1\})\quad \sum_{k=1}^{P_2}  \Prob(E=e_j,\widehat{E}=\widehat{e}_k)  = \Prob(E=e_j).
		\end{equation}
		The unknown variables are thus the vectors 
		\begin{equation}
		\pp=\big(\Prob(E=e_j,\widehat{E}=\widehat{e}_k)\big)_{1\le j\le P_1,1\le k\le P_2} \in \RR^{P_1 P_2}
		\end{equation}
		and
		\begin{equation}
		\qq =\big(\Prob(\widehat{E}=\widehat{e}_k)\big)_{1\le k\le P_2} \in \RR^{P_2},
		\end{equation}
		whose optimal values  are solutions to the problem:
		\begin{equation}\label{eq:RD}
		\minimize{\pp \in C_2\cap C_3, \qq \in C_1} D(\pp,\rr\otimes\qq)
		\end{equation}
		where $\rr = \big(\Prob(E=e_j)\big)_{1\le j\le P_1} \in \RR^{P_1}$,
		$\otimes$ denotes the Kronecker product, $D$ is the Kullback-Leibler divergence, and
		$C_1$, $C_2$, $C_3$ are the closed convex sets corresponding to the linear constraints \eqref{e:C1RD}, \eqref{e:C2RD}, \eqref{e:C3RD},
		respectively. The above formulation is a special case of Problem \ref{p:prob} in which $P=P'=P_1 P_2$, $Q' = P_2$, $\AA=I_P$, $\BB$ is such that  
		$(\forall \qq \in \RR^{Q'})$ $\BB\qq=\rr\otimes\qq$, $S=3$, $\VV_1 = I_{Q'}$, $\UU_2 = \UU_3 = I_P$,
		$\UU_1$ and $\VV_2=\VV_3$ are null matrices, and
		$(\forall s \in \{1,2,3\})$ $R_s$ is the indicator function of the constraint convex set $C_s$.

	\end{example}
	
}

\subsection{Considered class of divergences}\label{se:defdiv}
We will focus on additive information measures of the form
\begin{equation}\label{e:defD}
\big(\forall (p,q) \in \RR^P\times\RR^P\big)\qquad D(p,q) = \sum_{i=1}^P \Phi(p^{(i)},q^{(i)}),
\end{equation} 
where $\Phi\in\Gamma_0(\RR\times\RR)$ is the \emph{perspective function} \cite{Bauschke_H_2011_book_con_amo} on $\RP\times \RPP$ of a function $\varphi\colon\RR \to \RPX$ belonging to $\Gamma_0(\RR)$ and twice differentiable on $\RPP$. In other words, $\Phi$ is defined as follows: for every $(\upsilon,\xi)\in \RR^2$,
\begin{equation}
\Phi(\upsilon,\xi) = 
\left\{
\begin{aligned}
&\xi\, \varphi\Big(\frac{\upsilon}{\xi}\Big) && \mbox{if $\upsilon \in \RP$ and $\xi \in \RPP$}\\
&\upsilon \lim_{\zeta \to \pinf} \frac{\varphi(\zeta)}{\zeta} &&
\mbox{if $\upsilon \in \RPP$ and $\xi = 0$}\\
&0 && \mbox{if $\upsilon = \xi = 0$}\\
&\pinf && \mbox{otherwise},
\end{aligned}
\right.
\label{e:perspf}
\end{equation}
where the above limit is guaranteed to exist \cite[Sec.~2.3]{Hiriart_Urruty_1996_book_convex_amaIf}.
Moreover, if $\varphi$ is a strictly convex function such that 
\begin{equation}
\varphi(1) = \varphi'(1) = 0,
\end{equation}
the function $D$ in \eqref{e:defD} belongs to the class of
$\varphi$-divergences \cite{csiszar1963,Csiszar_1967_divergences}.
Then, for every $(p,q) \in \RP^P \times \RP^P$,
\begin{align}
& D(p,q) \ge 0 &&\\
& D(p,q) = 0 \quad \Leftrightarrow \quad p=q.&&
\end{align}
Examples of $\varphi$-divergences will be provided in Sections
\ref{se:DKL}, \ref{se:DJK}, \ref{se:DH}, \ref{se:DCS} and \ref{se:DIa}.
For a thorough investigation of the rich properties of $\varphi$-divergences, the reader is refered to \cite{csiszar1963,ali_silvey_1966_jrs, basseville_1089_distanceMeasures_signal_pattern_recog}.
Other divergences (e.g., R\'enyi divergence) are expressed as
\begin{equation}
\big(\forall (p,q) \in \RR^P \times \RR^P\big)\qquad
D_g(p,q) = g\big(D(p,q)\big)
\end{equation} 
where $g$ is an increasing function. Then, provided that $g\big(\varphi(1)\big) = 0$, $D_g(p,q) \ge 0$ for every $[p^\top \; q^\top]^\top\in C$ with
\begin{equation}\label{e:Cprob}
C = \Big\{{x \in [0,1]^{2P}} ~\big|~ {\sum_{i=1}^P x^{(i)} = 1\;\text{and}\;\sum_{i=1}^{P} x^{(P+i)} = 1}\Big\}.
\end{equation}
From an optimization standpoint,
minimizing $D$ or $D_g$ (possibly subject to constraints)  makes no
difference, hence we will only address problems involving $D$ in the
rest of this paper.

\subsection{Proximity operators}
Proximity operators will be fundamental tools in this paper.
We first recall some of their key properties.

\begin{proposition} 
	\label{p:propprox}
	{\rm \cite{Moreau_J_1965_bsmf_Proximite_eddueh,Bauschke_H_2011_book_con_amo}} Let $f\in \Gamma_0(\HH)$. Then,
	\begin{enumerate}
		\item \label{p:propproxi} For every $\overline{x}\in \HH$, $\prox_f \overline{x} \in \dom
		f$.
		\item \label{p:propproxii} For every $(x,\overline{x})\in \HH^{2}$
		\begin{equation}
		x = \prox_f(\overline{x})\quad\Leftrightarrow\quad \overline{x} - x \in \partial f(x).
		\end{equation}
		\item \label{p:propproxiii} For every $(\overline{x},z)\in \HH^2$,
		\begin{equation}
		\prox_{f(\cdot+z)}(\overline{x}) = \prox_f(\overline{x}+z)-z.
		\end{equation}
		\item \label{p:propproxiv} For every $(\overline{x},z)\in \HH^2$ and for every $\alpha \in
		\RR$,
		\begin{equation}
		\prox_{f+ \scal{z}{\cdot} + \alpha} (\overline x) =
		\prox_{f}(\overline{x}-z).
		\end{equation}
		\item \label{p:propproxvi} Let $f^*$ be the conjugate function of $f$. For every $\overline{x} \in \HH$ and for every $\gamma \in \RPP$, 
		\begin{equation}
		\prox_{\gamma f^*} (\overline{x}) = \overline{x} - \gamma\prox_{f/\gamma}(\overline{x}/\gamma).
		\end{equation}
		\item \label{p:propproxv} Let $\GG$ be a real Hilbert space and let $T\colon \GG\to \HH$ be a bounded linear operator, with the adjoint denoted by $T^*$. If $T T^* = \kappa \Id$ and $\kappa \in \RPP$, then for all $\overline{x} \in \HH$
		\begin{equation}
		\prox_{f \circ T}(\overline{x}) = \overline{x} + \frac{1}{\kappa} T^*\big(\prox_{\kappa f}(T\overline{x}) - T\overline{x}\big).
		\end{equation}
	\end{enumerate}
\end{proposition}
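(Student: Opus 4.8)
The plan is to derive parts (i) and (ii) from first principles and then obtain (iii)--(vi) as consequences of the subdifferential characterization in (ii). For (i), observe that $y\mapsto f(y)+\tfrac12\|\overline x-y\|^2$ is proper (since $\dom f\neq\emp$), lower-semicontinuous, and strongly convex, hence admits a unique minimizer; this minimizer attains a finite value of the objective, so in particular $f(\prox_f\overline x)<\pinf$, i.e. $\prox_f\overline x\in\dom f$. For (ii), I would invoke Fermat's rule: $x$ minimizes $f(\cdot)+\tfrac12\|\overline x-\cdot\|^2$ iff $0$ lies in the subdifferential of that sum at $x$. Since the quadratic term is everywhere differentiable with gradient $x-\overline x$, the sum rule for subdifferentials gives $\partial\big(f+\tfrac12\|\overline x-\cdot\|^2\big)(x)=\partial f(x)+x-\overline x$, so the optimality condition reads $\overline x-x\in\partial f(x)$, which is exactly (ii).

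Parts (iii) and (iv) are then purely computational. For (iii) I would substitute $w=y+z$ inside the definition of $\prox_{f(\cdot+z)}(\overline x)$, turning the objective into $f(w)+\tfrac12\|(\overline x+z)-w\|^2$ up to the translation by $-z$, which yields $\prox_f(\overline x+z)-z$. For (iv), the additive constant $\alpha$ does not affect the argmin, and completing the square shows that minimizing $f(y)+\scal{z}{y}+\tfrac12\|\overline x-y\|^2$ is equivalent to minimizing $f(y)+\tfrac12\|(\overline x-z)-y\|^2$, giving $\prox_f(\overline x-z)$.

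The two substantive steps are (v) and (vi), both handled through (ii). For (v), set $p=\prox_{f/\gamma}(\overline x/\gamma)$ and $x^\star=\overline x-\gamma p$; applying (ii) to $f/\gamma$ at $\overline x/\gamma$ gives $\overline x-\gamma p\in\partial f(p)$, i.e. $x^\star\in\partial f(p)$. Using the conjugation duality $u\in\partial f(p)\Leftrightarrow p\in\partial f^*(u)$ yields $p\in\partial f^*(x^\star)$, hence $\overline x-x^\star=\gamma p\in\gamma\,\partial f^*(x^\star)=\partial(\gamma f^*)(x^\star)$; by (ii) this means $x^\star=\prox_{\gamma f^*}(\overline x)$. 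For (vi), set $q=\prox_{\kappa f}(T\overline x)$ and $x^\star=\overline x+\tfrac1\kappa T^*(q-T\overline x)$; the key computation is that $TT^*=\kappa\Id$ gives $Tx^\star=T\overline x+\tfrac1\kappa TT^*(q-T\overline x)=q$, while (ii) applied to $\kappa f$ at $T\overline x$ gives $\tfrac1\kappa(T\overline x-q)\in\partial f(q)=\partial f(Tx^\star)$. Then $\overline x-x^\star=\tfrac1\kappa T^*(T\overline x-q)\in T^*\partial f(Tx^\star)$, and invoking (ii) once more closes the argument.

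The main obstacle is justifying, in (vi), that $\partial(f\circ T)(x^\star)=T^*\partial f(Tx^\star)$: the subdifferential chain rule for a composition with a linear operator requires a qualification condition. Here it is supplied for free by the hypothesis $TT^*=\kappa\Id$ with $\kappa>0$, which forces $T$ to be surjective (for any $y$, $T(\tfrac1\kappa T^*y)=y$), so the chain rule applies. A secondary point requiring care in (v) is the correct scaling of subdifferentials under $f\mapsto f/\gamma$ and $f^*\mapsto\gamma f^*$, together with the $f^{**}=f$ identity underlying the conjugation duality; both are standard for $f\in\Gamma_0(\HH)$.
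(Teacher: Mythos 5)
Your proof is correct, but note that the paper itself offers no proof of this proposition: it is stated as a recollection of known results, with the citations to Moreau's 1965 paper and the Bauschke--Combettes monograph standing in for the argument. Your derivation is essentially the standard textbook one (Fermat's rule plus the sum rule for (ii), change of variables and completing the square for (iii)--(iv), the subdifferential characterization combined with the Fenchel conjugation duality $u\in\partial f(p)\Leftrightarrow p\in\partial f^*(u)$ for the Moreau decomposition, and the $TT^*=\kappa\Id$ identity for the semi-orthogonal composition rule), and all steps are sound. One refinement worth pointing out in part (vi): you do not actually need the full chain rule $\partial(f\circ T)(x^\star)=T^*\partial f(Tx^\star)$, hence no qualification condition at all. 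The inclusion $T^*\partial f(Tx^\star)\subseteq\partial(f\circ T)(x^\star)$ holds unconditionally (if $u\in\partial f(Tx^\star)$, then for every $y$, $f(Ty)\ge f(Tx^\star)+\scal{Ty-Tx^\star}{u}=f(Tx^\star)+\scal{y-x^\star}{T^*u}$), and this one-sided inclusion is all your argument uses, since (ii) only requires exhibiting $\overline{x}-x^\star$ as an element of $\partial(f\circ T)(x^\star)$. Where the surjectivity of $T$ (which, as you observe, follows from $TT^*=\kappa\Id$ with $\kappa>0$) is genuinely needed is elsewhere: it guarantees that $f\circ T$ is proper, hence belongs to $\Gamma_0(\GG)$, so that $\prox_{f\circ T}$ is well defined and (ii) is applicable to it in the first place.
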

Numerous additional properties of proximity operators are mentioned in \cite{Combettes_P_2010_inbook_proximal_smsp, Boyd_proximalalgorithm_2013}.

In this paper, we will be mainly concerned with the determination of
the proximity operator of the function $D$ defined in \eqref{e:defD} with $\HH = \RR^P \times \RR^P$. The next result emphasizes that this task reduces to the calculation of the proximity operator of a
real function of two variables.
\begin{proposition}\label{p:proxD}
	Let $D$ be defined by \eqref{e:defD} where $\Phi \in \Gamma_0(\RR^2)$ and let $\gamma \in \RPP$. Let $u \in \RR^P$ and $v \in \RR^P$. Then, for every $\overline{p}\in \RR^P$ and for every $\overline{q} \in \RR^P$,
	\begin{equation}
	\prox_{\gamma D(\cdot+u,\cdot+v)}(\overline{p},\overline{q}) = (p-u, q-v)
	\end{equation}
	where, for every $i \in \{1,\ldots,P\}$, 
	\begin{align}
	(p^{(i)},q^{(i)}) = \prox_{\gamma \Phi}(\overline{p}^{(i)}+u^{(i)},\overline{q}^{(i)}+v^{(i)}).
	\end{align}
\end{proposition}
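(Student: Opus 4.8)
The plan is to reduce the shifted, $P$-dimensional proximity operator to $P$ independent two-dimensional ones. First I would record that $D$ inherits membership in $\Gamma_0(\RR^P\times\RR^P)$ from the hypothesis $\Phi\in\Gamma_0(\RR^2)$: each summand $(p,q)\mapsto\Phi(p^{(i)},q^{(i)})$ is a proper lower-semicontinuous convex function of $(p,q)$, being $\Phi$ precomposed with a coordinate projection, and a finite sum of such functions is again proper, lower-semicontinuous and convex, so $\gamma D$ and its translate have well-defined, single-valued proximity operators. Next, since $\gamma D(\cdot+u,\cdot+v)=(\gamma D)\big(\cdot+(u,v)\big)$, I would apply Proposition~\ref{p:propprox}\ref{p:propproxiii} with $z=(u,v)$ to obtain $\prox_{\gamma D(\cdot+u,\cdot+v)}(\overline p,\overline q)=\prox_{\gamma D}(\overline p+u,\overline q+v)-(u,v)$. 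This already accounts for the $-(u,v)$ appearing in the statement and reduces the problem to evaluating $\prox_{\gamma D}$ at the shifted point; writing that value as $(p,q)$ yields exactly the claimed output $(p-u,q-v)$.

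It then remains to show that $\prox_{\gamma D}$ acts pairwise. By definition, $\prox_{\gamma D}(\overline p+u,\overline q+v)$ is the minimizer over $(p,q)\in\RR^P\times\RR^P$ of $\gamma D(p,q)+\tfrac12\|(\overline p+u,\overline q+v)-(p,q)\|^2$. The key observation is that the squared Euclidean norm on $\RR^P\times\RR^P$ splits as $\sum_{i=1}^P\big[(\overline p^{(i)}+u^{(i)}-p^{(i)})^2+(\overline q^{(i)}+v^{(i)}-q^{(i)})^2\big]$, while $\gamma D(p,q)=\sum_{i=1}^P\gamma\Phi(p^{(i)},q^{(i)})$; hence the whole objective is a sum over $i$ of terms each depending only on the pair $(p^{(i)},q^{(i)})$. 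Minimizing such a separable sum decouples into $P$ independent minimizations, the $i$-th of which is precisely the definition of $\prox_{\gamma\Phi}(\overline p^{(i)}+u^{(i)},\overline q^{(i)}+v^{(i)})$, well-defined and single-valued because $\gamma\Phi\in\Gamma_0(\RR^2)$ makes the regularized objective strongly convex, proper and coercive.

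The one point requiring care is that the ambient coordinates interleave unfavourably: the vector $(p,q)$ lists all of $p$ before all of $q$, so the block coupled by $\Phi$, namely $(p^{(i)},q^{(i)})$, is not a contiguous pair of coordinates. I would make the block-separable structure rigorous by introducing the orthogonal permutation $\Pi$ of $\RR^{2P}$ that regroups $(p,q)$ into the ordering $\big((p^{(1)},q^{(1)}),\dots,(p^{(P)},q^{(P)})\big)$: under this relabelling $\gamma D$ becomes genuinely block-separable, so its proximity operator is the concatenation of the blockwise operators, and Proposition~\ref{p:propprox}\ref{p:propproxv} with $\kappa=1$ (permutation invariance, $\Pi\Pi^\top=\Id$) relates the prox in the original ordering to that in the regrouped one. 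This bookkeeping is the only real obstacle; the analytic content is carried entirely by the separability of the norm and of $D$. Reassembling the $P$ pairwise minimizers and subtracting $(u,v)$ then gives the stated formula.
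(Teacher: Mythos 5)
Your proof is correct and takes essentially the same route as the paper's: the paper's one-line proof combines the translation property (Proposition~\ref{p:propprox}\ref{p:propproxiii}) with the standard separability rule for proximity operators, citing \cite{Combettes_P_2010_inbook_proximal_smsp} for the latter, whereas you prove that separability directly from the definition of the prox as a minimizer of a coordinate-wise decoupled objective. The only superfluous element is the permutation bookkeeping via Proposition~\ref{p:propprox}\ref{p:propproxv}: the minimization decouples as soon as the summands depend on pairwise disjoint sets of coordinates, whether or not those coordinates are contiguous, so no reindexing is needed.
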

\iflong
\begin{proof}
	The result is a straightforward consequence of \cite[Table 10.1ix]{Combettes_P_2010_inbook_proximal_smsp} and
	Proposition~\ref{p:propprox}\ref{p:propproxiii}, by setting $f = D$
	and $z = (u,v)$.
\end{proof}
\fi

Note that, although an extensive list of proximity operators of
one-variable real functions can be found in \cite{Combettes_P_2010_inbook_proximal_smsp}, few results are
available for real functions of two variables \cite{raey,Benamou2000,Combettes_PL_2008_j-ip_proximal_apdmfscvip,Chierchia_2012_Epigraphical_Projection}. An example of such a result
is provided below.
\begin{proposition}\label{p:proxdifsimp}
	Let $\varphi \in \Gamma_0(\RR)$ be an even differentiable function on $\RR\setminus \{0\}$.
	Let $\Phi\colon \RR^2 \to \RX$ be defined as: $(\forall (\nu,\xi)\in \RR^2)$
	\begin{equation}
	\Phi(\nu,\xi) = 
	\left\{
	\begin{aligned}
	&\varphi(\nu-\xi) &&\textrm{if $(\nu,\xi) \in \RP^2$}\\
	&+\infty &&\textrm{otherwise}.
	\end{aligned}
	\right.
	\end{equation}
	Then, for every $(\overline{\nu},\overline{\xi})\in \RR^2$,
	\begin{gather}
	\!\!\!\!\!\!
	\prox_{\Phi}(\overline{\nu},\overline{\xi}) = \iflong\else\nonumber\\\fi
	\left\{
	\begin{aligned}
	&\frac12\big(\overline{\nu}+\overline{\xi}+\pi_1,\overline{\nu}+\overline{\xi}-\pi_1\big) && \mbox{if $|\pi_1| < \overline{\nu}+\overline{\xi}$}\\
	&(0,\pi_2) && \mbox{if $\pi_2 > 0$ and $\pi_2 \ge \overline{\nu}+\overline{\xi}$}\\
	&(\pi_3,0) && \mbox{if $\pi_3 > 0$ and $\pi_3 \ge \overline{\nu}+\overline{\xi}$}\\
	&(0,0) && \mbox{otherwise,}
	\end{aligned}
	\right.
	\end{gather}
	with $\pi_1 = \prox_{2\varphi}(\overline{\nu}-\overline{\xi})$, $\pi_2=\prox_{\varphi}(\overline{\xi})$ and $\pi_3=\prox_{\varphi}(\overline{\nu})$.
\end{proposition}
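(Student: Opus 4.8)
The plan is to compute $\prox_\Phi$ directly from its definition, as the minimizer of $\Psi(\nu,\xi)=\varphi(\nu-\xi)+\tfrac12(\nu-\overline{\nu})^2+\tfrac12(\xi-\overline{\xi})^2$ over the closed quadrant $\RP^2$ (outside which $\Phi=\pinf$). Since $\Psi$ is the sum of a convex term and a strongly convex quadratic, it has a unique minimizer, and by Proposition~\ref{p:propprox}\ref{p:propproxii} a feasible pair $(\nu,\xi)$ is that minimizer if and only if $(\overline{\nu}-\nu,\overline{\xi}-\xi)\in\partial\Phi(\nu,\xi)$. Writing $\Phi=\varphi\circ L+\iota_{\RP^2}$ with $L(\nu,\xi)=\nu-\xi$, the subdifferential sum rule gives $\partial\Phi(\nu,\xi)=\{(g,-g)\mid g\in\partial\varphi(\nu-\xi)\}+N_{\RP^2}(\nu,\xi)$, where $N_{\RP^2}$ is the normal cone of the quadrant. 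I would also exploit that $\Phi$ is symmetric, since $\Phi(\nu,\xi)=\varphi(\nu-\xi)=\varphi(\xi-\nu)=\Phi(\xi,\nu)$ by evenness, so that Case~3 follows from Case~2 by exchanging the two coordinates.

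The quadrant $\RP^2$ splits into four relatively open faces --- its interior, the two open edges $\{\nu=0,\ \xi>0\}$ and $\{\nu>0,\ \xi=0\}$, and the vertex $(0,0)$ --- and the unique minimizer lies in exactly one of them, each face giving one branch. First I would treat the interior, where $N_{\RP^2}=\{0\}$ and the inclusion reads $(\overline{\nu}-\nu,\overline{\xi}-\xi)=(g,-g)$ with $g\in\partial\varphi(\nu-\xi)$. Summing and subtracting the two scalar equations shows $\nu+\xi=\overline{\nu}+\overline{\xi}$ and that $t:=\nu-\xi$ satisfies $\tfrac12\big((\overline{\nu}-\overline{\xi})-t\big)\in\partial\varphi(t)$, i.e. $(\overline{\nu}-\overline{\xi})-t\in\partial(2\varphi)(t)$, which by Proposition~\ref{p:propprox}\ref{p:propproxii} means $t=\prox_{2\varphi}(\overline{\nu}-\overline{\xi})=\pi_1$. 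This yields the Case~1 formula, and positivity of both coordinates is precisely $|\pi_1|<\overline{\nu}+\overline{\xi}$.

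Next I would treat the edge $\{\nu=0,\ \xi>0\}$. Restricting $\Psi$ to $\nu=0$ and using that $\varphi$ is even turns the one-dimensional problem into minimizing $\varphi(\xi)+\tfrac12(\xi-\overline{\xi})^2$, whose relative-interior solution is $\xi=\prox_\varphi(\overline{\xi})=\pi_2$ with $\pi_2>0$. The normal-cone term along the active constraint $\nu\ge0$ is $(-\lambda,0)$ with $\lambda\ge0$, and nonnegativity of $\lambda$ is exactly $\pi_2\ge\overline{\nu}+\overline{\xi}$. Equivalently, one verifies the inclusion by hand: with $g=\pi_2-\overline{\xi}$ one has $g\in\partial\varphi(-\pi_2)=-\partial\varphi(\pi_2)$ by evenness and the defining relation of $\pi_2$, while the first coordinate forces $\lambda=\pi_2-\overline{\xi}-\overline{\nu}\ge0$. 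This gives Case~2, and Case~3 follows by the symmetry noted above with $\pi_3=\prox_\varphi(\overline{\nu})$.

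It remains to pin down the vertex branch and to confirm exhaustiveness, which I expect to be the main obstacle. The clean argument is by faces: the minimizer occupies exactly one relatively open face, and the \emph{necessity} directions above show that occupancy of the interior or of either edge forces Conditions~(1), (2), or (3) respectively; hence, if all three fail, only the vertex remains and $\prox_\Phi(\overline{\nu},\overline{\xi})=(0,0)$. Making this watertight needs the monotonicity facts $\prox_\varphi(\overline{\xi})>0\Leftrightarrow\overline{\xi}>\beta$ and $\prox_{2\varphi}(\overline{\nu}-\overline{\xi})>0\Leftrightarrow\overline{\nu}-\overline{\xi}>2\beta$, where $\beta=\varphi'(0^+)\ge0$ and $\partial\varphi(0)=[-\beta,\beta]$ (consequences of evenness and convexity, which place the minimum of $\varphi$ at $0$), together with a direct check that at the vertex the inclusion $(\overline{\nu},\overline{\xi})\in\{(g,-g)\mid g\in[-\beta,\beta]\}+(\RM\times\RM)$ is solvable exactly when none of Conditions~(1)--(3) holds. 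The delicate part is thus not the individual case computations --- these reduce to the $\prox_\varphi$ and $\prox_{2\varphi}$ characterizations and to evenness --- but verifying that the four scalar conditions genuinely partition $\RR^2$, agreeing on the shared boundaries (where uniqueness of the minimizer guarantees consistency), and handling the nondifferentiability of $\varphi$ at $0$ in the vertex branch and in the degenerate situation $\pi_1=0$.
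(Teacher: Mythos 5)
Your proposal is correct and takes essentially the same route as the paper: both characterize $\prox_\Phi$ through the subdifferential inclusion of Proposition~\ref{p:propprox}\ref{p:propproxii}, treat the faces of the quadrant separately, use evenness of $\varphi$ on the two edges, and express the interior case through $\prox_{2\varphi}(\overline{\nu}-\overline{\xi})$ --- the paper obtains this last formula by invoking the semi-orthogonal composition rule of Proposition~\ref{p:propprox}\ref{p:propproxv}, which you simply re-derive by hand by summing and subtracting the two scalar optimality equations. Your explicit elimination argument for the vertex branch is sound as stated, and the extra monotonicity facts you flag as potentially needed are in fact unnecessary, since the two-way equivalences you establish for the interior and edge cases already yield mutual exclusivity of the conditions and force the remaining case to be $(0,0)$; the paper leaves this final step implicit.
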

\iflong
\begin{proof} See Appendix \ref{a:proxdifsimp}
\end{proof}
\fi

The above proposition provides a simple characterization of the proximity operators of some distances defined for nonnegative-valued vectors.
However, the assumptions made in Proposition \ref{p:proxdifsimp} are not satisfied by the class of functions $\Phi$ considered in Section \ref{se:defdiv}.\footnote{Indeed, none of the considered $\varphi$-divergences can be expressed as a function of the difference between the two arguments.} In the next section, we will propose two algorithms for solving a general class of convex problems involving these functions $\Phi$.

\subsection{Proximal splitting algorithms}\label{sec:proximal_methods}
As soon as we know how to calculate the proximity operators of the
functions involved in Problem~\ref{p:gen_div}, various proximal methods
can be employed to solve it numerically. Two examples of such methods
are given subsequently.

The first algorithm is PPXA+ \cite{Pesquet_J_2012_j-pjpjoo_par_ipo} which constitutes an extension of
PPXA (Parallel ProXimal Agorithm) proposed in \cite{Combettes_PL_2008_j-ip_proximal_apdmfscvip}.  
As it can be seen in \cite{Afonso_M_2009_j-tip_augmented_lacofiip,Setzer_S_2009_j-jvcir_deblurring_pibsbt}, PPXA+ is an augmented Lagrangian-like methods (see also \cite[Sec.\ 6]{Pesquet_J_2012_j-pjpjoo_par_ipo}).

\begin{minipage}{0.9\linewidth}
	\begin{algorithm}[H]
		\caption{PPXA+}
		\label{e:PPXA+}
		\small
		\vspace{0.5em}
		\textsc{Initialization}\\[-0.5em]
		\[
		\left\lfloor
		\begin{aligned}
		& \textrm{$(\omega_0,\ldots,\omega_S) \in \RPP^{S+1},$}\\
		& \textrm{$(t_{0,0}, t_{1,0})\in \RR^P\times\RR^P, t_{2,0}\in\RR^{K_1},\ldots,t_{S+1,0} \in \RR^{K_S}$}\\
		& \textrm{$\textsf{Q} = \Big(\omega_0 A^\top A + \omega_0 B^\top B + \sum_{s=1}^S \omega_s T_s^\top T_s\Big)^{-1}$}\\
		& \textrm{$x_0 = \textsf{Q}\Big(\omega_0 A^\top t_{0,0}+ \omega_0 B^\top t_{1,0} + \sum_{s=1}^S \omega_s T_s^\top t_{s+1,0}\Big)$.}\\
		\end{aligned}
		\right.
		\]
		\noindent \textsc{For}\; $n = 0, 1, \dots$\\[-0.5em]
		\[
		\left\lfloor
		\begin{aligned}
		& \textrm{$(r_{0,n},r_{1,n}) = \prox_{\omega_0^{-1} D(\cdot+u,\cdot+v)}(t_{0,n},t_{1,n})+e_{0,n}$}\\	
		& \textrm{For \; $s = 0, 1, \dots S$}\\
		& \; \left\lfloor\begin{aligned}
		& \textrm{$r_{s+1,n}=\prox_{\omega_s^{-1} R_s} (t_{s+1,n})+e_{s,n}$}
		\end{aligned}\right. \\
		&\textrm{ $y_n =\textsf{Q}\Big(\omega_0 A^\top r_{0,n}+ \omega_0 B^\top r_{1,n} + \sum_{s=1}^S \omega_s T_s^\top r_{s+1,n}\Big)$}\\
		&\textrm{ $\lambda_n\in\left]0,2\right[$}\\
		&\textrm{ $t_{0,n+1}=t_{0,n}+\lambda_n\Big(A(2y_n -x_n)-r_{0,n}\Big)$}\\
		&\textrm{ $t_{1,n+1}=t_{1,n}+\lambda_n\Big(B(2y_n -x_n)-r_{1,n}\Big)$}\\	
		& \textrm{For \; $s = 0, 1, \dots S$}\\
		& \; \left\lfloor\begin{aligned}
		& \textrm{$t_{s+1,n+1}=t_{s+1,n}+\lambda_n\Big(T_s(2y_n -x_n)-r_{s+1,n}\Big)$}
		\end{aligned}\right. \\
		&\textrm{$x_{n+1}=x_n+\lambda_n(y_{n}-x_n).$ }				
		\end{aligned}
		\right.
		\]
	\end{algorithm}
	\vspace{0.1cm}
\end{minipage}

In this algorithm, $\omega_0,\ldots,\omega_S$ are weighting factors and $(\lambda_n)_{n\ge 0}$ are relaxation factors. For every $n\ge 0$, the variables $e_{0,n}\in\RR^P\times\RR^P$, $e_{1,n}\in\RR^{K_1}$, \dots, $e_{S,n}\in \RR^{K_S}$ model possible errors in the computation of the proximity operators. For instance, these errors arise when the proximity operator is not available in a closed form, and one needs to compute it through inner iterations. Under some technical conditions, the convergence of PPXA+ is guaranteed.

\begin{proposition}{\rm \cite[Corollary 5.3]{Pesquet_J_2012_j-pjpjoo_par_ipo}}
	Suppose that the following assumptions hold.
	\begin{enumerate}
		\item The matrix $\displaystyle A^\top A + B^\top B + \sum_{s=1}^S T_s^\top T_s$ is invertible.
		\item
		\label{p:3ii} There exists $\check{x} \in \RR^N$ such that
		\begin{equation}\label{eq:cond}
		\begin{cases}
		A \check{x}+u \in\RPP^P\\
		B\check{x}+v \in \RPP^P\\
		(\forall s \in \{1,\ldots,S\})\quad T_s\check{x} \in \reli(\dom R_s).
		\end{cases}
		\end{equation}
		\item
		\label{p:3iii}
		There exists
		$\underline{\lambda} \in ]0,2[$ such that, for every $n \in \NN$,
		\begin{equation}
		\underline{\lambda} \le \lambda_{n+1} \le \lambda_n < 2.
		\end{equation}
		\item
		\label{p:3iv}
		For every $s\in\{0,\ldots,S\}$, 
		\begin{equation}
		\sum_{n\in\NN}\|e_{s,n}\|<\pinf.
		\end{equation}
	\end{enumerate}
	If the set of solutions to Problem \ref{p:gen_div}
	is nonempty, then any sequence $(x_n)_{n\in \NN}$ generated by
	Algorithm~\ref{e:PPXA+} converges to an element of this set.
\end{proposition}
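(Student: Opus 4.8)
The plan is to recognize that Problem~\ref{p:gen_div} is a particular instance of the structured minimization problem for which PPXA+ was designed in \cite{Pesquet_J_2012_j-pjpjoo_par_ipo}, and then to verify that Assumptions (i)--(iv) above are exactly the translation of the hypotheses of \cite[Corollary 5.3]{Pesquet_J_2012_j-pjpjoo_par_ipo} to the present setting. Concretely, I would introduce the stacked linear operator $M\colon \RR^N \to \RR^P\times\RR^P\colon x\mapsto (Ax,Bx)$ and rewrite the objective as $h_0(Mx)+\sum_{s=1}^S R_s(T_s x)$, where $h_0 = D(\cdot+u,\cdot+v) \in \Gamma_0(\RR^P\times\RR^P)$. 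With this identification, the functions $h_0,R_1,\dots,R_S$ play the role of the proximable terms and $M,T_1,\dots,T_S$ the role of the associated linear operators in the abstract formulation, and one checks directly that $M^\top M = A^\top A + B^\top B$, so that the preconditioning matrix $\textsf{Q}$ appearing in the initialization of Algorithm~\ref{e:PPXA+} coincides with the one prescribed by the general scheme.

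Next I would check, one by one, that the four numbered assumptions of the statement are precisely the instances of the corresponding hypotheses of \cite[Corollary 5.3]{Pesquet_J_2012_j-pjpjoo_par_ipo}. Assumption (i) is the invertibility of $M^\top M + \sum_{s=1}^S T_s^\top T_s$, which guarantees that $\textsf{Q}$ is well defined; Assumption (iii) is the admissibility condition on the relaxation parameters $(\lambda_n)_{n\in\NN}$; and Assumption (iv) is the standard summability requirement on the error sequences, ensuring that the inexact proximal steps do not destroy convergence. Since each of $h_0,R_1,\dots,R_S$ lies in $\Gamma_0$, the proximity operators invoked in the iteration are well defined, the one for $h_0$ reducing to a separable computation through Proposition~\ref{p:proxD}. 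The only point requiring genuine work is the qualification condition, Assumption (ii).

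The heart of the argument is to show that the abstract domain qualification of \cite[Corollary 5.3]{Pesquet_J_2012_j-pjpjoo_par_ipo} --- namely the existence of $\check x$ with $M\check x \in \reli\dom h_0$ and $T_s\check x \in \reli(\dom R_s)$ for every $s$ --- becomes exactly system~\eqref{eq:cond}. For this I would compute $\reli\dom h_0$. Since $\dom h_0 = \menge{(p,q)}{(\forall i)\;(p^{(i)}+u^{(i)},q^{(i)}+v^{(i)})\in\dom\Phi}$ and $D$ is separable, it suffices to determine $\reli\dom\Phi$. From the definition~\eqref{e:perspf}, the set $\dom\Phi$ has nonempty interior equal to $\RPP\times\RPP$, the pieces with $\xi=0$ or $\upsilon=\xi=0$ being lower-dimensional boundary pieces, whence $\reli\dom\Phi = \inte\dom\Phi = \RPP\times\RPP$. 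Using that the relative interior commutes with finite Cartesian products, the requirement $M\check x\in\reli\dom h_0$ becomes $A\check x + u \in \RPP^P$ and $B\check x + v\in\RPP^P$, which are the first two lines of~\eqref{eq:cond}, while the third line reproduces $T_s\check x\in\reli(\dom R_s)$ verbatim.

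Finally, with the problem cast in the required form and all hypotheses verified, I would invoke \cite[Corollary 5.3]{Pesquet_J_2012_j-pjpjoo_par_ipo}: provided the solution set of Problem~\ref{p:gen_div} is nonempty, every sequence $(x_n)_{n\in\NN}$ produced by Algorithm~\ref{e:PPXA+} converges to a minimizer. I expect the main obstacle to be the careful handling of the relative interior of the perspective-function domain --- in particular, justifying that the degenerate boundary pieces of $\dom\Phi$ in~\eqref{e:perspf} do not contribute to its relative interior --- together with checking that the separable product structure of $D$ indeed allows one to pass from $\reli\dom\Phi$ to $\reli\dom h_0$ componentwise.
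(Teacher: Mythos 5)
Your proposal is correct and matches what the paper does: the paper gives no independent proof, stating the proposition as a direct instantiation of \cite[Corollary 5.3]{Pesquet_J_2012_j-pjpjoo_par_ipo}, which is exactly the reduction you carry out. Your explicit verification of the qualification condition --- that $\reli\dom\Phi=\inte\dom\Phi=\RPP\times\RPP$ (since $\varphi$ is finite on $\RPP$, so $\RPP^2\subseteq\dom\Phi\subseteq\RP^2$), combined with translation and the product rule for relative interiors to recover \eqref{eq:cond} --- is the correct justification of the step the paper leaves implicit.
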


It can be noticed that, at each iteration $n$, PPXA+ requires to solve
a linear system in order to compute the intermediate variable $y_n$.
The computational cost of this operation may be high when $N$ is
large. Proximal primal-dual approaches \cite{Chen_G_1994_j-mp_pro_bdm, Esser_E_2010_j-siam-is_gen_fcf, Chambolle_A_2010_first_opdacpai, Briceno_L_2011_j-siam-opt_mon_ssm, Combettes_P_2011_j-svva_pri_dsa, Vu_B_2011_j-acm_spl_adm, Condat_L_2012, Komodakis2015} allow us to circumvent
this difficulty. An example of such an approach is the 
Monotone+Lipschitz Forward Backward Forward (M+LFBF) method \cite{Combettes_P_2011_j-svva_pri_dsa} which
takes the following form.

\begin{minipage}{0.9\linewidth}
	\begin{algorithm}[H]
		\caption{M+LFBF}
		\label{e:FBF}
		\small
		\vspace{0.5em}
		\textsc{Initialization}\\[-0.5em]
		\[
		\left\lfloor
		\begin{aligned}
		& \textrm{$(t_{0,0}, t_{1,0})\in \RR^P\times\RR^P, t_{2,0}\in\RR^{K_1},\ldots,t_{S+1,0} \in \RR^{K_S}$}\\
		& \textrm{$x_0\in \RR^N, \; \beta = \Big(\|A\|^2+\|B\|^2+\sum_{s=1}^S \|T_s\|^2\Big)^{1/2},\;$}\\
		& \textrm{$\varepsilon\in\left]0,1/(\beta+1)\right[$}.\\
		\end{aligned}
		\right.
		\]
		\noindent \textsc{For}\; $n = 0, 1, \dots$\\[-0.5em]
		\[
		\left\lfloor
		\begin{aligned}
		& \textrm{$\gamma_n \in [\varepsilon,(1-\varepsilon)/\beta]$}\\
		& \textrm{$\widehat{x}_n=x_n-\gamma_n(A^\top t_{0,n}+ B^\top t_{1,n} + \sum_{s=1}^S T_s^\top t_{s+1,n})$}\\
		& \textrm{$\big(\widehat{t}_{0,n},\widehat{t}_{1,n}\big)  = \big( t_{0,n}, t_{1,n} \big) +\gamma_n \big(A x_n, B x_n\big)$} \\
		& \textrm{$(r_{0,n},r_{1,n}) = (\widehat{t}_{0,n},\widehat{t}_{1,n})-$}\\
		& \textrm{$\qquad \qquad\gamma_n \prox_{\gamma_n^{-1} D(\cdot+u,\cdot+v)}(\gamma_n^{-1}\widehat{t}_{0,n},\gamma_n^{-1}\widehat{t}_{1,n})+e_{0,n}$}\\
		& \textrm{$\big(\widetilde{t}_{0,n}, \widetilde{t}_{1,n}\big)=\big(r_{0,n},r_{1,n}\big) +\gamma_n \big(A \widehat{x}_n, B \widehat{x}_n\big)$}\\ 
		& \textrm{$\big(t_{0,n+1},t_{1,n+1}\big) = \big(t_{0,n},t_{1,n}\big)-\big(\widehat{t}_{0,n},\widehat{t}_{1,n}\big)+\big(\widetilde{t}_{0,n},\widetilde{t}_{1,n}\big)$}\\
		& \textrm{For \; $s = 0, 1, \dots S$}\\
		& \; \left\lfloor\begin{aligned}
		& \textrm{$\widehat{t}_{s+1,n} = t_{s+1,n}+\gamma_n T_s x_n$}\\
		& \textrm{$r_{s+1,n}= \widehat{t}_{s+1,n}-\gamma_n \prox_{\gamma_n^{-1} R_s}(\gamma_n^{-1}\widehat{t}_{s+1,n})+e_{s,n}$}\\
		& \textrm{$\widetilde{t}_{s+1,n}=r_{s+1,n}+\gamma_n T_s \widehat{x}_n$}\\
		& \textrm{$t_{s+1,n+1}=t_{s+1,n}-\widehat{t}_{s+1,n}+\widetilde{t}_{s+1,n}$}\\
		\end{aligned}\right. \\
		& \textrm{$\widetilde{x}_{n}=\widehat{x}_{n}-\gamma_n (A^\top r_{0,n}+ B^\top r_{1,n} + \sum_{s=1}^S T_s^\top r_{s+1,n})$}\\
		& \textrm{$x_{n+1}=x_n-\widehat{x}_n+\widetilde{x}_n.$}\\
		\end{aligned}
		\right.
		\]
	\end{algorithm}
	\vspace{0.1cm}
\end{minipage}

In this algorithm, $(\gamma_n)_{n\ge 0}$ is a sequence of step-sizes, and $e_{0,n}\in\RR^P\times\RR^P$, $e_{1,n}\in\RR^{K_1}$, \dots, $e_{S,n}\in \RR^{K_S}$ correspond to possible errors in the computation of proximity operators. The convergence is secured by the following result.

\begin{proposition}{\rm \cite[Theorem 4.2]{Combettes_P_2011_j-svva_pri_dsa}}
	Suppose that the following assumptions hold.
	\begin{enumerate}
		\item There exists $\check{x} \in \RR^N$ such that \eqref{eq:cond} holds.
		\item
		$(\forall s\in\{0,\ldots,S\})$
		$\sum_{n\in\NN}\|e_{s,n}\|<\pinf$.
	\end{enumerate}
	If the set of solutions to Problem \ref{p:gen_div}
	is nonempty, then any sequence $(x_n)_{n\in \NN}$ generated by 
	Algorithm~\ref{e:FBF} converges to an element of this set.
\end{proposition}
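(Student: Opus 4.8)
The plan is to identify Problem~\ref{p:gen_div} together with Algorithm~\ref{e:FBF} as a specialization of the primal--dual inclusion framework of \cite{Combettes_P_2011_j-svva_pri_dsa}, and then to verify that the two assumptions stated here are exactly the hypotheses required by \cite[Theorem~4.2]{Combettes_P_2011_j-svva_pri_dsa}. First I would rewrite the objective of \eqref{prob:probgen} as a sum of composite terms with no additional smooth (cocoercive) component. Setting $g_0 = D(\cdot+u,\cdot+v) \in \Gamma_0(\RR^P\times\RR^P)$, $L_0\colon x \mapsto (Ax,Bx)$, and, for every $s\in\{1,\ldots,S\}$, $g_s = R_s$ and $L_s = T_s$, the problem reads
\begin{equation}
\minimize{x\in\RR^N}{\sum_{s=0}^S g_s(L_s x)},
\end{equation}
which is precisely the primal problem solved by the M+LFBF method, with the cocoercive term taken to be zero; the single-valued monotone-Lipschitz part handled by the forward--backward--forward passes is then supplied by the skew-symmetric coupling of the primal--dual reformulation.

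Second, I would check that Algorithm~\ref{e:FBF} is the corresponding instance of that method. The variables $(t_{0,n},t_{1,n})$ and $(t_{s+1,n})_{1\le s\le S}$ play the role of the dual variables associated with $L_0$ and with the $L_s$, while $x_n$ is the primal variable, and the three passes per iteration reproduce the forward--backward--forward structure. The only point to confirm is that the update of $(r_{0,n},r_{1,n})$, which invokes $\prox_{\gamma_n^{-1} D(\cdot+u,\cdot+v)}$, realizes the resolvent $\prox_{\gamma_n g_0^*}$ of the conjugate. This follows from Proposition~\ref{p:propprox}\ref{p:propproxvi} applied with $f = g_0$ and $\gamma=\gamma_n$, since $\prox_{\gamma_n g_0^*}(\widehat{t}) = \widehat{t} - \gamma_n\prox_{g_0/\gamma_n}(\widehat{t}/\gamma_n)$; the analogous identity handles the dual updates of the $R_s$. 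The step-size range $\gamma_n\in[\varepsilon,(1-\varepsilon)/\beta]$ prescribed in the Initialization, with $\beta$ the norm bound on the stacked operator, matches the admissible range in \cite[Theorem~4.2]{Combettes_P_2011_j-svva_pri_dsa}.

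Third, I would translate Assumption~(i) into the qualification condition of the cited theorem. From the perspective-function definition \eqref{e:perspf} one computes $\inte(\dom D) = \RPP^P\times\RPP^P$, the lower-dimensional boundary pieces on the axis $\xi=0$ contributing nothing. Hence a point $\check{x}$ satisfying \eqref{eq:cond} yields $(A\check{x}+u,B\check{x}+v)\in\inte(\dom D)$ and $T_s\check{x}\in\reli(\dom R_s)$ for every $s$, which is exactly the strong-relative-interior (range) condition ensuring that duality holds with zero gap, so that the primal part of any solution to the associated monotone inclusion solves \eqref{prob:probgen}. Assumption~(ii) is verbatim the summable-error requirement of \cite[Theorem~4.2]{Combettes_P_2011_j-svva_pri_dsa}. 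Granting that the solution set is nonempty, the cited theorem then guarantees that $(x_n)_{n\in\NN}$ converges to a primal solution, i.e.\ to an element of the solution set of Problem~\ref{p:gen_div}.

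The main obstacle is the third step: one must read off $\inte(\dom D)$ correctly from the several cases in \eqref{e:perspf} and confirm that \eqref{eq:cond} delivers precisely the interior/relative-interior condition under which \cite[Theorem~4.2]{Combettes_P_2011_j-svva_pri_dsa} applies. Once the qualification condition is matched, identifying the algorithm and invoking the cited convergence result is routine.
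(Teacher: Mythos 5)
Your proposal is correct and follows exactly the route the paper intends: the paper offers no proof of this proposition beyond citing \cite[Theorem 4.2]{Combettes_P_2011_j-svva_pri_dsa}, and your verification---casting Problem~\ref{p:gen_div} as $\minim_{x}\sum_{s=0}^S g_s(L_s x)$ with no smooth term, recognizing the dual updates as resolvents of conjugates via Proposition~\ref{p:propprox}\ref{p:propproxvi}, and matching \eqref{eq:cond} to the qualification condition using $\inte(\dom D)=\RPP^P\times\RPP^P$---is precisely the instantiation that the citation presupposes. No gaps; the details you flag (the interior of $\dom D$, the step-size bound $\beta$, and the summable-error hypothesis) are all handled correctly.
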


It is worth highlighting that these two algorithms share two interesting features: many operations can be implemented in parallel (e.g., the loops on $s$), there is a tolerance to errors in the computation of the proximity operators. Recently, random block-coordinate versions of proximal algorithms have been proposed (see \cite{Pesquet2015_Random} and references therein) further improving the flexibility of these methods.

\section{Main result}
\label{s:mainresult}
As shown by Proposition \ref{p:proxD},
we need to compute the proximity operator of a scaled version of
a function $\Phi \in \Gamma_0(\RR^2)$ as defined in \eqref{e:perspf}.
In the following, $\Theta$ denotes a primitive on $\RPP$
of the function $\zeta \mapsto \zeta \varphi'(\zeta^{-1})$.
The following functions will subsequently play an important role: 
\begin{align}
&\vartheta_{-}\colon \RPP \to \RR\colon \zeta
\mapsto\varphi'(\zeta^{-1})\label{e:defvarthetam}\\
&\vartheta_{+}\colon \RPP \to \RR\colon \zeta \mapsto 
\varphi(\zeta^{-1}) - \zeta^{-1}\varphi'(\zeta^{-1}).
\label{e:defvarthetap}
\end{align}
A first technical result is as follows.
\begin{lemma}\label{le:psi}
	Let $\gamma \in \RPP$, let $(\overline{\upsilon},\overline{\xi})\in
	\RR^2$,  and define
	\begin{align}
	&\chi_{-} = \inf\menge{\zeta \in \RPP}{\vartheta_{-}(\zeta) < \gamma^{-1}
		\overline{\upsilon}} \label{e:defchim}\\
	&\chi_{+} = \sup
	\menge{\zeta\in \RPP}{\vartheta_{+}(\zeta) < \gamma^{-1} \overline{\xi}}\label{e:defchip}
	\end{align}
	(with the usual convention $\inf \emp = \pinf$ and $\sup \emp = -\infty$). If $\chi_{-} \neq \pinf$, the function
	\begin{align}
	\psi\colon& \RPP \to \RR\colon\nonumber\\
	&\zeta \mapsto
	\zeta \varphi(\zeta^{-1})-\Theta(\zeta) +
	\frac{\gamma^{-1}\overline{\upsilon}}{2}\zeta^2 - \gamma^{-1}
	\overline{\xi} \zeta
	\end{align}
	is strictly convex on $]\chi_{-},\pinf[$.
	In addition, if 
	\begin{enumerate}
		\item\label{as:lepsi0} $\Big.\chi_{-} \neq \pinf$ and $\chi_{+} \neq - \infty$
		\item\label{as:lepsi1} $\Big.\lim_{\substack{\zeta \to \chi_{-}\\\zeta > \chi_{-}}} \psi'(\zeta) < 0$
		\item\label{as:lepsi2} $\Big.\lim_{\zeta \to \chi_{+}} \psi'(\zeta) > 0$
	\end{enumerate}
	then $\psi$ admits a unique minimizer $\widehat{\zeta}$ on $]\chi_{-},\pinf[$, and 
	$\widehat{\zeta} \!<\! \chi_{+}$. 
\end{lemma}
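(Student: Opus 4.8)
The plan is to reduce the entire statement to a study of the first two derivatives of $\psi$ on $\RPP$. First I would differentiate, using the identities $\frac{d}{d\zeta}\big(\zeta\varphi(\zeta^{-1})\big)=\varphi(\zeta^{-1})-\zeta^{-1}\varphi'(\zeta^{-1})=\vartheta_{+}(\zeta)$ and $\Theta'(\zeta)=\zeta\varphi'(\zeta^{-1})=\zeta\,\vartheta_{-}(\zeta)$, the latter being just the definition of $\Theta$. This gives the convenient grouping
\[
\psi'(\zeta)=\big(\vartheta_{+}(\zeta)-\gamma^{-1}\overline{\xi}\big)+\zeta\big(\gamma^{-1}\overline{\upsilon}-\vartheta_{-}(\zeta)\big),
\]
and a second differentiation yields
\[
\psi''(\zeta)=(\zeta^{-3}+\zeta^{-1})\,\varphi''(\zeta^{-1})+\big(\gamma^{-1}\overline{\upsilon}-\vartheta_{-}(\zeta)\big).
\]

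For the strict convexity I would exploit the monotonicity of $\vartheta_{-}$. Since $\varphi$ is strictly convex, $\varphi'$ is strictly increasing, so $\vartheta_{-}\colon\zeta\mapsto\varphi'(\zeta^{-1})$ is continuous and strictly decreasing; hence $\menge{\zeta\in\RPP}{\vartheta_{-}(\zeta)<\gamma^{-1}\overline{\upsilon}}=\,]\chi_{-},\pinf[$, and every $\zeta>\chi_{-}$ satisfies $\gamma^{-1}\overline{\upsilon}-\vartheta_{-}(\zeta)>0$. As convexity of $\varphi$ gives $\varphi''\ge 0$, the first term of $\psi''$ is nonnegative, and therefore $\psi''(\zeta)>0$ on $]\chi_{-},\pinf[$, which is exactly strict convexity on that interval.

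For the second part, I would first record that $\vartheta_{+}$ is nondecreasing, since $\vartheta_{+}'(\zeta)=\zeta^{-3}\varphi''(\zeta^{-1})\ge 0$; with continuity this gives $\menge{\zeta\in\RPP}{\vartheta_{+}(\zeta)<\gamma^{-1}\overline{\xi}}=\,]0,\chi_{+}[$ and $\vartheta_{+}(\chi_{+})=\gamma^{-1}\overline{\xi}$ whenever $\chi_{+}<\pinf$. The core step is to bracket a sign change of $\psi'$. Strict convexity already makes $\psi'$ continuous and strictly increasing on $]\chi_{-},\pinf[$, and assumption \ref{as:lepsi1} forces it negative near $\chi_{-}$. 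I would then establish $\chi_{-}<\chi_{+}$: if instead $\chi_{+}\le\chi_{-}<\pinf$, then every $\zeta>\chi_{-}$ satisfies $\zeta>\chi_{+}$, so $\vartheta_{+}(\zeta)\ge\gamma^{-1}\overline{\xi}$, and combined with $\gamma^{-1}\overline{\upsilon}-\vartheta_{-}(\zeta)>0$ the displayed decomposition of $\psi'$ forces $\psi'(\zeta)>0$ on all of $]\chi_{-},\pinf[$, contradicting assumption \ref{as:lepsi1}. Hence $\chi_{+}\in\,]\chi_{-},\pinf]$, and assumption \ref{as:lepsi2} supplies a point of $]\chi_{-},\pinf[$ (namely $\chi_{+}$ itself when finite, or an arbitrarily large $\zeta$ when $\chi_{+}=\pinf$) at which $\psi'$ is positive. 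The intermediate value theorem then produces a unique zero $\widehat{\zeta}$ of $\psi'$ in $]\chi_{-},\chi_{+}[$, which by strict convexity is the unique minimizer of $\psi$ on $]\chi_{-},\pinf[$; in particular $\widehat{\zeta}<\chi_{+}$.

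I expect the main obstacle to be the bookkeeping around $\chi_{+}$ rather than the calculus: one must handle $\chi_{+}=\pinf$ and $\chi_{+}<\pinf$ uniformly (in the former case assumption \ref{as:lepsi2} is a statement about $\lim_{\zeta\to\pinf}\psi'(\zeta)$), and one must rule out $\chi_{+}\le\chi_{-}$ so that assumptions \ref{as:lepsi1} and \ref{as:lepsi2} genuinely sandwich a sign change inside the domain. The differentiation steps are routine once the two identities $\frac{d}{d\zeta}\big(\zeta\varphi(\zeta^{-1})\big)=\vartheta_{+}(\zeta)$ and $\Theta'(\zeta)=\zeta\,\vartheta_{-}(\zeta)$ are in hand.
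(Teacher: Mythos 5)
Your proof is correct and takes essentially the same route as the paper's: the same decomposition $\psi'(\zeta)=\big(\vartheta_{+}(\zeta)-\gamma^{-1}\overline{\xi}\big)+\zeta\big(\gamma^{-1}\overline{\upsilon}-\vartheta_{-}(\zeta)\big)$, the same monotonicity and interval characterizations of $\vartheta_{-}$ and $\vartheta_{+}$, the same positivity of $\psi''$ on $]\chi_{-},\pinf[$ for strict convexity, and the same intermediate-value-plus-strict-convexity conclusion, the only cosmetic difference being that you obtain $\chi_{-}<\chi_{+}$ by contradiction from Condition~\ref{as:lepsi1}, whereas the paper reads it off as an equivalence by evaluating the limits of $\psi'$ at $\chi_{-}$ and $\chi_{+}$. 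One small remark: you invoke strict convexity of $\varphi$ to make $\vartheta_{-}$ strictly decreasing, while the lemma's standing assumptions only give $\varphi''\ge 0$; this is harmless, since continuity together with non-strict monotonicity already yields $\menge{\zeta\in\RPP}{\vartheta_{-}(\zeta)<\gamma^{-1}\overline{\upsilon}}=\,]\chi_{-},\pinf[$, exactly as in the paper's argument.
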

\begin{proof}
	The derivative of $\psi$ is, for every $\zeta \in \RPP$,
	\begin{align}\label{e:derpsi}
	\psi'(\zeta) &= \varphi(\zeta^{-1})-(\zeta+\zeta^{-1})
	\varphi'(\zeta^{-1}) + \gamma^{-1} \overline{\upsilon} \zeta -
	\gamma^{-1} \overline{\xi}\nonumber\\
	& = \zeta\big(\gamma^{-1} \overline{\upsilon} - \vartheta_{-}(\zeta)\big)
	+ \vartheta_{+}(\zeta)-\gamma^{-1} \overline{\xi}.
	\end{align}
	The function $\vartheta_{-}$ is decreasing as the convexity of $\varphi$ yields
	\begin{equation}
	(\forall \zeta \in \RPP)\qquad
	\vartheta'_{-}(\zeta) = -\zeta^{-2} \varphi''(\zeta^{-1}) \le 0.
	\end{equation}
	This allows us to deduce that
	\begin{align}\label{e:interchim}
	&\text{if}\;\menge{\zeta \in \RPP}{\vartheta_{-}(\zeta) < \gamma^{-1} \overline{\upsilon}}
	\neq \emp,\nonumber\\
	&\text{then}\;]\chi_{-},\pinf[ =
	\menge{\zeta \in \RPP}{\vartheta_{-}(\zeta) < \gamma^{-1} \overline{\upsilon}}.
	\end{align}
	Similarly, the function $\vartheta_{+}$ is
	increasing as the convexity of $\varphi$ yields
	\begin{equation}
	(\forall \zeta \in \RPP)\qquad
	\vartheta'_{+}(\zeta) = \zeta^{-3} \varphi''(\zeta^{-1}) \ge 0
	\end{equation}
	which allows us to deduce that
	\begin{align} \label{e:interchip}
	&\text{if}\;\menge{\zeta \in
		\RPP}{\vartheta_{+}(\zeta) < \gamma^{-1} \overline{\xi}}\neq \emp,\nonumber\\
	&\text{then}\;]0,\chi_{+}[= \menge{\zeta \in
		\RPP}{\vartheta_{+}(\zeta) < \gamma^{-1} \overline{\xi}}.
	\end{align}
	If $(\chi_{-},\chi_{+})\in \RPP^2$, then \eqref{e:derpsi} leads to
	\begin{align}
	&\psi'(\chi_{-})
	= \vartheta_{+}(\chi_{-})-
	\gamma^{-1} \overline{\xi}\label{e:psichimneg}\\
	& \psi'(\chi_{+}) =
	\chi_{+}\left(\gamma^{-1} \overline{\upsilon} - 
	\vartheta_{-}(\chi_{+})\right).
	\end{align}
	So, Conditions \ref{as:lepsi1} and \ref{as:lepsi2} are equivalent to
	\begin{align}
	&\vartheta_{+}(\chi_{-})-
	\gamma^{-1} \overline{\xi}<0\\
	& \chi_{+}\left(\gamma^{-1} \overline{\upsilon} - 
	\vartheta_{-}(\chi_{+})\right)> 0.
	\end{align}
	In view of \eqref{e:interchim} and \eqref{e:interchip}, these
	inequalities are satisfied if and only if $\chi_{-} < \chi_{+}$.
	This inequality is also obviously satisfied if $\chi_{-} = 0$ or $\chi_{+} = \pinf$.
	In addition, we have: \iflong\else$(\forall \zeta \in \RPP)$\fi
	\begin{equation}
	\iflong(\forall \zeta \in \RPP)\qquad\fi \psi''(\zeta) = \gamma^{-1} \overline{\upsilon}-\vartheta_{-}(\zeta) + \zeta^{-1}(1+\zeta^{-2}) \varphi''(\zeta^{-1}).
	\end{equation}
	When $\zeta > \chi_{-}\neq \pinf$, $\gamma^{-1} \overline{\upsilon}-\vartheta_{-}(\zeta)> 0$, and the convexity of $\varphi$ yields
	$\psi''(\zeta) > 0$. This shows that $\psi$ is strictly convex on
	$]\chi_{-},\pinf[$. 
	
	If Conditions \ref{as:lepsi0}-\ref{as:lepsi2}
	are satisfied, due to the continuity of $\psi'$, there exists 
	$\widehat{\zeta}\in ]\chi_{-},\chi_{+}[$ such that $\psi'(\widehat{\zeta}) = 0$.
	Because of the
	strict convexity of $\psi$ on $]\chi_{-},\pinf[$,
	$\widehat{\zeta}$ is the unique minimizer
	of $\psi$ on this interval.
\end{proof}

The required assumptions in the previous lemma can often be simplified as stated below.

\begin{lemma}\label{le:psibis}
	Let $\gamma \in \RPP$ and $(\overline{\upsilon},\overline{\xi})\in
	\RR^2$.
	If $(\chi_{-},\chi_{+})\in \RPP^2$, then 
	Conditions \ref{as:lepsi1} and \ref{as:lepsi2}
	in Lemma~\ref{le:psi}
	are equivalent to: $\chi_{-} < \chi_{+}$.
	If $\chi_{-}\in \RPP$ and $\chi_{+} = \pinf$ (resp. $\chi_{-} = 0$ and
	$\chi_{+} \in \RPP$), Conditions \ref{as:lepsi1}-\ref{as:lepsi2} 
	are satisfied if and only if $\lim_{\zeta\to \pinf} \psi'(\zeta) > 0$
	(resp. $\lim_{\substack{\zeta \to 0\\ \zeta > 0}} \psi'(\zeta) < 0$).
\end{lemma}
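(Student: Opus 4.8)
The plan is to leverage the groundwork already laid in the proof of Lemma~\ref{le:psi}. The first assertion---that when $(\chi_{-},\chi_{+})\in\RPP^2$ Conditions~\ref{as:lepsi1}--\ref{as:lepsi2} are equivalent to $\chi_{-}<\chi_{+}$---was in fact established there: evaluating \eqref{e:derpsi} at the two finite positive thresholds (using $\vartheta_{-}(\chi_{-})=\gamma^{-1}\overline{\upsilon}$ and $\vartheta_{+}(\chi_{+})=\gamma^{-1}\overline{\xi}$, which hold by continuity) gives $\psi'(\chi_{-})=\vartheta_{+}(\chi_{-})-\gamma^{-1}\overline{\xi}$ and $\psi'(\chi_{+})=\chi_{+}\big(\gamma^{-1}\overline{\upsilon}-\vartheta_{-}(\chi_{+})\big)$ as in \eqref{e:psichimneg}. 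Since $\chi_{+}>0$, Conditions~\ref{as:lepsi1}--\ref{as:lepsi2} read $\vartheta_{+}(\chi_{-})<\gamma^{-1}\overline{\xi}$ and $\vartheta_{-}(\chi_{+})<\gamma^{-1}\overline{\upsilon}$, which by the monotonicity characterizations \eqref{e:interchim} and \eqref{e:interchip} each translate into $\chi_{-}<\chi_{+}$. I would therefore simply restate this step.

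For the first boundary case, $\chi_{-}\in\RPP$ and $\chi_{+}=\pinf$, I would argue that Condition~\ref{as:lepsi1} holds automatically and only Condition~\ref{as:lepsi2} survives. Indeed, $\chi_{+}=\pinf$ forces, via \eqref{e:interchip}, the set $\menge{\zeta\in\RPP}{\vartheta_{+}(\zeta)<\gamma^{-1}\overline{\xi}}$ to be all of $\RPP$; in particular $\vartheta_{+}(\chi_{-})<\gamma^{-1}\overline{\xi}$. Because $\chi_{-}$ is a finite interior point, $\psi'$ is continuous there, so $\lim_{\zeta\to\chi_{-},\,\zeta>\chi_{-}}\psi'(\zeta)=\psi'(\chi_{-})=\vartheta_{+}(\chi_{-})-\gamma^{-1}\overline{\xi}<0$, establishing Condition~\ref{as:lepsi1}. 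Hence Conditions~\ref{as:lepsi1}--\ref{as:lepsi2} reduce to Condition~\ref{as:lepsi2}, which in this regime is precisely $\lim_{\zeta\to\pinf}\psi'(\zeta)>0$.

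The symmetric boundary case, $\chi_{-}=0$ and $\chi_{+}\in\RPP$, is handled dually: now Condition~\ref{as:lepsi2} is automatic. Here $\chi_{-}=0$ forces, via \eqref{e:interchim}, the set $\menge{\zeta\in\RPP}{\vartheta_{-}(\zeta)<\gamma^{-1}\overline{\upsilon}}$ to be all of $\RPP$, so $\vartheta_{-}(\chi_{+})<\gamma^{-1}\overline{\upsilon}$; since $\chi_{+}>0$, the evaluation $\psi'(\chi_{+})=\chi_{+}\big(\gamma^{-1}\overline{\upsilon}-\vartheta_{-}(\chi_{+})\big)>0$ yields Condition~\ref{as:lepsi2}. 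What remains is Condition~\ref{as:lepsi1}, which at the threshold $\chi_{-}=0$ reads $\lim_{\zeta\to 0,\,\zeta>0}\psi'(\zeta)<0$.

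The one point requiring care is the distinction between interior thresholds and boundary thresholds: at a finite positive $\chi_{-}$ or $\chi_{+}$ the one-sided limit of $\psi'$ equals its value there by continuity, which legitimizes the point-evaluations of \eqref{e:psichimneg}, whereas at $0$ or $\pinf$ no such continuous extension is available, and this is exactly why the surviving condition must be phrased as a limit rather than a value. I expect no substantial difficulty beyond verifying that in each boundary regime precisely one of the two inequalities becomes vacuous, leaving the stated single-limit criterion.
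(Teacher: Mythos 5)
Your proposal is correct and follows essentially the same route as the paper's proof: the interior case is disposed of by recalling the equivalence already established in the proof of Lemma~\ref{le:psi}, and in each boundary case you use \eqref{e:interchim} or \eqref{e:interchip} to show that the sublevel set fills all of $\RPP$, making one of the two conditions automatic via the point-evaluations \eqref{e:psichimneg}, so that only the stated limit criterion survives. Your added remarks on continuity of $\psi'$ at finite positive thresholds (versus the genuine limits needed at $0$ and $\pinf$) merely make explicit what the paper leaves implicit in its ``we still have'' step.
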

\begin{proof}
	If $(\chi_{-},\chi_{+})\in \RPP^2$, we have already shown that
	Conditions \ref{as:lepsi1} and \ref{as:lepsi2} 
	are satisfied if and only $\chi_{-} < \chi_{+}$.
	
	If $\chi_{-} \in \RPP$ and
	$\chi_{+}=\pinf$ (resp. $\chi_{-} = 0$ and $\chi_{+} \in \RPP$), we still have
	\begin{align}
	&\psi'(\chi_{-})
	= \vartheta_{+}(\chi_{-})-
	\gamma^{-1} \overline{\xi} < 0\\
	(\text{resp.}\;\; & \psi'(\chi_{+}) =
	\chi_{+}\left(\gamma^{-1} \overline{\upsilon} - 
	\vartheta_{-}(\chi_{+})\right) > 0),
	\end{align}
	which shows that Condition \ref{as:lepsi1} (resp. Condition
	\ref{as:lepsi2}) is always satisfied.
\end{proof}

By using the same expressions of $\chi_{-}$ and $\chi_{+}$ as in the previous lemmas, we obtain the
following characterization of the proximity operator of any scaled
version of $\Phi$:
\begin{proposition}\label{p:psi}
	Let $\gamma \in \RPP$ and $(\overline{\upsilon},\overline{\xi})\in
	\RR^2$.
	$\prox_{\gamma \Phi}(\overline{\upsilon},\overline{\xi})\in
	\RPP^2$ if and only if Conditions~\ref{as:lepsi0}-\ref{as:lepsi2} in Lemma \ref{le:psi}
	are satisfied. When these conditions hold,
	\begin{equation}
	\prox_{\gamma \Phi}(\overline{\upsilon},\overline{\xi}) = 
	\big(\overline{\upsilon}-\gamma\,  \vartheta_{-}(\widehat{\zeta}),
	\overline{\xi}- \gamma\, \vartheta_{+}(\widehat{\zeta})\big)
	\label{e:thetap12}
	\end{equation} 
	where $\widehat{\zeta} < \chi_{+}$
	is the unique minimizer of $\psi$ on $]\chi_{-},\pinf[$.
\end{proposition}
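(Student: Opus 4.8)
The plan is to work directly from the subdifferential characterization of the proximity operator given in Proposition~\ref{p:propprox}\ref{p:propproxii}: writing $x=(\upsilon,\xi)$ and $\overline{x}=(\overline{\upsilon},\overline{\xi})$, one has $x=\prox_{\gamma\Phi}(\overline{x})$ if and only if $\overline{x}-x\in\gamma\,\partial\Phi(x)$. The aim is to decide when this relation is realized by an \emph{interior} point $x\in\RPP^2$. On the open quadrant, $\Phi(\upsilon,\xi)=\xi\,\varphi(\upsilon/\xi)$ is differentiable, so the inclusion collapses to the two scalar equations $\overline{\upsilon}-\upsilon=\gamma\,\partial_\upsilon\Phi(x)$ and $\overline{\xi}-\xi=\gamma\,\partial_\xi\Phi(x)$, which together with the implicit positivity requirement is what must be analyzed.

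First I would compute the partial derivatives of the perspective function on $\RPP^2$, namely $\partial_\upsilon\Phi(\upsilon,\xi)=\varphi'(\upsilon/\xi)$ and $\partial_\xi\Phi(\upsilon,\xi)=\varphi(\upsilon/\xi)-(\upsilon/\xi)\varphi'(\upsilon/\xi)$. Introducing the change of variable $\zeta=\xi/\upsilon\in\RPP$ and recalling \eqref{e:defvarthetam}--\eqref{e:defvarthetap}, these derivatives are exactly $\vartheta_{-}(\zeta)$ and $\vartheta_{+}(\zeta)$, since $\upsilon/\xi=\zeta^{-1}$. The optimality equations then become $\upsilon=\overline{\upsilon}-\gamma\,\vartheta_{-}(\zeta)$ and $\xi=\overline{\xi}-\gamma\,\vartheta_{+}(\zeta)$, which is precisely formula~\eqref{e:thetap12}. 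Substituting these back into the consistency relation $\zeta\upsilon=\xi$ and dividing by $\gamma$ reproduces, term by term, the expression \eqref{e:derpsi} for $\psi'(\zeta)$; hence consistency is equivalent to $\psi'(\zeta)=0$. Thus an interior minimizer corresponds bijectively to a root $\widehat{\zeta}$ of $\psi'$ on $\RPP$ for which the induced $\upsilon,\xi$ are strictly positive.

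Next I would translate the two positivity constraints. Since $\vartheta_{-}$ is decreasing (as established in the proof of Lemma~\ref{le:psi}), the condition $\upsilon=\overline{\upsilon}-\gamma\,\vartheta_{-}(\zeta)>0$ amounts to $\vartheta_{-}(\zeta)<\gamma^{-1}\overline{\upsilon}$, i.e. $\zeta\in\,]\chi_{-},\pinf[$; since $\vartheta_{+}$ is increasing, $\xi>0$ amounts to $\zeta\in\,]0,\chi_{+}[$. So an interior minimizer exists exactly when $\psi'$ vanishes at some $\widehat{\zeta}\in\,]\chi_{-},\chi_{+}[$, which is precisely the conclusion of Lemma~\ref{le:psi}: under Conditions \ref{as:lepsi0}--\ref{as:lepsi2}, $\psi$ has a unique minimizer $\widehat{\zeta}$ on $]\chi_{-},\pinf[$ with $\widehat{\zeta}<\chi_{+}$, and strict convexity there forces $\psi'(\widehat{\zeta})=0$, giving the claimed formula. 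For the converse, if $\prox_{\gamma\Phi}(\overline{\upsilon},\overline{\xi})\in\RPP^2$ then the associated $\widehat{\zeta}$ lies in $]\chi_{-},\chi_{+}[$, so both sets defining $\chi_{-},\chi_{+}$ are nonempty (Condition~\ref{as:lepsi0}), and the strict monotonicity of $\psi'$ on $]\chi_{-},\pinf[$ (from the strict-convexity part of Lemma~\ref{le:psi}) together with $\psi'(\widehat{\zeta})=0$ yields $\lim_{\zeta\downarrow\chi_{-}}\psi'(\zeta)<0$ and $\lim_{\zeta\to\chi_{+}}\psi'(\zeta)>0$, i.e. Conditions \ref{as:lepsi1}--\ref{as:lepsi2}, via Lemma~\ref{le:psibis}.

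I expect the main obstacle to be making the ``if and only if'' genuinely bidirectional rather than a single implication, in particular handling the boundary cases $\chi_{-}=0$ and $\chi_{+}=\pinf$ uniformly; this is exactly where Lemma~\ref{le:psibis} must be invoked to convert the endpoint sign conditions into the single inequality $\chi_{-}<\chi_{+}$ (or the appropriate one-sided limit). A secondary point demanding care is verifying that an interior critical point is the \emph{global} minimizer of the prox objective, so that the statement really characterizes $\prox_{\gamma\Phi}$ and not merely a stationary point; this is immediate because $\gamma\Phi+\tfrac12\|\cdot-\overline{x}\|^2$ is proper, lower-semicontinuous and strongly convex, hence possesses a unique minimizer that the subdifferential identity pins down.
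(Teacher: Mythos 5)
Your proposal is correct and follows essentially the same route as the paper's own proof: both reduce the condition $\prox_{\gamma\Phi}(\overline{\upsilon},\overline{\xi})\in\RPP^2$ to the gradient equations of the perspective function on the open quadrant, pass to the variable $\zeta=\xi/\upsilon$ so that consistency becomes $\psi'(\zeta)=0$, use \eqref{e:interchim}--\eqref{e:interchip} to translate positivity into $\zeta\in\,]\chi_{-},\chi_{+}[$, and obtain the converse from the strict monotonicity of $\psi'$ on $]\chi_{-},\pinf[$ combined with Lemma~\ref{le:psi}. The only quibble is that your closing appeal to Lemma~\ref{le:psibis} is unnecessary and slightly misplaced: Conditions \ref{as:lepsi1}--\ref{as:lepsi2} already follow from the monotonicity-plus-zero argument you give, which is exactly what the paper does.
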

\begin{proof}
	For every $(\overline{\upsilon},\overline{\xi})\in \RR^2$, such
	that Conditions~\ref{as:lepsi0}-\ref{as:lepsi2} in Lemma \ref{le:psi}
	hold,  let
	\begin{align}
	\upsilon = \overline{\upsilon}-\gamma\,
	\vartheta_{-}(\widehat{\zeta})
	\label{e:upsilonprox}\\
	\xi = \overline{\xi}- \gamma\,
	\vartheta_{+}(\widehat{\zeta})
	\label{e:chiprox}
	\end{align}
	where the existence of $\widehat{\zeta}\in
	]\chi_{-},\chi_{+}[$ is
	guaranteed by Lemma~\ref{le:psi}.
	As consequences of \eqref{e:interchim} and \eqref{e:interchip}, 
	$\upsilon$ and $\xi$ are positive. In addition, since
	\begin{equation}
	\psi'(\widehat{\zeta})=0 \quad
	\Leftrightarrow\quad \widehat{\zeta}
	\big(\gamma^{-1} \overline{\upsilon}-
	\vartheta_{-}(\widehat{\zeta})\big)
	= \gamma^{-1} \overline{\xi}-\vartheta_{+}(\widehat{\zeta})
	\end{equation}
	we derive
	from \eqref{e:upsilonprox} and \eqref{e:chiprox} that
	$\widehat{\zeta}= \xi/\upsilon > 0$.
	This allows us to re-express \eqref{e:upsilonprox} and
	\eqref{e:chiprox} as
	\begin{align}
	&\upsilon-\overline{\upsilon}+\gamma
	\varphi'\Big(\frac{\upsilon}{\xi}\Big)=0
	\label{e:thetap1}\\
	\nonumber\\[-0.5em]
	&\xi-\overline{\xi}+ \gamma
	\left(\varphi\Big(\frac{\upsilon}{\xi}\Big)-\frac{\upsilon}{\xi}\varphi'\Big(\frac{\upsilon}{\xi}\Big)\right)=0,
	\label{e:thetap2}
	\end{align} 
	that is
	\begin{align}
	&\upsilon-\overline{\upsilon}+\gamma \frac{\partial \Phi}{\partial\upsilon}(\upsilon,\xi) = 0 \\
	\nonumber\\[-0.5em]
	&\xi-\overline{\xi}+ \gamma \frac{\partial \Phi}{\partial\xi}(\upsilon,\xi) = 0. 
	\end{align} 
	The latter equations are satisfied if and only if \cite{Combettes_P_2010_inbook_proximal_smsp}
	\begin{equation}
	(\upsilon,\xi) = \prox_{\gamma\Phi}(\overline{\upsilon},\overline{\xi}).
	\end{equation}
	
	Conversely, for every $(\overline{\upsilon},\overline{\xi})\in \RR^2$, let
	$(\upsilon,\xi) = \prox_{\gamma
		\Phi}(\overline{\upsilon},\overline{\xi})$. If $(\upsilon,\xi)\in
	\RPP^2$, $(\upsilon,\xi)$ satisfies \eqref{e:thetap1} and
	\eqref{e:thetap2}. By setting $\widetilde{\zeta}= \xi/\upsilon > 0$,
	after simple calculations, we find
	\begin{align}
	&\upsilon = \overline{\upsilon}-\gamma\,
	\vartheta_{-}(\widetilde{\zeta})> 0\label{e:thetapp1}
	\\
	&\xi = \overline{\xi}- \gamma\, \vartheta_{+}(\widetilde{\zeta})>0
	\label{e:thetapp2}\\
	&\psi'(\widetilde{\zeta}) = 0 \label{e:thetapp3}.
	\end{align}
	According to \eqref{e:interchim} and \eqref{e:interchip}, 
	\eqref{e:thetapp1} and \eqref{e:thetapp2} imply that
	$\chi_{-}\neq \pinf$, $\chi_{+} \neq -\infty$, and
	$\widetilde{\zeta} \in
	]\chi_{-},\chi_{+}[$.
	In addition, according to Lemma~\ref{le:psi}, $\psi'$ is strictly
	increasing on $]\chi_{-},\pinf[$ (since $\psi$ is strictly convex on
	this interval). Hence,
	$\psi'$ has a  limit at $\chi_{-}$ (which may be equal to $-\infty$ when $\chi_{-} = -\infty$), and
	Condition~\ref{as:lepsi1} is satisfied. Similarly, $\psi'$ has a 
	limit  at $\chi_{+}$ (possibly equal to $\pinf$
	when $\chi_{+} = \pinf$), and
	Condition \ref{as:lepsi2} is satisfied.
\end{proof}

\begin{remark}\label{re:sym}
	In \eqref{e:perspf}, a special case arises when
	\begin{equation}\label{e:phisym}
	(\forall \zeta \in\RPP) \qquad \varphi(\zeta) = \widetilde{\varphi}(\zeta)+\zeta \widetilde{\varphi}(\zeta^{-1})
	\end{equation}
	where $\widetilde{\varphi}$ is a twice differentiable convex function on $\RPP$.
	Then $\Phi$ takes a symmetric form, leading to $\mathcal{L}$-divergences. It can then be deduced from \eqref{e:defvarthetap} that, for every $\zeta \in \RPP$, 
	\begin{equation}\label{e:varthetasym}
	\vartheta_-(\zeta) = \vartheta_+(\zeta^{-1})
	= \widetilde{\varphi}(\zeta)+\widetilde{\varphi}'(\zeta^{-1})- \zeta \widetilde{\varphi}'(\zeta).
	\end{equation}
\end{remark}

\section{Examples}
\label{s:examples}

\subsection{Kullback-Leibler divergence}\label{se:DKL}%

Let us now apply the results in the previous section to the function
\begin{equation}\label{e:divKL}
\Phi(\upsilon,\xi) =
\begin{cases}
\upsilon \ln\left(\dfrac{\upsilon}{\xi}\right) + \xi - \upsilon
& \mbox{if $(\upsilon,\xi) \in \RPP^2$}\\
\xi & \mbox{if $\upsilon = 0$ and $\xi \in \RP$}\\
\pinf & \mbox{otherwise.}
\end{cases}
\end{equation}
This is a function in $\Gamma_0(\RR^2)$ satisfying \eqref{e:perspf}
with 
\begin{equation}
(\forall \zeta \in \RPP)\qquad \varphi(\zeta) = \zeta \ln\zeta - \zeta
+ 1.
\end{equation}

\begin{proposition}\label{p:proxKL}
	The proximity operator of $\gamma \Phi$ with $\gamma \in \RPP$ is, for every $(\overline{\upsilon},\overline{\xi})\in \RR^2$,
	\begin{equation}\label{e:proxDxy_KLD} 
	\prox_{\gamma \Phi}(\overline{\upsilon},\overline{\xi}) = 
	\begin{cases}
	(\upsilon,\xi) & \mbox{if $\exp\left(\gamma^{-1}\overline{\upsilon}\right) >
		1-\gamma^{-1}\overline{\xi}$}\\
	(0,0) & \mbox{otherwise,}
	\end{cases}
	\end{equation}
	where
	\begin{align}
	\upsilon &= \overline{\upsilon} + \gamma \ln \widehat{\zeta}\label{eq:upsilon_KLD}\\
	\xi &= \overline{\xi}+\gamma \left(\widehat{\zeta}^{-1}-1\right)\label{eq:xi_KLD}
	\end{align}
	and $\widehat{\zeta}$ is the unique minimizer on
	$]\exp(-\gamma^{-1}\overline{\upsilon}),\pinf[$
	of
	\begin{align}\label{e:psiKL}
	\psi\colon &\RPP \to \RR\colon\\ 
	&\zeta  \mapsto
	\Big(\frac{\zeta^2}{2}-1\Big) \ln\zeta + \frac12 \Big(\gamma^{-1}\overline{\upsilon}-\frac12\Big)\zeta^2 
	+(1-\gamma^{-1} \overline{\xi}) \zeta.  \nonumber
	\end{align}
\end{proposition}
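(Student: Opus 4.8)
The plan is to specialize the general characterization of Proposition~\ref{p:psi} to the KL choice $\varphi(\zeta)=\zeta\ln\zeta-\zeta+1$ and then to treat separately the boundary case, which is the only part not delivered by that proposition. First I would record $\varphi'(\zeta)=\ln\zeta$ and $\varphi''(\zeta)=\zeta^{-1}>0$, so that $\varphi$ is strictly convex and twice differentiable on $\RPP$, placing us within the framework of Section~\ref{se:defdiv}. Substituting into \eqref{e:defvarthetam}--\eqref{e:defvarthetap} gives $\vartheta_{-}(\zeta)=\varphi'(\zeta^{-1})=-\ln\zeta$ and $\vartheta_{+}(\zeta)=\varphi(\zeta^{-1})-\zeta^{-1}\varphi'(\zeta^{-1})=1-\zeta^{-1}$. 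A primitive on $\RPP$ of $\zeta\mapsto\zeta\varphi'(\zeta^{-1})=-\zeta\ln\zeta$ is $\Theta(\zeta)=\tfrac14\zeta^2-\tfrac12\zeta^2\ln\zeta$, and inserting $\zeta\varphi(\zeta^{-1})=\zeta-1-\ln\zeta$ together with $-\Theta(\zeta)$ into the definition of $\psi$ from Lemma~\ref{le:psi} collapses, up to an additive constant that does not affect the minimizer, to the expression \eqref{e:psiKL}. Finally, since $\vartheta_{-}$ is strictly decreasing, the defining formula \eqref{e:defchim} yields $\chi_{-}=\exp(-\gamma^{-1}\overline{\upsilon})\in\RPP$, which pins down the interval $]\exp(-\gamma^{-1}\overline{\upsilon}),\pinf[$.

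Next I would apply Proposition~\ref{p:psi} to decide when $\prox_{\gamma\Phi}(\overline{\upsilon},\overline{\xi})\in\RPP^2$. Here $\chi_{-}\in\RPP$ always, and from \eqref{e:defchip} one computes $\chi_{+}=(1-\gamma^{-1}\overline{\xi})^{-1}$ when $\gamma^{-1}\overline{\xi}<1$ and $\chi_{+}=\pinf$ otherwise; in both cases $\chi_{+}\neq-\infty$, so Condition~\ref{as:lepsi0} of Lemma~\ref{le:psi} holds automatically. I would then invoke Lemma~\ref{le:psibis}: in the case $\chi_{+}=\pinf$ one checks $\lim_{\zeta\to\pinf}\psi'(\zeta)=+\infty>0$ (the dominant term of \eqref{e:derpsi} being $\zeta\ln\zeta$), so Conditions~\ref{as:lepsi1}--\ref{as:lepsi2} hold; in the case $\chi_{+}\in\RPP$ they reduce to $\chi_{-}<\chi_{+}$, i.e. $\exp(-\gamma^{-1}\overline{\upsilon})<(1-\gamma^{-1}\overline{\xi})^{-1}$. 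Taking reciprocals of these positive quantities, both cases merge into the single inequality $\exp(\gamma^{-1}\overline{\upsilon})>1-\gamma^{-1}\overline{\xi}$. Whenever it holds, \eqref{e:thetap12} evaluated with the above $\vartheta_{\pm}$ returns exactly \eqref{eq:upsilon_KLD}--\eqref{eq:xi_KLD}, with $\widehat{\zeta}$ the unique minimizer of $\psi$ on $]\exp(-\gamma^{-1}\overline{\upsilon}),\pinf[$, as required.

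It remains to handle the complementary case $\exp(\gamma^{-1}\overline{\upsilon})\le 1-\gamma^{-1}\overline{\xi}$, which forces $1-\gamma^{-1}\overline{\xi}>0$, i.e. $\overline{\xi}<\gamma$. By Proposition~\ref{p:psi} the proximity point then lies outside $\RPP^2$, while by Proposition~\ref{p:propprox}\ref{p:propproxi} it still lies in $\dom\Phi$; reading off \eqref{e:divKL}, the only part of $\dom\Phi$ outside the open quadrant is the segment $\{0\}\times\RP$. I would therefore minimize $\gamma\Phi(0,\xi)+\tfrac12(\overline{\upsilon}^2+(\xi-\overline{\xi})^2)=\gamma\xi+\tfrac12\overline{\upsilon}^2+\tfrac12(\xi-\overline{\xi})^2$ over $\xi\in\RP$: its unconstrained minimizer is $\xi=\overline{\xi}-\gamma<0$, so the constrained minimizer is $\xi=0$, giving $\prox_{\gamma\Phi}(\overline{\upsilon},\overline{\xi})=(0,0)$. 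I expect this boundary analysis to be the main obstacle, since it is the one step not supplied by the general machinery of Proposition~\ref{p:psi}; the remainder is the routine, if slightly delicate, merging of the two $\chi_{+}$ regimes into the stated threshold.
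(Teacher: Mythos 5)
Your proposal is correct and follows essentially the same route as the paper's proof: specializing Proposition~\ref{p:psi} and Lemma~\ref{le:psibis} with $\vartheta_{-}(\zeta)=-\ln\zeta$, $\vartheta_{+}(\zeta)=1-\zeta^{-1}$, $\chi_{-}=\exp(-\gamma^{-1}\overline{\upsilon})$, merging the two $\chi_{+}$ regimes into the threshold $\exp(\gamma^{-1}\overline{\upsilon})>1-\gamma^{-1}\overline{\xi}$, and then settling the complementary case on the boundary segment $\{0\}\times\RP$. Your direct minimization of $\gamma\xi+\tfrac12(\xi-\overline{\xi})^2$ over $\xi\in\RP$ is exactly the asymmetric soft-thresholding step the paper invokes, together with the same observation that failure of the threshold condition forces $\overline{\xi}<\gamma$ and hence $\xi=0$.
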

\begin{proof}
	For  every $(\overline{\upsilon},\overline{\xi})\in \RR^2$,
	$(\upsilon,\xi) = \prox_{\gamma
		\Phi}(\overline{\upsilon},\overline{\xi})$
	is such that $(\upsilon,\xi)\in \dom\Phi$ \cite{Moreau_JJ_1962_cras_Fonctions_cdeppdueh}.
	Let us first note that
	\begin{equation}\label{e:condupsilonuni}
	\upsilon \in \RPP \quad\Leftrightarrow\quad (\upsilon,\xi) \in \RPP^2.
	\end{equation}
	We are now able to apply Proposition \ref{p:psi},
	where $\psi$ is given by \eqref{e:psiKL} and, for every $\zeta \in \RPP$, 
	\begin{align}
	\Theta(\zeta) &= \frac{\zeta^2}{2}\left(\frac12 - \ln\zeta\right)-1\\
	\vartheta_{-}(\zeta) &= -\ln \zeta\\
	\vartheta_{+}(\zeta) &= 1 - \zeta^{-1}.
	\end{align}
	In addition,
	\begin{align}
	&\chi_{-} = \exp(-\gamma^{-1}\overline{\upsilon})\\
	&\chi_{+} = 
	\begin{cases}
	(1-\gamma^{-1}\overline{\xi})^{-1}& \mbox{if $\overline{\xi} < \gamma$}\\
	\pinf & \mbox{otherwise.}
	\end{cases}
	\end{align}
	According to \eqref{e:condupsilonuni} and Proposition \ref{p:psi},
	$\upsilon \in \RPP$ if and only if
	Conditions~\ref{as:lepsi0}-\ref{as:lepsi2} 
	in Lemma~\ref{le:psi} hold. Since $\chi_{-} \in \RPP$
	and $\lim_{\zeta\to \pinf} \psi'(\zeta) = \pinf$, 
	Lemma \ref{le:psibis} shows that these conditions are satisfied if and
	only if
	\begin{equation}
	\overline{\xi} \ge \gamma \quad\textrm{or}\quad \big(\overline{\xi} < \gamma \quad\text{and}\quad \exp(-\overline{\upsilon}/\gamma) < (1-\gamma^{-1}\overline{\xi})^{-1}\big),
	\end{equation}
	which is equivalent to
	\begin{equation}\label{e:ineqproxKL}
	\exp(\overline{\upsilon}/\gamma) > 1-\gamma^{-1}\overline{\xi}.
	\end{equation}
	Under this assumption, Proposition \ref{p:psi} leads to the
	expressions \eqref{eq:upsilon_KLD} and \eqref{eq:xi_KLD} of the proximity operator, where $\widehat{\zeta}$ is the unique minimizer on $]\exp(-\overline{\upsilon}/\gamma),\pinf[$ of the function $\psi$.
	
	We have shown that
	$\upsilon > 0 \Leftrightarrow$ \eqref{e:ineqproxKL}. So, $\upsilon = 0$ when \eqref{e:ineqproxKL} is not satisfied.
	Then, the expression of $\xi$ simply reduces to
	the asymmetric soft-thresholding rule \cite{Combettes2007}:
	\begin{equation}\label{e:threshxi}
	\xi = \begin{cases}
	\overline{\xi} - \gamma & \mbox{if $\overline{\xi} > \gamma$}\\
	0 & \mbox{otherwise.}
	\end{cases}
	\end{equation}
	However, $\exp(\gamma^{-1}\overline{\upsilon})  \le
	1-\gamma^{-1}\overline{\xi} \Rightarrow 
	\overline{\xi}<\gamma $, so that $\xi$ is necessarily equal to $0$.
\end{proof}

\begin{remark}
	More generally, we can derive the proximity operator of
	\begin{equation}
	\widetilde{\Phi}(\upsilon,\xi) =
	\begin{cases}
	\upsilon \ln\left(\dfrac{\upsilon}{\xi}\right) \!+\! \kappa(\xi - \upsilon)
	& \mbox{if $(\upsilon,\xi) \in \RPP^2$}\\
	\kappa \xi & \mbox{if $\upsilon = 0$ and $\xi \in \RP$}\\
	\pinf & \mbox{otherwise,}
	\end{cases}
	\end{equation}
	where $\kappa \in \RR$. Of particular interest in the literature is the case when
	$\kappa = 0$ \cite{Blahut72,Arimoto1972,Dupe_FX_2008_ip_proximal_ifdpniusr,Pustelnik_N_2011_tip_PPXA}. From Proposition~\ref{p:propprox}\ref{p:propproxiv}, we get, for every $\gamma \in \RPP$ and for every $(\overline{\upsilon},\overline{\xi})\in \RR^2$,
	\begin{equation}
	\prox_{\gamma \widetilde{\Phi}}(\overline{\upsilon},\overline{\xi})=
	\prox_{\gamma \Phi}(\overline{\upsilon}+\gamma\kappa-\gamma,\overline{\xi}-\gamma\kappa+\gamma),
	\end{equation}
	where $\prox_{\gamma\Phi}$ is provided by Proposition \ref{p:proxKL}.
\end{remark}

\begin{remark}
	It can be noticed that
	\begin{equation}
	\psi'(\widehat{\zeta}) = \widehat{\zeta} \ln \widehat{\zeta} + \gamma^{-1}\overline{\upsilon}\widehat{\zeta} - {\widehat{\zeta}}^{-1} +1-\gamma^{-1}\overline{\xi} = 0
	\end{equation}
	is equivalent to
	\begin{equation}\label{e:alphaKLWspec}
	\widehat{\zeta}^{-1} \exp\Big(\widehat{\zeta}^{-1}\big(\widehat{\zeta}^{-1} + \gamma^{-1}\overline{\xi}-1\big)\Big) = \exp(\gamma^{-1}\overline{\upsilon}).
	\end{equation}
	In the case where $\overline{\xi}=\gamma$, the above equation reduces to
	\begin{align}
	& 2\widehat{\zeta}^{-2} \exp\big(2\widehat{\zeta}^{-2}\big) = 2\exp(2\gamma^{-1}\overline{\upsilon})\nonumber\\
	\Leftrightarrow \qquad & \widehat{\zeta} =  \left(\frac{2}{W(2e^{2\gamma^{-1}\overline{\upsilon}})}\right)^{1/2}
	\end{align}
	where $W$ is the Lambert W function \cite{Corless96onthe}.
	When $\overline{\xi}\neq\gamma$, although a closed-form expression of \eqref{e:alphaKLWspec} is not available, efficient numerical methods to compute $\widehat{\zeta}$ can be developed.
\end{remark}

\begin{remark}
	To minimize $\psi$ in \eqref{e:psiKL}, we need to find the zero on $]\exp(-\gamma^{-1}\overline{\upsilon}),+\infty[$ of the function: $(\forall \zeta\in \RPP)$
	\begin{equation}\label{e:derpsiKL}
	\psi'(\zeta)  = 
	{\zeta} \ln {\zeta} + \gamma^{-1}\overline{\upsilon}\,{\zeta} - {\zeta}^{-1}  +1-\gamma^{-1}\overline{\xi}.
	\end{equation}
	This can be performed by Algorithm \ref{algo:Newton_DKL}, the convergence of which is proved in Appendix~\ref{s:algoNewton_DKL}. 
	
	\begin{minipage}{0.9\linewidth}
		\begin{algorithm}[H]
			\caption{Newton method for minimizing \eqref{e:psiKL}.}\label{algo:Newton_DKL}
			\small
			\vspace{0.5em}
			\textsc{Set} $\widehat{\zeta}_0 = \exp(-\gamma^{-1}\overline{\upsilon})$\\[0.5em]
			\noindent \textsc{For}\; $n = 0, 1, \dots$\\[-0.5em]
			\[
			\!\!\!\!\!\!\!\!\!\!\!\!\!\!\!\!\!\!\!\!\!\!\!\!\!\!\!\!\!\!
			\left\lfloor
			\begin{aligned}
			&\textrm{	$\widehat{\zeta}_{n+1}=\widehat{\zeta}_n-\psi'(\widehat{\zeta}_n)/\psi''(\widehat{\zeta}_n).$}\\
			\end{aligned}
			\right.
			\]
		\end{algorithm}
		\vspace{0.1cm}
	\end{minipage}
	
\end{remark}


\iflong
\else
{\color{black} 
	In the following, we provide expressions of the proximity operators of other standard divergences, which are derived from the results in Section~\ref{s:mainresult}
	(see \cite{extended_version} for more technical details).
}
\fi

\subsection{Jeffreys divergence} \label{se:DJK}

Let us now consider the symmetric form of \eqref{e:divKL} given by
\begin{equation}
\Phi(\upsilon,\xi)= \begin{cases}
(\upsilon-\xi) \big(\ln \upsilon - \ln \xi)
& \mbox{if $(\upsilon,\xi) \in \RPP^2$}\\
0 & \mbox{if $\upsilon = \xi = 0$}\\
\pinf & \mbox{otherwise.}
\end{cases}
\end{equation}
This function belongs to $\Gamma_0(\RR^2)$ and satisfies \eqref{e:perspf} and \eqref{e:phisym}
with 
\begin{equation}
(\forall \zeta \in \RPP)\qquad \widetilde{\varphi}(\zeta) = -\ln\zeta.
\end{equation}

\begin{proposition}\label{p:proxJKL}
	The proximity operator of $\gamma \Phi$ with $\gamma \in \RPP$ is, for every $(\overline{\upsilon},\overline{\xi})\in \RR^2$,
	\begin{equation}\label{e:proxDxy_JKL} 
	\prox_{\gamma \Phi}(\overline{\upsilon},\overline{\xi}) = 
	\begin{cases}
	(\upsilon,\xi) &\mbox{if $W(e^{1-\gamma^{-1} \overline{\upsilon}})W(e^{1 - \gamma^{-1} \overline{\xi}}) < 1$}\\
	(0,0) & \mbox{otherwise}
	\end{cases}
	\end{equation}
	where
	\begin{align}
	\upsilon &= \overline{\upsilon} + \gamma \big(\ln \widehat{\zeta} + \widehat{\zeta}-1)\\
	\xi &= \overline{\xi}-\gamma \big(\ln \widehat{\zeta} -\widehat{\zeta}^{-1}+1)
	\end{align}
	and $\widehat{\zeta}$ is the unique minimizer on
	$]W(e^{1-\gamma^{-1}\overline{\upsilon}}),\pinf[$
	of
	\begin{align}\label{e:psiJK}
	\psi&\colon \RPP \to \RR\colon \nonumber \\
	& \zeta \mapsto
	\Big(\frac{\zeta^2}{2}+\zeta-1\Big) \ln\zeta + \frac{\zeta^3}{3}
	+ \frac12 \Big(\gamma^{-1}\overline{\upsilon}- \frac{3}{2}\Big)
	\zeta^2 -\gamma^{-1} \overline{\xi} \zeta.
	\end{align}
\end{proposition}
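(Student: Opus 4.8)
The plan is to specialize the general machinery of Section~\ref{s:mainresult} to the generating function $\varphi(\zeta) = (\zeta-1)\ln\zeta$, which is exactly the one obtained from the symmetric form \eqref{e:phisym} with $\widetilde{\varphi}(\zeta) = -\ln\zeta$; a direct evaluation of the perspective function \eqref{e:perspf} confirms that this $\varphi$ produces the stated $\Phi\in\Gamma_0(\RR^2)$. Since $\Phi$ is of the symmetric type of Remark~\ref{re:sym}, I would read off from \eqref{e:varthetasym} (or by differentiating $\varphi$ directly)
\[
\vartheta_{-}(\zeta) = 1 - \ln\zeta - \zeta, \qquad \vartheta_{+}(\zeta) = \vartheta_{-}(\zeta^{-1}) = 1 + \ln\zeta - \zeta^{-1}.
\]
Integrating $\zeta \mapsto \zeta\,\varphi'(\zeta^{-1}) = \zeta\,\vartheta_{-}(\zeta) = \zeta - \zeta\ln\zeta - \zeta^2$ gives a primitive $\Theta(\zeta) = \tfrac34\zeta^2 - \tfrac12\zeta^2\ln\zeta - \tfrac13\zeta^3$, and substituting $\zeta\,\varphi(\zeta^{-1}) = (\zeta-1)\ln\zeta$ together with $\Theta$ into the definition of $\psi$ in Lemma~\ref{le:psi}, then collecting the $\ln\zeta$ and $\zeta^2$ terms, reproduces exactly \eqref{e:psiJK}. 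This part is routine bookkeeping.

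The substantive step is to make the thresholds explicit. Because $\vartheta_{-}$ is strictly decreasing and sweeps all of $\RR$ on $\RPP$, the boundary condition underlying \eqref{e:defchim} reads $\ln\chi_{-} + \chi_{-} = 1 - \gamma^{-1}\overline{\upsilon}$, i.e. $\chi_{-}\,e^{\chi_{-}} = e^{1-\gamma^{-1}\overline{\upsilon}}$, so by the defining relation of the Lambert $W$ function $\chi_{-} = W(e^{1-\gamma^{-1}\overline{\upsilon}})$. The symmetry $\vartheta_{+}(\zeta) = \vartheta_{-}(\zeta^{-1})$ then yields $\chi_{+} = 1/W(e^{1-\gamma^{-1}\overline{\xi}})$ by the same computation applied to $\zeta^{-1}$. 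Crucially, both $\vartheta_{-}$ and $\vartheta_{+}$ are strictly monotone and range over all of $\RR$ on $\RPP$, so $(\chi_{-},\chi_{+}) \in \RPP^2$ for every $(\overline{\upsilon},\overline{\xi})$; this places us in the first regime of Lemma~\ref{le:psibis}, where Conditions~\ref{as:lepsi0}--\ref{as:lepsi2} of Lemma~\ref{le:psi} collapse to the single inequality $\chi_{-} < \chi_{+}$. Multiplying through by the positive quantity $W(e^{1-\gamma^{-1}\overline{\xi}})$ turns this into $W(e^{1-\gamma^{-1}\overline{\upsilon}})\,W(e^{1-\gamma^{-1}\overline{\xi}}) < 1$, the discriminating condition in \eqref{e:proxDxy_JKL}.

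With this in hand, Proposition~\ref{p:psi} gives the result directly: when the condition holds, $\prox_{\gamma\Phi}(\overline{\upsilon},\overline{\xi}) = \big(\overline{\upsilon} - \gamma\,\vartheta_{-}(\widehat{\zeta}),\, \overline{\xi} - \gamma\,\vartheta_{+}(\widehat{\zeta})\big)$ with $\widehat{\zeta}$ the unique minimizer of $\psi$ on $]\chi_{-},\pinf[\,=\,]W(e^{1-\gamma^{-1}\overline{\upsilon}}),\pinf[$, and inserting the expressions for $\vartheta_{\pm}$ reproduces the stated formulas for $\upsilon$ and $\xi$. For the complementary case I would argue via the domain: here $\dom\Phi = \RPP^2 \cup \{(0,0)\}$, and since $\prox_{\gamma\Phi}(\overline{\upsilon},\overline{\xi})$ necessarily lies in $\dom\Phi$ by Proposition~\ref{p:propprox}\ref{p:propproxi}, the equivalence in Proposition~\ref{p:psi} forces the proximal point to equal $(0,0)$ whenever it fails to lie in $\RPP^2$.

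The only genuine obstacle is the correct identification of $\chi_{-}$ and $\chi_{+}$ with Lambert-$W$ values and the clean reduction of $\chi_{-} < \chi_{+}$ to the product inequality; everything else is either a direct citation of Proposition~\ref{p:psi} and Lemma~\ref{le:psibis} or elementary calculus. It is worth noting that, unlike the Kullback--Leibler case, no soft-thresholding branch arises, precisely because the boundary of $\dom\Phi$ reduces to the single point $(0,0)$, so the degenerate case is settled by the membership $\prox_{\gamma\Phi}(\overline{\upsilon},\overline{\xi})\in\dom\Phi$ alone.
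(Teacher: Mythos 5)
Your proposal is correct and follows essentially the same route as the paper's own proof: identifying $\vartheta_{\pm}$ via Remark~\ref{re:sym}, computing $\chi_{-} = W(e^{1-\gamma^{-1}\overline{\upsilon}})$ and $\chi_{+} = \big(W(e^{1-\gamma^{-1}\overline{\xi}})\big)^{-1}$, reducing Conditions~\ref{as:lepsi0}--\ref{as:lepsi2} to $\chi_{-}<\chi_{+}$ via Lemma~\ref{le:psibis}, applying Proposition~\ref{p:psi} for the positive branch, and settling the degenerate case by $\prox_{\gamma\Phi}(\overline{\upsilon},\overline{\xi})\in\dom\Phi = \RPP^2\cup\{(0,0)\}$. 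Your explicit observation that $\vartheta_{-}$ and $\vartheta_{+}$ are surjective onto $\RR$, so that $(\chi_{-},\chi_{+})\in\RPP^2$ always holds, is a point the paper leaves implicit when invoking Lemma~\ref{le:psibis}, but it is the same argument.
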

\iflong
\begin{proof}
	We apply Proposition \ref{p:psi}
	where $\psi$ is given by \eqref{e:psiJK} and, for every $\zeta \in \RPP$, 
	\begin{align}
	&\Theta(\zeta) = \zeta^2 \Big(\frac{3}{4} - \frac{\zeta}{3} - \frac12 \ln \zeta \Big)\\
	&\vartheta_{-}(\zeta) = \vartheta_{+}(\zeta^{-1}) = -\ln \zeta - \zeta +1.
	\label{e:varthetajef}
	\end{align}
	The above equalities have been derived from \eqref{e:phisym} and \eqref{e:varthetasym}.
	It can be deduced from \eqref{e:defchim}, \eqref{e:defchip} and \eqref{e:varthetajef} that
	\begin{align}
	& \chi_- + \ln \chi_- = 1-\gamma^{-1} \overline{\upsilon}\\
	& \chi_+^{-1} + \ln(\chi_+^{-1}) = 1 - \gamma^{-1} \overline{\xi}
	\end{align}
	that is
	\begin{align}
	& \chi_- = W(e^{1-\gamma^{-1} \overline{\upsilon}})\label{eq:chi_minus}\\
	& \chi_+ = \big(W(e^{1 - \gamma^{-1} \overline{\xi}})\big)^{-1}\label{eq:chi_plus}.
	\end{align}
	
	According to Proposition \ref{p:psi},
	$\prox_{\gamma \Phi}(\overline{\upsilon},\overline{\xi})\in \RPP^2$ if and only if
	Conditions~\ref{as:lepsi0}-\ref{as:lepsi2} 
	in Lemma~\ref{le:psi} hold.  
	Lemma~\ref{le:psibis} shows that these conditions are satisfied if and
	only if
	\begin{align}\label{e:ineqproxJK}
	W(e^{1-\gamma^{-1} \overline{\upsilon}})W(e^{1 - \gamma^{-1} \overline{\xi}}) < 1.
	\end{align}
	Under this assumption, the expression of the proximity operator follows from
	Proposition \ref{p:psi} and \eqref{e:varthetajef}.
	
	\noindent We have shown that $\prox_{\gamma \Phi}(\overline{\upsilon},\overline{\xi})\in \RPP^2 \Leftrightarrow$ 
	\eqref{e:ineqproxJK}. Since  $\prox_{\gamma \Phi}(\overline{\upsilon},\overline{\xi}) \in \dom \Phi$, we necessarily get $\prox_{\gamma \Phi}(\overline{\upsilon},\overline{\xi}) = (0,0)$, when \eqref{e:ineqproxJK} is not satisfied.
\end{proof}
\fi

\begin{remark}
To minimize $\psi$ in \eqref{e:psiJK}, we need to find the zero on $[\chi_{-}, \chi_{+}]$ of the function: $(\forall \zeta\in \RPP)$
	\begin{equation}\label{e:derJK}
	\psi'(\zeta)  = (\zeta+1) \ln\zeta + \frac{\zeta}{2} - \zeta^{-1} + \zeta^2 + \Big(\gamma^{-1}\overline{\upsilon}- \frac{3}{2}\Big) \zeta + 1 - \gamma^{-1} \overline{\xi}.
	\end{equation}
	This can be performed by 
	\iflong
	Algorithm \ref{algo:Newton}, the convergence of which is proved in Appendix~\ref{s:algoNewton}.
	
	\begin{minipage}{0.9\linewidth}
		\begin{algorithm}[H]
			\caption{Projected Newton for minimizing \eqref{e:psiJK}.}
			\label{algo:Newton}
			\small
			\vspace{0.5em}
			\textsc{Set} $\widehat{\zeta}_0 \in [\chi_-,\chi_+]$ \quad (see \eqref{eq:chi_minus}--\eqref{eq:chi_plus} for the bound expressions)\\[0.5em]
			\noindent \textsc{For}\; $n = 0, 1, \dots$\\[-0.5em]
			\[
			\left\lfloor
			\begin{aligned}
			&\textrm{	$\widehat{\zeta}_{n+1}=\mathsf{P}_{[\chi_-,\chi_+]}\Big(\widehat{\zeta}_n-\psi'(\widehat{\zeta}_n)/\psi''(\widehat{\zeta}_n)\Big).$}\\
			\end{aligned}
			\right.
			\]
		\end{algorithm}
		\vspace{0.1cm}
	\end{minipage}
	
	\else%
	a projected Newton algorithm.
	\fi

\end{remark}

\begin{remark}
	From a numerical standpoint, to avoid the arithmetic overflow in the exponentiations when $\gamma^{-1} \overline{\upsilon}$ or $\gamma^{-1} \overline{\xi}$ tend to $-\infty$, one can use the asymptotic approximation of the Lambert W function for large values: for every $\tau \in \left[1,+\infty\right[$,
	\begin{equation}
	\tau - \ln \tau + \frac{1}{2} \frac{\ln \tau}{\tau} \le W\big(e^\tau\big) \le
	\tau - \ln \tau + \frac{e}{e-1}\frac{\ln \tau}{\tau},
	\end{equation} 
	with equality only if $\tau=1$ \cite{Hoorfar_inequalities}.
\end{remark}

\subsection{Hellinger divergence}\label{se:DH}
Let us now consider the function of $\Gamma_0(\RR^2)$ given by
\begin{equation}
\Phi(\upsilon,\xi)= \begin{cases}
(\sqrt{\upsilon}-\sqrt{\xi})^2
& \mbox{if $(\upsilon,\xi) \in \RP^2$}\\
\pinf & \mbox{otherwise.}
\end{cases}
\end{equation}
This symmetric function satisfies \eqref{e:perspf} and \eqref{e:phisym} with
\begin{equation}
(\forall \zeta \in \RPP) \qquad \widetilde{\varphi}(\zeta) = \zeta - \sqrt{\zeta}.
\end{equation}

\begin{proposition}\label{p:hell}
	The proximity operator of $\gamma \Phi$ with $\gamma \in \RPP$ is, for every $(\overline{\upsilon},\overline{\xi})\in \RR^2$, 
	\begin{equation}\label{e:proxHel} 
	\prox_{\gamma \Phi}(\overline{\upsilon},\overline{\xi}) = 
	\begin{cases}
	(\upsilon,\xi) &\mbox{if $\overline{\upsilon} \ge \gamma$ or $\left(1-\frac{\overline{\upsilon}}{\gamma}\right) \left(1-\frac{\overline{\xi}}{\gamma}\right) < 1$}\\
	(0,0) & \mbox{otherwise,}
	\end{cases}
	\end{equation}
	where
	\begin{align}
	\upsilon &= \overline{\upsilon} + \gamma ({\rho}-1) \\
	\xi &= \overline{\xi}+\gamma\left({\rho}^{-1}-1\right)
	\end{align}
	and ${\rho}$ is the unique solution on $\left]\max(1-\gamma^{-1}\overline{\upsilon},0),\pinf\right[$ of 
	\begin{equation}\label{e:3rdHel}
	\rho^{4}+ \left(\gamma^{-1}\overline{\upsilon}-1\right)\rho^3 + \left(1-\gamma^{-1}\overline{\xi}\right)\rho-1= 0.
	\end{equation}
\end{proposition}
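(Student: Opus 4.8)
The plan is to specialize the general characterization of Proposition~\ref{p:psi} to the generator $\varphi$ of the present $\Phi$. From \eqref{e:phisym} with $\widetilde\varphi(\zeta)=\zeta-\sqrt\zeta$ one gets $\varphi(\zeta)=\widetilde\varphi(\zeta)+\zeta\widetilde\varphi(\zeta^{-1})=(\sqrt\zeta-1)^2$, whose perspective is indeed $(\sqrt\upsilon-\sqrt\xi)^2$. Since $\varphi'(\zeta)=1-\zeta^{-1/2}$ and $\varphi''(\zeta)=\tfrac12\zeta^{-3/2}>0$, the function $\varphi$ is strictly convex and twice differentiable on $\RPP$, so Lemma~\ref{le:psi} and Proposition~\ref{p:psi} apply. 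First I would record the auxiliary quantities: using \eqref{e:defvarthetam}--\eqref{e:defvarthetap} (or equivalently \eqref{e:varthetasym}), $\vartheta_-(\zeta)=\varphi'(\zeta^{-1})=1-\sqrt\zeta$ and $\vartheta_+(\zeta)=1-\zeta^{-1/2}$, while a primitive of $\zeta\mapsto\zeta\varphi'(\zeta^{-1})=\zeta-\zeta^{3/2}$ is $\Theta(\zeta)=\tfrac{\zeta^2}{2}-\tfrac25\zeta^{5/2}$. Substituting these into \eqref{e:derpsi} gives $\psi'(\zeta)=\gamma^{-1}\overline\upsilon\,\zeta-\zeta+\zeta^{3/2}+1-\zeta^{-1/2}-\gamma^{-1}\overline\xi$.

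Next I would compute the two thresholds from \eqref{e:defchim}--\eqref{e:defchip}. The monotonicity of $\vartheta_-$ and $\vartheta_+$ gives $\sqrt{\chi_-}=\max(1-\gamma^{-1}\overline\upsilon,0)$ and $\chi_+=(1-\gamma^{-1}\overline\xi)^{-2}$ when $\overline\xi<\gamma$, with $\chi_+=\pinf$ otherwise; in particular $\chi_-\neq\pinf$ and $\chi_+\neq-\infty$ always, so Condition~\ref{as:lepsi0} of Lemma~\ref{le:psi} holds. To handle Conditions~\ref{as:lepsi1}--\ref{as:lepsi2} I would invoke Lemma~\ref{le:psibis}, after noting the boundary limits $\lim_{\zeta\to0^+}\psi'(\zeta)=-\infty$ and $\lim_{\zeta\to\pinf}\psi'(\zeta)=+\infty$, which make Conditions~\ref{as:lepsi1} and \ref{as:lepsi2} automatic whenever $\chi_-=0$ or $\chi_+=\pinf$ (this also settles the corner $\chi_-=0,\chi_+=\pinf$ left open by Lemma~\ref{le:psibis}). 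When both thresholds are finite and positive, Lemma~\ref{le:psibis} reduces the conditions to $\chi_-<\chi_+$, i.e.\ $(1-\gamma^{-1}\overline\upsilon)^2<(1-\gamma^{-1}\overline\xi)^{-2}$, which—both factors being then positive—is exactly the product bound in the statement. A short case analysis then shows that the union of all these situations is precisely $\overline\upsilon\ge\gamma$ or $(1-\gamma^{-1}\overline\upsilon)(1-\gamma^{-1}\overline\xi)<1$, and by Proposition~\ref{p:psi} this is the condition for $\prox_{\gamma\Phi}(\overline\upsilon,\overline\xi)\in\RPP^2$.

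To obtain the stated formulas I would set $\rho=\sqrt{\widehat\zeta}$, where $\widehat\zeta$ is the unique minimizer of $\psi$ on $]\chi_-,\pinf[$ furnished by Lemma~\ref{le:psi}. Since $\zeta\mapsto\sqrt\zeta$ is an increasing bijection of $]\chi_-,\pinf[$ onto $]\max(1-\gamma^{-1}\overline\upsilon,0),\pinf[$, the point $\rho$ is the unique solution there of the equation obtained by multiplying $\psi'(\widehat\zeta)=0$ by $\rho>0$, namely $\rho^4+(\gamma^{-1}\overline\upsilon-1)\rho^3+(1-\gamma^{-1}\overline\xi)\rho-1=0$, which is \eqref{e:3rdHel}. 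Plugging $\widehat\zeta=\rho^2$ into \eqref{e:thetap12} and using $\vartheta_-(\widehat\zeta)=1-\rho$ and $\vartheta_+(\widehat\zeta)=1-\rho^{-1}$ then yields $\upsilon=\overline\upsilon+\gamma(\rho-1)$ and $\xi=\overline\xi+\gamma(\rho^{-1}-1)$.

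The delicate point—and the step I expect to be the main obstacle—is to justify that $\prox_{\gamma\Phi}(\overline\upsilon,\overline\xi)=(0,0)$ whenever the threshold condition fails. In contrast with the Jeffreys case, here $\dom\Phi=\RP^2$ contains the full nonnegative axes, so ruling out $\RPP^2$ only guarantees that at least one coordinate vanishes. I would eliminate the spurious axis solutions through the optimality condition of Proposition~\ref{p:propprox}\ref{p:propproxii}, which forces $\prox_{\gamma\Phi}(\overline\upsilon,\overline\xi)\in\dom\partial\Phi$. At a point $(0,\xi_0)$ with $\xi_0>0$ the inward directional derivative of $\Phi$ along $(1,0)$ equals $\lim_{t\to0^+}t^{-1}\big((\sqrt t-\sqrt{\xi_0})^2-\xi_0\big)=-\infty$, so no subgradient can exist and $\partial\Phi(0,\xi_0)=\emp$; by symmetry $\partial\Phi(\upsilon_0,0)=\emp$ for $\upsilon_0>0$, whereas $(0,0)\in\partial\Phi(0,0)$. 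Hence $\dom\partial\Phi=\RPP^2\cup\{(0,0)\}$, and failure of the condition leaves $(0,0)$ as the only possibility, which completes the proof.
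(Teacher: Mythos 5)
Your proposal is correct, and its core follows the same route as the paper's own proof: both specialize Proposition~\ref{p:psi} through Remark~\ref{re:sym}, compute $\vartheta_{-}(\zeta)=\vartheta_{+}(\zeta^{-1})=1-\sqrt{\zeta}$, the thresholds $\chi_{-}$ and $\chi_{+}$, use Lemma~\ref{le:psibis} together with the limits $\lim_{\zeta\to 0^+}\psi'(\zeta)=-\infty$ and $\lim_{\zeta\to\pinf}\psi'(\zeta)=\pinf$ to reduce Conditions \ref{as:lepsi1}--\ref{as:lepsi2} to the stated inequality, and obtain the quartic \eqref{e:3rdHel} via $\rho=\widehat{\zeta}^{1/2}$. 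Where you genuinely diverge is the step you yourself flagged as delicate: the case where the threshold condition fails. The paper argues that then at least one coordinate of the prox vanishes, restricts the minimization to the corresponding axis slice so that the other coordinate is given by the asymmetric soft-thresholding rules \eqref{e:threshxi} and \eqref{e:upsiloncp}, and concludes $(0,0)$ because failure of \eqref{e:ineqHel} forces $\overline{\upsilon}<\gamma$ and $\overline{\xi}<\gamma$, which makes both thresholds output zero. You instead prove that $\partial\Phi$ is empty at every point with exactly one zero coordinate (via the $-\infty$ inward directional derivative), hence $\dom\partial\Phi=\RPP^2\cup\{(0,0)\}$, and Proposition~\ref{p:propprox}\ref{p:propproxii} then rules out the open axes for \emph{every} input, leaving $(0,0)$ as the only candidate. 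Your argument is more conceptual and avoids any case analysis on $(\overline{\upsilon},\overline{\xi})$: it explains structurally why the prox never lands on the open axes, a fact the paper's slice-wise computation only establishes indirectly. The paper's argument, in exchange, stays entirely elementary, reusing the one-dimensional soft-thresholding formula already derived in the Kullback-Leibler section, at the cost of the extra observation that condition failure confines the input to the region where both thresholds vanish. Both proofs are complete and correct.
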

\iflong
\vspace{0.5em}
\begin{proof}
	For  every $(\overline{\upsilon},\overline{\xi})\in \RR^2$,
	$(\upsilon,\xi) = \prox_{\gamma
		\Phi}(\overline{\upsilon},\overline{\xi})$
	is such that $(\upsilon,\xi)\in \RP^2$ \cite{Moreau_JJ_1962_cras_Fonctions_cdeppdueh}.
	By using the notation of Proposition \ref{p:psi} and by using Remark \ref{re:sym}, we have that, for every $\zeta \in \RPP$,
	\begin{align}
	&\Theta(\zeta) = \frac{\zeta^2}{2} - \frac{2}{5} \zeta^{5/2}+1\\
	&\vartheta_{-}(\zeta) = \vartheta_{+}(\zeta^{-1}) = 1-\sqrt{\zeta}
	\end{align}
	and
	\begin{align}
	&\chi_{-} = 
	\begin{cases}
	(1-\gamma^{-1}\overline{\upsilon})^2
	& \mbox{if $\overline{\upsilon} < \gamma$}\\
	0 & \mbox{otherwise}
	\end{cases}
	\\\nonumber\\[-0.5em]
	&\chi_{+} = 
	\begin{cases}
	(1-\gamma^{-1}\overline{\xi})^{-2}& \mbox{if $\overline{\xi} < \gamma$}\\
	\pinf & \mbox{otherwise.}
	\end{cases}
	\end{align}
	According to Proposition \ref{p:psi},
	$(\upsilon,\xi) \in \RPP^2$ if and only if Conditions~\ref{as:lepsi0}-\ref{as:lepsi2} in Lemma \ref{le:psi} hold. Under these conditions, Proposition \ref{p:psi} leads to
	\begin{align}
	&\upsilon = \overline{\upsilon} + \gamma (\widehat{\zeta}^{1/2}-1)\\
	&\xi = \overline{\xi}+\gamma\left(\widehat{\zeta}^{-1/2}-1\right) 
	\end{align}
	where $\widehat{\zeta}$ is the unique minimizer on
	$]\chi_{-},\pinf[$ of the function defined as, for every $\zeta \in \RPP$, 
	\begin{equation}\label{e:psiHel}
	\psi(\zeta) =
	\frac{2}{5} \zeta^{5/2} - 2\zeta^{1/2} +
	\frac{\gamma^{-1}\overline{\upsilon}-1}{2}
	\zeta^2 + (1-\gamma^{-1}\overline{\xi})\zeta.
	\end{equation}
	This means that $\widehat{\zeta}$ is the unique solution on
	$]\chi_{-},\pinf[$ of the equation:
	\begin{equation}
	\psi'(\widehat{\zeta}) = \widehat{\zeta}^{3/2} - \widehat{\zeta}^{-1/2} +
	(\gamma^{-1}\overline{\upsilon}-1)
	\widehat{\zeta} + 1-\gamma^{-1}\overline{\xi}= 0.
	\end{equation}
	By setting $\rho = \widehat{\zeta}^{1/2}$, \eqref{e:3rdHel} is obtained.

	Since 
	$\lim_{\substack{\zeta \to 0\\ \zeta > 0}} \psi'(\zeta) = -\infty$
	and $\lim_{\zeta\to \pinf} \psi'(\zeta) = \pinf$, 
	Lemma \ref{le:psibis} shows that
	Conditions~\ref{as:lepsi0}-\ref{as:lepsi2} are satisfied if and only if
	\begin{align}\label{e:ineqHelbefore}
	&\overline{\upsilon} < \gamma,\;\; \overline{\xi} < \gamma, \;\;\text{and}\;\;
	(1-\gamma^{-1} \overline{\upsilon})^2
	< (1-\gamma^{-1}\overline{\xi})^{-2}\nonumber\\
	\text{or}\;\;& \overline{\upsilon} < \gamma\;\; \text{and}\;\; \overline{\xi}
	\ge \gamma\nonumber\\
	\text{or}\;\;& \overline{\upsilon} \ge \gamma \;\; \text{and}\;\; \overline{\xi}
	< \gamma\nonumber\\
	\text{or}\;\;& \overline{\upsilon} \ge \gamma \;\; \text{and}\;\; \overline{\xi}
	\ge \gamma
	\end{align}
	or, equivalently
	\begin{align}\label{e:ineqHel}
	&\overline{\upsilon} < \gamma\;\;\text{and}\;\;
	(1-\gamma^{-1}\overline{\upsilon}) 
	(1-\gamma^{-1}\overline{\xi}) < 1\nonumber\\
	\text{or}\;\;& \overline{\upsilon} \ge \gamma.
	\end{align}

	In turn, when \eqref{e:ineqHel} is not satisfied, we necessarily have $\upsilon
	= 0$ or $\xi = 0$. In the first case, the expression of 
	$\xi$ is simply given by the asymmetric soft-thresholding rule
	in \eqref{e:threshxi}.
	Similarly, in the second case, we have
	\begin{equation}\label{e:upsiloncp}
	\upsilon = \begin{cases}
	\overline{\upsilon} - \gamma & \mbox{if $\overline{\upsilon} > \gamma$}\\
	0 & \mbox{otherwise.}
	\end{cases}
	\end{equation}
	However, when $\overline{\upsilon} > \gamma$ or $\overline{\xi} >
	\gamma$, \eqref{e:ineqHelbefore} is always satisfied, so that 
	$\upsilon = \xi = 0$.
	
	Altogether, the above results yield the expression of the proximity
	operator in \eqref{e:proxHel}.
\end{proof}
\fi

\subsection{Chi square divergence} \label{se:DCS}

Let us now consider the function of $\Gamma_0(\RR^2)$ given by
\begin{equation}
\Phi(\upsilon,\xi)=
\begin{cases}
\displaystyle \frac{(\upsilon-\xi)^2}{\xi}
& \mbox{if $\upsilon\in \RP$ and $\xi \in \RPP$}\\
0 & \mbox{if $\upsilon = \xi = 0$}\\
\pinf & \mbox{otherwise.}
\end{cases}
\end{equation}
This function satisfies \eqref{e:perspf} with 
\begin{equation}
(\forall \zeta \in \RPP)\qquad \varphi(\zeta) = (\zeta-1)^2.
\end{equation}

\newcommand{\chisquarecond}{$\overline{\xi}> - \left(\overline{\upsilon}+\dfrac{\overline{\upsilon}^2}{4\gamma}\right)$}

\begin{proposition}\label{p:chisquare}
	The proximity operator of $\gamma \Phi$ with $\gamma \in \RPP$ is, for every $(\overline{\upsilon},\overline{\xi})\in \RR^2$,
	\begin{equation}\label{e:proxChi} 
	\prox_{\gamma \Phi}(\overline{\upsilon},\overline{\xi}) = 
	\begin{cases}
	(\upsilon,\xi) &\mbox{if $\overline{\upsilon} > -2\gamma$ and \ \iflong\chisquarecond\fi} \\
	\iflong\else&\mbox{\chisquarecond}\\\fi
	\big(0,\max\{\overline{\xi}-\gamma,0\}\big) & \mbox{otherwise},
	\end{cases}
	\end{equation}
	where
	\begin{align}
	\upsilon &= \overline{\upsilon} + 2 \gamma(1-{\rho})\\
	\xi &= \overline{\xi}+\gamma({\rho}^2-1) 
	\end{align}
	and ${\rho}$ is the unique solution on $]0,1+\gamma^{-1}{\overline{\upsilon}}/{2}[$ of
	\begin{equation}\label{e:3rdchi}
	\rho^3+\left(1+\gamma^{-1}\overline{\xi}\right) \rho - \gamma^{-1}\overline{\upsilon} -2 = 0.
	\end{equation}
\end{proposition}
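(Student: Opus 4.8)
The plan is to specialize the machinery of Proposition~\ref{p:psi} to $\varphi(\zeta) = (\zeta-1)^2$, exactly as was done for the Kullback--Leibler and Hellinger divergences. First I would record $\varphi'(\zeta) = 2(\zeta-1)$ and $\varphi''(\zeta) = 2 > 0$ (so $\varphi$ is strictly convex, as required), and then read off from \eqref{e:defvarthetam}--\eqref{e:defvarthetap} the two auxiliary functions $\vartheta_{-}(\zeta) = 2(\zeta^{-1}-1)$ and $\vartheta_{+}(\zeta) = 1-\zeta^{-2}$. Since $\zeta\mapsto\zeta\,\varphi'(\zeta^{-1}) = 2(1-\zeta)$ admits the primitive $\Theta(\zeta) = 2\zeta-\zeta^2$, the function $\psi$ of Lemma~\ref{le:psi} has derivative $\psi'(\zeta) = -\zeta^{-2} + (2+\gamma^{-1}\overline{\upsilon})\,\zeta - (1+\gamma^{-1}\overline{\xi})$ on $\RPP$.

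Next I would determine the thresholds $\chi_{-}$ and $\chi_{+}$. Using that $\vartheta_{-}$ is decreasing and $\vartheta_{+}$ is increasing, solving the inequalities \eqref{e:defchim}--\eqref{e:defchip} gives $\chi_{-} = (1+\gamma^{-1}\overline{\upsilon}/2)^{-1}$ when $\overline{\upsilon} > -2\gamma$ (and $\chi_{-} = \pinf$ otherwise), together with $\chi_{+} = (1-\gamma^{-1}\overline{\xi})^{-1/2}$ when $\overline{\xi} < \gamma$ (and $\chi_{+} = \pinf$ otherwise). By Proposition~\ref{p:psi}, $\prox_{\gamma\Phi}(\overline{\upsilon},\overline{\xi}) \in \RPP^2$ if and only if Conditions~\ref{as:lepsi0}--\ref{as:lepsi2} of Lemma~\ref{le:psi} hold, and Condition~\ref{as:lepsi0} is here equivalent to $\overline{\upsilon} > -2\gamma$. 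I would then invoke Lemma~\ref{le:psibis}: for $\overline{\xi} < \gamma$ the two remaining conditions reduce to $\chi_{-} < \chi_{+}$, which upon squaring is exactly $\overline{\xi} > -(\overline{\upsilon}+\overline{\upsilon}^2/(4\gamma))$; for $\overline{\xi} \ge \gamma$ one has $\chi_{+} = \pinf$ and $\lim_{\zeta\to\pinf}\psi'(\zeta) = \pinf > 0$, so the conditions hold automatically, while the inequality $\overline{\xi} > -(\overline{\upsilon}+\overline{\upsilon}^2/(4\gamma))$ is then also automatic because $\overline{\upsilon}+\overline{\upsilon}^2/(4\gamma) \ge -\gamma$ with equality only at $\overline{\upsilon} = -2\gamma$. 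Collecting the cases shows that the positivity region is precisely $\overline{\upsilon} > -2\gamma$ together with the displayed inequality.

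On that region, Proposition~\ref{p:psi} yields $\upsilon = \overline{\upsilon} - \gamma\,\vartheta_{-}(\widehat{\zeta})$ and $\xi = \overline{\xi} - \gamma\,\vartheta_{+}(\widehat{\zeta})$; substituting $\rho = \widehat{\zeta}^{-1} = \upsilon/\xi$ turns these into $\upsilon = \overline{\upsilon} + 2\gamma(1-\rho)$ and $\xi = \overline{\xi} + \gamma(\rho^2-1)$. The optimality condition $\psi'(\widehat{\zeta}) = 0$, multiplied by $-\rho$, becomes the cubic \eqref{e:3rdchi}, and the strict convexity of $\psi$ on $]\chi_{-},\pinf[$ transfers through the decreasing bijection $\zeta\mapsto\zeta^{-1}$ into uniqueness of the root $\rho$ on $]0,1+\gamma^{-1}\overline{\upsilon}/2[ \,=\, ]0,\chi_{-}^{-1}[$.

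It remains to treat the complementary region. As $\prox_{\gamma\Phi}(\overline{\upsilon},\overline{\xi})$ belongs to $\dom\Phi$ and any point of $\dom\Phi$ with $\upsilon > 0$ satisfies $\xi > 0$, failure of Conditions~\ref{as:lepsi0}--\ref{as:lepsi2} forces $\upsilon = 0$. The $\xi$-coordinate then minimizes the one-variable restriction $\gamma\,\Phi(0,\cdot) + \tfrac12(\cdot-\overline{\xi})^2$, and since $\Phi(0,\xi) = \xi$ for $\xi\in\RP$, this is the asymmetric soft-thresholding of \eqref{e:threshxi}, giving $\xi = \max\{\overline{\xi}-\gamma,0\}$ and hence $(0,\max\{\overline{\xi}-\gamma,0\})$. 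The step I expect to be most delicate is the case analysis of the second paragraph: pinning down $\chi_{-}$ and $\chi_{+}$ across the sign regimes of $\overline{\upsilon}+2\gamma$ and $\overline{\xi}-\gamma$, and checking that the several branches of Lemma~\ref{le:psibis} collapse into the single inequality of the statement, with the strictness at $\overline{\upsilon}=-2\gamma$ handled correctly; the remaining algebra (the substitution $\rho=\widehat{\zeta}^{-1}$ and the cubic) is routine.
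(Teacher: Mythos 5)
Your proposal is correct and follows essentially the same route as the paper's proof: specializing Proposition~\ref{p:psi} and Lemmas~\ref{le:psi}--\ref{le:psibis} to $\varphi(\zeta)=(\zeta-1)^2$, computing the same $\Theta$, $\vartheta_{-}$, $\vartheta_{+}$, $\chi_{-}$, $\chi_{+}$, obtaining the cubic \eqref{e:3rdchi} via the substitution $\rho=\widehat{\zeta}^{-1}$, and handling the complementary region by forcing $\upsilon=0$ and applying the asymmetric soft-thresholding rule \eqref{e:threshxi}. The only difference is cosmetic: you spell out explicitly the collapse of the two branches ($\overline{\xi}<\gamma$ versus $\overline{\xi}\ge\gamma$) into the single inequality of the statement, a step the paper leaves implicit after its equation \eqref{e:ineqchi}.
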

\iflong
\begin{proof}
	By proceeding similarly to the proof of Proposition~\ref{p:hell}, we have that, for every $\zeta \in \RPP$, $
	\Theta(\zeta) = 2\zeta-\zeta^2$, $\vartheta_{-}(\zeta) = 2(\zeta ^{-1}-1)$, $\vartheta_{+}(\zeta) =  1 - \zeta^{-2}$, and
	\begin{align}
	&\chi_{-} = 
	\begin{cases}
	2\big(2+\gamma^{-1}\overline{\upsilon}\big)^{-1}
	& \mbox{if $\overline{\upsilon} > -2\gamma$}\\
	\pinf & \mbox{otherwise}
	\end{cases}
	\\
	&\chi_{+} = 
	\begin{cases}
	\big(1-\gamma^{-1} \overline{\xi}\big)^{-1/2}
	& \mbox{if $\overline{\xi} < \gamma$}\\
	\pinf & \mbox{otherwise.}
	\end{cases}
	\end{align}
	According to Proposition \ref{p:psi},
	$(\upsilon,\xi) \in \RPP^2$ if and only if
	Conditions~\ref{as:lepsi0}-\ref{as:lepsi2} in Lemma \ref{le:psi}
	hold. Then,
	$(\upsilon,\xi) = \prox_{\gamma
		\Phi}(\overline{\upsilon},\overline{\xi})$ is such that
	$\upsilon = \overline{\upsilon} + 2 \gamma(1-\widehat{\zeta}^{-1})$ and $\xi = \overline{\xi}+\gamma(\widehat{\zeta}^{-2}-1)$, 
	where $\widehat{\zeta}$ is the unique minimizer on
	$]\chi_{-},\pinf[$ of the function:
	\begin{equation*}\label{e:psiChi}
	\psi\colon \RPP \to \RR\colon \zeta \mapsto
	\Big(1+\frac{\gamma^{-1}\overline{\upsilon}}{2}\Big) \zeta^2 - (1+\gamma^{-1}\overline{\xi})\zeta -2 + \zeta^{-1}.
	\end{equation*}
	Thus, $\widehat{\zeta}$ is the unique solution on
	$]\chi_{-},\pinf[$ of the equation:
	\begin{equation}
	\psi'(\widehat{\zeta}) = (2+\gamma^{-1}\overline{\upsilon}) \widehat{\zeta} - 1-\gamma^{-1}\overline{\xi} -\widehat{\zeta}^{-2} = 0.
	\end{equation}
	By setting $\rho = \widehat{\zeta}^{-1}$, \eqref{e:3rdchi} is obtained. Lemma \ref{le:psibis} shows that Conditions~\ref{as:lepsi1} and
	\ref{as:lepsi2} are satisfied if and only if
	\begin{align}
	&\overline{\upsilon} > -2\gamma,\;\; \overline{\xi} < \gamma, \;\;\text{and}\;\;
	\frac{2}{2+\gamma^{-1}\overline{\upsilon}} <
	\frac{1}{\sqrt{1-\gamma^{-1} \overline{\xi}}} 
	\nonumber\\
	\text{or}\;\;& \overline{\upsilon} > -2\gamma\;\; \text{and}\;\; 
	\overline{\xi} \ge \gamma
	\end{align}
	or, equivalently,
	\begin{align}\label{e:ineqchi}
	&\overline{\upsilon} > -2\gamma,\;\; \overline{\xi} < \gamma,  
	\;\;\text{and}\;\;
	1-{\gamma}^{-1}{\overline{\xi}} <
	\left(1+{(2\gamma)}^{-1}{\overline{\upsilon}}\right)^2
	\nonumber\\
	\text{or}\;\;& \overline{\upsilon} > -2\gamma\;\; \text{and}\;\; 
	\overline{\xi} \ge \gamma.
	\end{align}

	When \eqref{e:ineqchi} does not hold, we necessarily have $\upsilon
	= 0$. The end of the proof is similar to that of Proposition \ref{p:hell}.
\end{proof}
\fi

\subsection{Renyi divergence} \label{se:DR}

Let $\alpha \in ]1,\pinf[$ and consider the below function of $\Gamma_0(\RR^2)$
\begin{equation}
\Phi(\upsilon,\xi)=
\left\{
\begin{aligned}
&\dfrac{{\upsilon}^\alpha}{\xi^{\alpha-1}}
&& \mbox{if $\upsilon\in \RP$ and $\xi \in \RPP$}\\
&0 && \mbox{if $\upsilon = \xi = 0$}\\
&\pinf && \mbox{otherwise,}
\end{aligned}
\right.
\end{equation}
which corresponds to the case when
\begin{equation}
(\forall \zeta \in \RPP)\qquad \varphi(\zeta) = \zeta^\alpha.
\end{equation}
Note that the above function $\Phi$ allows us to generate the R\'enyi divergence
up to a log transform and a multiplicative constant.

\newcommand{\renyicond}{$\dfrac{\gamma^{\frac{1}{\alpha-1}}\overline{\xi}}{1-\alpha} <
	\left(\dfrac{\overline{\upsilon}}{\alpha}\right)^{\frac{\alpha}{\alpha-1}}$}

\begin{proposition}
	The proximity operator of $\gamma \Phi$ with $\gamma \in \RPP$ is, for every $(\overline{\upsilon},\overline{\xi})\in \RR^2$,
	\begin{equation}\label{e:proxReni}
	\prox_{\gamma \Phi}(\overline{\upsilon},\overline{\xi}) =
	\left\{
	\begin{aligned}
	&(\upsilon,\xi) &&\mbox{if $\overline{\upsilon}>0$ and \ \iflong\renyicond\fi} \\
	\iflong\else&&& \mbox{\renyicond}\\\fi
	&\big(0,\max\{\overline{\xi},0\}\big) && \mbox{otherwise},
	\end{aligned}
	\right.
	\end{equation}
	where
	\begin{align}
	\upsilon &= \overline{\upsilon} - \gamma \alpha \widehat{\zeta}^{1-\alpha}
	\\
	\xi &= \overline{\xi}+ \gamma (\alpha-1) \widehat{\zeta}^{-\alpha}
	\end{align}
	and $\widehat{\zeta}$ is the unique solution on
	$](\alpha \gamma
	\overline{\upsilon}^{-1})^{\frac{1}{\alpha-1}},\pinf[$ 
	of 
	\begin{equation}\label{e:opteqRen}
	\gamma^{-1}\overline{\upsilon}\; \widehat{\zeta}^{1+\alpha} -
	\gamma^{-1}\overline{\xi}\, \widehat{\zeta}^{\alpha}
	- \alpha \widehat{\zeta}^{2} + 1-\alpha = 0.
	\end{equation}
\end{proposition}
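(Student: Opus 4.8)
The plan is to specialize the general characterization of Proposition~\ref{p:psi} to the generating function $\varphi(\zeta)=\zeta^{\alpha}$, exactly as was done for the Kullback--Leibler, Hellinger, and Chi-square divergences. Since $\varphi'(\zeta)=\alpha\zeta^{\alpha-1}$, the two functions in \eqref{e:defvarthetam}--\eqref{e:defvarthetap} are, for every $\zeta\in\RPP$,
\begin{equation}
\vartheta_{-}(\zeta)=\alpha\,\zeta^{1-\alpha},\qquad
\vartheta_{+}(\zeta)=(1-\alpha)\,\zeta^{-\alpha}.
\end{equation}
As $\alpha>1$, the function $\vartheta_{-}$ is positive and decreasing with range $\RPP$, while $\vartheta_{+}$ is negative and increasing with range $]-\infty,0[$, consistently with the monotonicity shown in Lemma~\ref{le:psi}. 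Solving $\vartheta_{-}(\zeta)=\gamma^{-1}\overline{\upsilon}$ and $\vartheta_{+}(\zeta)=\gamma^{-1}\overline{\xi}$ in \eqref{e:defchim}--\eqref{e:defchip} then yields $\chi_{-}=(\alpha\gamma\,\overline{\upsilon}^{-1})^{1/(\alpha-1)}$ if $\overline{\upsilon}>0$ (and $\chi_{-}=\pinf$ otherwise), together with $\chi_{+}=\big((\alpha-1)\gamma(-\overline{\xi})^{-1}\big)^{1/\alpha}$ if $\overline{\xi}<0$ (and $\chi_{+}=\pinf$ otherwise).

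Next I would extract the optimality equation. By \eqref{e:derpsi},
\begin{equation}
\psi'(\zeta)=\gamma^{-1}\overline{\upsilon}\,\zeta-\alpha\,\zeta^{2-\alpha}+(1-\alpha)\,\zeta^{-\alpha}-\gamma^{-1}\overline{\xi},
\end{equation}
and Lemma~\ref{le:psi} provides a unique minimizer $\widehat{\zeta}\in\,]\chi_{-},\chi_{+}[$, characterized as the unique zero of $\psi'$ on $]\chi_{-},\pinf[$. Multiplying $\psi'(\widehat{\zeta})=0$ by $\widehat{\zeta}^{\alpha}$ produces exactly \eqref{e:opteqRen}, while $\chi_{-}=(\alpha\gamma\,\overline{\upsilon}^{-1})^{1/(\alpha-1)}$ matches the stated search interval. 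The closed forms $\upsilon=\overline{\upsilon}-\gamma\alpha\widehat{\zeta}^{1-\alpha}$ and $\xi=\overline{\xi}+\gamma(\alpha-1)\widehat{\zeta}^{-\alpha}$ are then read off from \eqref{e:thetap12}, their positivity being guaranteed by \eqref{e:interchim}--\eqref{e:interchip}.

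It remains to determine when $\prox_{\gamma\Phi}(\overline{\upsilon},\overline{\xi})\in\RPP^2$. By Proposition~\ref{p:psi} this happens iff Conditions~\ref{as:lepsi0}-\ref{as:lepsi2} of Lemma~\ref{le:psi} hold. Condition~\ref{as:lepsi0} forces $\chi_{-}\neq\pinf$, i.e.\ $\overline{\upsilon}>0$ (note $\chi_{+}\neq-\infty$ always, since $\vartheta_{+}(\zeta)\to-\infty$ as $\zeta\to0^{+}$ makes the defining set nonempty). Assuming $\overline{\upsilon}>0$, I would split on the sign of $\overline{\xi}$ and apply Lemma~\ref{le:psibis}: if $\overline{\xi}\ge0$ then $\chi_{+}=\pinf$ and the remaining conditions reduce to $\lim_{\zeta\to\pinf}\psi'(\zeta)>0$, which holds because the linear term $\gamma^{-1}\overline{\upsilon}\zeta$ dominates; if $\overline{\xi}<0$ then $(\chi_{-},\chi_{+})\in\RPP^2$ and the conditions are equivalent to $\chi_{-}<\chi_{+}$. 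The hard part is the bookkeeping that merges these two regimes into the single stated inequality. Raising $\chi_{-}<\chi_{+}$ to the power $\alpha(\alpha-1)>0$ gives
\begin{equation}
\Big(\frac{\alpha\gamma}{\overline{\upsilon}}\Big)^{\alpha}<\Big(\frac{(\alpha-1)\gamma}{-\overline{\xi}}\Big)^{\alpha-1},
\end{equation}
which rearranges to $\gamma(-\overline{\xi})^{\alpha-1}(\alpha-1)^{-(\alpha-1)}<\overline{\upsilon}^{\alpha}\alpha^{-\alpha}$; raising the stated inequality to the power $\alpha-1$ and using $\overline{\xi}/(1-\alpha)=(-\overline{\xi})/(\alpha-1)$ produces the same relation. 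For $\overline{\xi}\ge0$ the left-hand side of the stated inequality is $\le0$ while the right-hand side is positive, so it holds automatically; this unifies both cases.

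Finally, when the condition fails, Proposition~\ref{p:psi} gives $\prox_{\gamma\Phi}(\overline{\upsilon},\overline{\xi})\notin\RPP^2$. Since this point lies in $\dom\Phi=(\RP\times\RPP)\cup\{(0,0)\}$ and has nonnegative first coordinate, necessarily $\upsilon=0$. On the slice $\upsilon=0$ one has $\Phi(0,\cdot)=\iota_{\RP}$, so the global minimizer forces $\xi=\prox_{\gamma\iota_{\RP}}(\overline{\xi})=\max\{\overline{\xi},0\}$, yielding $(0,\max\{\overline{\xi},0\})$. I expect the only genuine difficulty to be the sign-sensitive algebra of the third paragraph; everything else is a routine specialization of Proposition~\ref{p:psi} and Lemma~\ref{le:psibis}.
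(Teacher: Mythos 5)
Your proposal is correct and follows essentially the same route as the paper: the paper's own proof simply lists the specialized quantities $\vartheta_{-}(\zeta)=\alpha\zeta^{1-\alpha}$, $\vartheta_{+}(\zeta)=(1-\alpha)\zeta^{-\alpha}$, $\psi'$, $\chi_{-}$, $\chi_{+}$ and invokes the pattern of the previous examples, which is exactly your specialization of Proposition~\ref{p:psi} and Lemmas~\ref{le:psi}--\ref{le:psibis} to $\varphi(\zeta)=\zeta^{\alpha}$. You additionally spell out the steps the paper leaves implicit (the equivalence of $\chi_{-}<\chi_{+}$ with the stated inequality, the limit check at $\pinf$, and the degenerate case giving $\big(0,\max\{\overline{\xi},0\}\big)$ via $\Phi(0,\cdot)=\iota_{\RP}$), all of which are carried out correctly.
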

\iflong
\begin{proof}
	We proceed similarly to the previous examples by noticing that, for every $\zeta \in \RPP$,
	\begin{align}
	\Theta(\zeta) &= \begin{cases}
	\frac{\alpha}{3-\alpha} \zeta^{3-\alpha} & \mbox{if $\alpha \neq 3$}\\
	\alpha \ln \zeta & \mbox{if $\alpha = 3$}
	\end{cases}\\
	\vartheta_{-}(\zeta)& = \alpha \zeta^{1-\alpha}\\
	\vartheta_{+}(\zeta) &= (1-\alpha) \zeta^{-\alpha}\\
	\psi'(\zeta) & = (1-\alpha) \zeta^{-\alpha} - \alpha \zeta^{2-\alpha}
	+ \gamma^{-1} \overline{\upsilon}\zeta - \gamma^{-1} \overline{\xi}
	\end{align}
	and
	\begin{align}
	&\chi_{-} = 
	\begin{cases}
	\displaystyle 
	\Big(\frac{\gamma \alpha}{\overline{\upsilon}}\Big)^{\frac{1}{\alpha-1}}
	& \mbox{if $\overline{\upsilon} > 0$}\\
	\pinf & \mbox{otherwise}
	\end{cases}
	\\
	&\chi_{+} = 
	\begin{cases}
	\displaystyle\Big(\frac{\gamma(1-\alpha)}{\overline{\xi}}\Big)^{1/\alpha}
	& \mbox{if $\overline{\xi} < 0$}\\
	\pinf & \mbox{otherwise.}
	\end{cases}
	\end{align}
\end{proof}
\fi

Note that \eqref{e:opteqRen} becomes a polynomial equation when $\alpha$ is
a rational number. In particular, when $\alpha = 2$, it reduces to
the cubic equation:
\begin{equation}
\rho^3+\left(2+\gamma^{-1}\overline{\xi}\right)\rho - \gamma^{-1} \overline{\upsilon} = 0
\end{equation}
with $\widehat{\zeta} = \rho^{-1}$.

\subsection{$\mathrm{I}_\alpha$ divergence} \label{se:DIa}

Let $\alpha \in ]0,1[$ and consider the function of $\Gamma_0(\RR^2)$ given by
\begin{equation}
\Phi(\upsilon,\xi)=
\left\{
\begin{aligned}
&\alpha \upsilon+(1-\alpha) \xi - \upsilon^\alpha\xi^{1-\alpha}
&& \mbox{if $(\upsilon,\xi)\in \RP^2$}\\
&\pinf && \mbox{otherwise}
\end{aligned}
\right.
\end{equation}
which corresponds to the case when
\begin{equation}
(\forall \zeta \in \RPP)\qquad \varphi(\zeta) = 1-\alpha+\alpha \zeta
-\zeta^\alpha.
\end{equation}

\newcommand{\Iacond}{$\Big.1-\dfrac{\overline{\xi}}{\gamma(1-\alpha)}<\left(1-\dfrac{\overline{\upsilon}}{\gamma\alpha}\right)^{\frac{\alpha}{\alpha-1}}$}

\begin{proposition}\label{prop:proxIalpha}
	The proximity operator of $\gamma \Phi$ with $\gamma \in \RPP$ is, for every $(\overline{\upsilon},\overline{\xi})\in \RR^2$,
	\begin{equation}\label{e:proxIalpha} 
	\prox_{\gamma \Phi}(\overline{\upsilon},\overline{\xi}) = 
	\left\{\begin{aligned}
	&(\upsilon,\xi) &&\mbox{if $\overline{\upsilon} \ge \gamma\alpha$ or \ \iflong\Iacond\fi}\\
	\iflong\else&&&\;\mbox{\footnotesize{\Iacond}}\\\fi
	&(0,0) && \mbox{otherwise},
	\end{aligned}
	\right.
	\end{equation}
	where
	\begin{align}
	\upsilon &= \overline{\upsilon} +\gamma \alpha(\widehat{\zeta}^{1-\alpha}-1) \\
	\xi &= \overline{\xi}+ \gamma (1-\alpha) (\widehat{\zeta}^{-\alpha}-1)
	\end{align}
	and $\widehat{\zeta}$ is the unique solution on
	{\footnotesize{$\big]\big(\max\{1- \frac{\overline{\upsilon}}{\gamma\alpha},0\}\big)^{\frac{1}{1-\alpha}},\pinf\big[$}}
	of
	\begin{equation}\label{e:opteqIalpha}
	\alpha \widehat{\zeta}^2+
	(\gamma^{-1}\overline{\upsilon}-\alpha) \widehat{\zeta}^{\alpha+1}
	+(1-\alpha-\gamma^{-1}\overline{\xi})\widehat{\zeta}^\alpha = 1-\alpha.
	\end{equation}
\end{proposition}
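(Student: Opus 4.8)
The plan is to specialize Proposition~\ref{p:psi} to $\varphi(\zeta)=1-\alpha+\alpha\zeta-\zeta^\alpha$, exactly as in the proofs of Propositions~\ref{p:hell} and~\ref{p:chisquare}. First I would record the elementary ingredients. Since $\varphi'(\zeta)=\alpha(1-\zeta^{\alpha-1})$, definitions \eqref{e:defvarthetam}--\eqref{e:defvarthetap} give, for every $\zeta\in\RPP$,
\begin{equation}
\vartheta_{-}(\zeta)=\alpha\big(1-\zeta^{1-\alpha}\big),\qquad \vartheta_{+}(\zeta)=(1-\alpha)\big(1-\zeta^{-\alpha}\big),
\end{equation}
while a primitive of $\zeta\mapsto\zeta\varphi'(\zeta^{-1})$ is $\Theta(\zeta)=\tfrac{\alpha}{2}\zeta^2-\tfrac{\alpha}{3-\alpha}\zeta^{3-\alpha}$. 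Substituting these into \eqref{e:derpsi} and multiplying by $\zeta^\alpha$ turns $\psi'(\widehat{\zeta})=0$ into precisely equation~\eqref{e:opteqIalpha}, and formulas \eqref{e:upsilonprox}--\eqref{e:chiprox} of Proposition~\ref{p:psi} reproduce the claimed expressions of $\upsilon$ and $\xi$. The monotonicity of $\vartheta_{-}$ and $\vartheta_{+}$ established in Lemma~\ref{le:psi} yields $\chi_{-}=(\max\{1-\overline{\upsilon}/(\gamma\alpha),0\})^{1/(1-\alpha)}$, which is exactly the stated left endpoint of the interval containing $\widehat{\zeta}$, together with $\chi_{+}=(1-\overline{\xi}/(\gamma(1-\alpha)))^{-1/\alpha}$ when $\overline{\xi}<\gamma(1-\alpha)$ and $\chi_{+}=\pinf$ otherwise.

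Second, I would turn Conditions~\ref{as:lepsi0}--\ref{as:lepsi2} of Lemma~\ref{le:psi} into the explicit threshold. Because $1-\alpha>0$ and $2-\alpha>1$, one has $\lim_{\zeta\to 0^+}\psi'(\zeta)=-\infty$ and $\lim_{\zeta\to\pinf}\psi'(\zeta)=+\infty$; hence whenever $\chi_{-}=0$ Condition~\ref{as:lepsi1} holds, and whenever $\chi_{+}=\pinf$ Condition~\ref{as:lepsi2} holds, automatically. The only genuinely two-sided case is $(\chi_{-},\chi_{+})\in\RPP^2$, i.e.\ $\overline{\upsilon}<\gamma\alpha$ and $\overline{\xi}<\gamma(1-\alpha)$, for which Lemma~\ref{le:psibis} reduces the two conditions to $\chi_{-}<\chi_{+}$. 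Raising that inequality to the power $\alpha$ collapses it to $(1-\overline{\upsilon}/(\gamma\alpha))^{\alpha/(1-\alpha)}\,(1-\overline{\xi}/(\gamma(1-\alpha)))<1$, equivalently the displayed inequality $1-\overline{\xi}/(\gamma(1-\alpha))<(1-\overline{\upsilon}/(\gamma\alpha))^{\alpha/(\alpha-1)}$. Collecting the cases shows that $\prox_{\gamma\Phi}(\overline{\upsilon},\overline{\xi})\in\RPP^2$ exactly under the condition appearing in~\eqref{e:proxIalpha}.

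Finally, the main obstacle is to justify that the complementary case yields $(0,0)$ rather than a nontrivial boundary point, as occurred for the Hellinger and Chi-square divergences. Here $\dom\Phi=\RP^2$, so, unlike the Jeffreys case, one cannot conclude from $\prox_{\gamma\Phi}(\overline{\upsilon},\overline{\xi})\in\dom\Phi$ alone. The clean route is to observe that the proximity point must have a nonempty subdifferential by Proposition~\ref{p:propprox}\ref{p:propproxii}, and that for $\xi>0$ the increment $\Phi(\upsilon,\xi)-\Phi(0,\xi)=\alpha\upsilon-\upsilon^\alpha\xi^{1-\alpha}$ behaves like $-\upsilon^\alpha\xi^{1-\alpha}$ as $\upsilon\to 0^+$, whose difference quotient tends to $-\infty$; hence $\partial\Phi(0,\xi)=\emp$. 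By symmetry $\partial\Phi(\upsilon,0)=\emp$ for $\upsilon>0$, whereas the weighted arithmetic--geometric mean inequality gives $\Phi\ge 0$ with $0\in\partial\Phi(0,0)$. Thus $(0,0)$ is the only boundary point of $\RP^2$ at which $\Phi$ is subdifferentiable, so whenever $\prox_{\gamma\Phi}(\overline{\upsilon},\overline{\xi})\notin\RPP^2$ it must equal $(0,0)$, which completes the proof.
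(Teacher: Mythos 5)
Your proposal is correct, and its core is the same as the paper's: specialize Proposition~\ref{p:psi} (via Lemmas~\ref{le:psi} and~\ref{le:psibis}) to $\varphi(\zeta)=1-\alpha+\alpha\zeta-\zeta^\alpha$, compute $\Theta$, $\vartheta_-$, $\vartheta_+$, $\chi_-$, $\chi_+$, multiply $\psi'(\widehat{\zeta})=0$ by $\widehat{\zeta}^\alpha$ to obtain \eqref{e:opteqIalpha}, and use the limits $\psi'(\zeta)\to-\infty$ as $\zeta\to 0^+$ and $\psi'(\zeta)\to\pinf$ as $\zeta\to\pinf$ to turn Conditions~\ref{as:lepsi0}--\ref{as:lepsi2} into the stated threshold; all of these computations match the paper's. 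Where you genuinely diverge is the complementary case. The paper's proof is silent on it (it ends with the limit observation), and the intended argument, by analogy with Propositions~\ref{p:hell} and~\ref{p:chisquare}, is a face-restriction one: if the prox lands on the boundary of $\RP^2$, restrict $\gamma\Phi+\frac12\|\cdot-(\overline{\upsilon},\overline{\xi})\|^2$ to the face $\upsilon=0$ (where $\Phi(0,\cdot)=(1-\alpha)\,\cdot$), obtain a soft-thresholding formula as in \eqref{e:threshxi}, and check that failure of the main condition forces $\overline{\xi}<\gamma(1-\alpha)$ and $\overline{\upsilon}<\gamma\alpha$, so both thresholds vanish. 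You instead rule out the open faces altogether by showing $\partial\Phi(0,\xi)=\emp$ for $\xi>0$ (the difference quotient in $\upsilon$ tends to $-\infty$) and likewise $\partial\Phi(\upsilon,0)=\emp$ for $\upsilon>0$, while $0\in\partial\Phi(0,0)$ by the weighted arithmetic--geometric mean inequality; then Proposition~\ref{p:propprox}\ref{p:propproxii} forces the prox to be $(0,0)$ whenever it is not interior. Your route is self-contained and arguably sharper: it explains \emph{why} no nontrivial boundary point can occur, rather than computing the face minimizers and verifying they degenerate, and it correctly flags that the Jeffreys-style domain argument is unavailable here since $\dom\Phi=\RP^2$. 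The paper's route has the minor advantage of reusing the exact mechanics already displayed for Hellinger and Chi-square, but as written in the paper it is only implicit; your proposal fills that gap explicitly.
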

\iflong
\begin{proof}
	We have then, for every $\zeta \in \RPP$,
	\begin{align}
	&\Theta(\zeta) = \alpha \Big(\frac{\zeta^2}{2} - \frac{\zeta^{3-\alpha}}{3-\alpha}\Big)\\
	&\vartheta_{-}(\zeta) = \alpha(1- \zeta^{1-\alpha})\\
	&\vartheta_{+}(\zeta) = (1-\alpha) (1-\zeta^{-\alpha})\\
	& \psi'(\zeta) = \alpha \zeta^{2-\alpha} + 
	\Big(\frac{\overline{\upsilon}}{\gamma}-\alpha\Big)\zeta +\frac{\alpha-1}{\zeta^{\alpha}} + 1 - \alpha - \frac{\overline{\xi}}{\gamma}
	\end{align}
	and
	\begin{align}
	&\chi_{-} = 
	\begin{cases}
	\displaystyle 
	\Big(1- \frac{\overline{\upsilon}}{\gamma \alpha}\Big)^{\frac{1}{1-\alpha}}
	& \mbox{if $\overline{\upsilon} < \gamma \alpha$}\\
	0 & \mbox{otherwise}
	\end{cases}
	\\
	&\chi_{+} = 
	\begin{cases}
	\displaystyle\Big(1-\frac{\overline{\xi}}{\gamma(1-\alpha)}\Big)^{-1/\alpha}
	& \mbox{if $\overline{\xi} < \gamma (1-\alpha)$}\\
	\pinf & \mbox{otherwise.}
	\end{cases}
	\end{align}
	The result follows by noticing that
	$\lim_{\substack{\zeta \to 0\\ \zeta > 0}} \psi'(\zeta) = -\infty$
	and $\lim_{\zeta \to \pinf} \psi'(\zeta) = \pinf$. 
\end{proof}
\fi

As for the Renyi divergence, \eqref{e:opteqIalpha} becomes a
polynomial equation when $\alpha$ is a rational number.

\begin{remark}
	We can also derive the proximity operator of
	\begin{equation}
	\widetilde{\Phi}(\upsilon,\xi) =
	\left\{
	\begin{aligned}
	&\kappa\big(\alpha \upsilon+(1-\alpha) \xi\big) - \upsilon^\alpha\xi^{1-\alpha}
	&& \mbox{\small if $(\upsilon,\xi)\!\in\! \RP^2$}\\
	&\pinf && \mbox{otherwise,}
	\end{aligned}
	\right.
	\end{equation}
	where $\kappa \in \RR$. From Proposition~\ref{p:propprox}\ref{p:propproxiv}, we get, for every $\gamma \in \RPP$ and for every $(\overline{\upsilon},\overline{\xi})\in \RR^2$,
	\begin{equation}
	\prox_{\gamma \widetilde{\Phi}}(\overline{\upsilon},\overline{\xi})=
	\prox_{\gamma \Phi}\big(\overline{\upsilon}+\gamma(1-\kappa)\alpha,\overline{\xi}+\gamma(1-\kappa)(1-\alpha)\big),
	\end{equation}
	where $\prox_{\gamma\Phi}$ is provided by Proposition \ref{prop:proxIalpha}.
\end{remark}

\begin{table*}[t]
	\caption{Conjugate function $\varphi^*$ of the restriction of $\varphi$ to $\RP$.\label{t:conjFunc}}
	\begin{center}
		\begin{tabular}{l @{\qquad\qquad} l @{\qquad\qquad} l}
			\toprule
			\multirow{2}{*}{Divergence}  & \multicolumn{1}{l}{$\varphi(\zeta)$}  & \multicolumn{1}{l}{$\varphi^*(\zeta^*)$} \\
			& \multicolumn{1}{l}{$\zeta > 0$}  & \multicolumn{1}{l}{$\zeta^* \in \RR$} \\
			\midrule
			Kullback-Leibler & $\zeta \ln \zeta-\zeta+1$ &  $e^{\zeta^*}-1$\\
			\\
			Jeffreys & $(\zeta-1)\ln\zeta$& $\displaystyle
			W(e^{1-\zeta^*}) + \big(W(e^{1-\zeta^*})\big)^{-1}+\zeta^*-2$\\
			\\
			Hellinger & $1+\zeta-2\sqrt{\zeta}$ & $ \begin{cases}\displaystyle\frac{\zeta^*}{1-\zeta^*} & \mbox{if $\zeta^* < 1$} \\ \pinf & \mbox{otherwise} \end{cases}$\\
			\\
			Chi square & $ (\zeta-1)^2$ &  
			$\begin{cases}\displaystyle\frac{\zeta^*(\zeta^*+4)}{4} & \mbox{if $\zeta^* \ge -2$}\\-1 & \mbox{otherwise}\end{cases}$\\
			\\
			Renyi, $\alpha \in ]1,+\infty[$ & $\zeta^{\alpha}$ & 
			$\begin{cases}\displaystyle (\alpha-1)\Big(\frac{\zeta^*}{\alpha}\Big)^{\frac{\alpha}{\alpha-1}} & \mbox{if $\zeta^* \ge 0$}\\ 0 &\mbox{otherwise} \end{cases}$ \\
			\\
			I$_{\alpha}$, $\alpha \in ]0,1[$ & $1-\alpha+\alpha \zeta-\zeta^{\alpha}$ & 
			$\begin{cases}\displaystyle (1-\alpha) \Big(\Big(1-\frac{\zeta^*}{\alpha}\Big)^{\frac{\alpha}{\alpha-1}}
			-1\Big) & \mbox{if $\zeta^* \le \alpha$}\\ \pinf & \mbox{otherwise}\end{cases}$\\
			\bottomrule
			\\
		\end{tabular}
	\end{center}
\end{table*}

\section{Connection with epigraphical projections}%
\label{sec:epigraphicalProj}
{\color{black}
	Proximal methods iterate a sequence of steps in which proximity operators are evaluated. The efficient computation of these operators is thus essential for dealing with high-dimensional convex optimization problems. In the context of constrained optimization, at least one of the additive terms of the global cost to be minimized consists of the indicator function of a closed convex set, whose proximity operator reduces to the projections onto this set. However, if we except a few well-known cases, such projection does not admit a closed-form expression. The resolution of large-scale optimization problems involving non trivial constraints is thus quite challenging.
	This difficulty can be circumvented when the constraint can be expressed as the lower-level set of some separable function, by making use of epigraphical projection techniques.
	Such approaches have attracted interest in the last years \cite{Chierchia_2012_Epigraphical_Projection, steidl_Pesquet_epigraphicalProj_2013, Tofighi_2014_preprint_sig_recon_pesc, Ono_2014_icassp_TGV_constraint, Wang2016, Moerkotte2015}. The idea consists of decomposing the constraint of interest into the intersection of a half-space and a number of epigraphs of simple functions. For this approach to be successful, it is mandatory that the projection onto these epigraphs can be efficiently computed.
	
	The next proposition shows that the expressions of the projection onto the epigraph 
	of a wide range of functions can be deduced from the expressions of the proximity operators of
	$\varphi$-divergences. In particular, in Table~\ref{t:conjFunc}, for each of the $\varphi$-divergences presented in Section~\ref{s:mainresult},
	we list the associated functions $\varphi^*$ for which such projections can thus be derived.
}

\begin{proposition}
	Let $\varphi\colon\RR \to \RPX$ be a function in $\Gamma_0(\RR)$ which is twice differentiable on $\RPP$.
	Let $\Phi$ be the function defined by \eqref{e:perspf} and $\varphi^*\in \Gamma_0(\RR)$ the Fenchel-conjugate function of 
	the restriction of $\varphi$ on $\RP$, defined as
	\begin{equation}
	(\forall \zeta^* \in\RR) \qquad \varphi^*(\zeta^*) = \sup_{\zeta\in \RP} \zeta \zeta^* - \varphi(\zeta).
	\end{equation}
	Let the epigraph of $\varphi^*$ be defined as 
	\begin{equation}
	\epi \varphi^*=\menge{(\upsilon^*,\xi^*)\in \RR^2}{\varphi^*(\upsilon^*)\le \xi^*}.
	\end{equation}
	Then, the projection onto $\epi \varphi^*$ is: for every $(\upsilon^*,\xi^*)\in \RR^2$,  
	\begin{equation}\label{e:projepi}
	\mathsf{P}_{\epi \varphi^*}(\upsilon^*,\xi^*)=(\upsilon^*,-\xi^*)
	- \prox_{\Phi}(\upsilon^*,-\xi^*).
	\end{equation}
\end{proposition}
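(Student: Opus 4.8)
The plan is to obtain \eqref{e:projepi} from the Moreau decomposition, after identifying the Fenchel conjugate $\Phi^*$ with the indicator function of a reflected copy of $\epi\varphi^*$. The point is that once we establish $\Phi^* = \iota_C$ for some nonempty closed convex set $C$, Proposition~\ref{p:propprox}\ref{p:propproxvi} applied with $\gamma = 1$ (i.e.\ Moreau's identity) yields $\prox_{\Phi^*}(x) = x - \prox_\Phi(x)$ for every $x\in\RR^2$, while $\prox_{\iota_C} = \mathsf P_C$ turns the left-hand side into a projection. The whole argument thus splits into two tasks: computing $\Phi^*$, and relating the projection onto $C$ to the projection onto $\epi\varphi^*$.

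First I would compute $\Phi^*$ directly from \eqref{e:perspf}. On the essential part $\{\upsilon\ge 0,\ \xi>0\}$ of $\dom\Phi$, the substitution $\zeta = \upsilon/\xi$ gives, for every $(\upsilon^*,\xi^*)\in\RR^2$,
\begin{equation}
\Phi^*(\upsilon^*,\xi^*) = \sup_{\xi>0}\,\xi\Big(\xi^* + \sup_{\zeta\ge 0}\big(\zeta\upsilon^* - \varphi(\zeta)\big)\Big) = \sup_{\xi>0}\,\xi\big(\xi^* + \varphi^*(\upsilon^*)\big),
\end{equation}
where $\varphi^*$ is precisely the conjugate of the restriction of $\varphi$ to $\RP$ appearing in the statement. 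The inner quantity $\xi^*+\varphi^*(\upsilon^*)$ is multiplied by an arbitrarily large $\xi>0$, so the supremum equals $0$ when $\varphi^*(\upsilon^*)+\xi^*\le 0$ and $\pinf$ otherwise. After checking that the boundary pieces $\xi=0$ of \eqref{e:perspf} (governed by $\lim_{\zeta\to\pinf}\varphi(\zeta)/\zeta$) do not enlarge this supremum, one gets $\Phi^* = \iota_C$ with
\begin{equation}
C = \menge{(\upsilon^*,\xi^*)\in\RR^2}{\varphi^*(\upsilon^*)\le -\xi^*}.
\end{equation}
Since $\varphi^*\in\Gamma_0(\RR)$, the set $\epi\varphi^*$ is nonempty closed convex, and $C$ is exactly its image under the linear isometry $V\colon(\upsilon^*,\xi^*)\mapsto(\upsilon^*,-\xi^*)$, which satisfies $V = V^{-1}$.

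Combining the two facts, I would write $\mathsf P_C = \prox_{\iota_C} = \prox_{\Phi^*} = \Id - \prox_\Phi$. It then remains to undo the reflection. Because $V$ is a linear isometry, projections transform as $\mathsf P_{V(\epi\varphi^*)} = V\circ \mathsf P_{\epi\varphi^*}\circ V$, so that $\mathsf P_{\epi\varphi^*} = V\circ \mathsf P_C\circ V$. Evaluating this at $(\upsilon^*,\xi^*)$, with $V(\upsilon^*,\xi^*) = (\upsilon^*,-\xi^*)$, and inserting the Moreau expression for $\mathsf P_C$ expresses the projection onto $\epi\varphi^*$ through $\prox_\Phi(\upsilon^*,-\xi^*)$, which is \eqref{e:projepi}.

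The main obstacle is the conjugate computation: one must handle \eqref{e:perspf} piecewise and verify that the recession/boundary terms (the $\xi=0$ cases) contribute nothing beyond the clean indicator $\iota_C$, so that the perspective conjugate collapses exactly to the constraint $\varphi^*(\upsilon^*)+\xi^*\le 0$. The secondary delicate point is the bookkeeping of the reflection $V$ in the final step, namely tracking which coordinate is negated when passing between $\epi\varphi^*$ and its mirror image $C$, so as to land precisely on the expression \eqref{e:projepi}.
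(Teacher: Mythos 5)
Your route is the same as the paper's: compute $\Phi^*$ from \eqref{e:perspf}, identify it as the indicator of the reflected epigraph $C=\menge{(\upsilon^*,\xi^*)\in\RR^2}{\varphi^*(\upsilon^*)\le-\xi^*}$, and then invoke Moreau's identity $\prox_{\Phi^*}=\Id-\prox_{\Phi}$ (Proposition~\ref{p:propprox}\ref{p:propproxvi} with $\gamma=1$). The paper does exactly this, writing $\Phi^*(\upsilon^*,\xi^*)=\iota_{\epi\varphi^*}(\upsilon^*,-\xi^*)$ and citing \cite[Example~13.8]{Bauschke_H_2011_book_con_amo} for the supremum identification that you carry out by the substitution $\upsilon=\zeta\xi$; your handling of the $\xi=0$ boundary pieces matches the paper's observation that the middle supremum is dominated. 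Up to and including $\mathsf{P}_C=\Id-\prox_\Phi$, your argument is sound.

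The gap is in your last sentence. Your own (correct) bookkeeping gives
\begin{equation*}
\mathsf{P}_{\epi \varphi^*}(\upsilon^*,\xi^*)=V\Big((\upsilon^*,-\xi^*)-\prox_{\Phi}(\upsilon^*,-\xi^*)\Big),
\qquad V\colon(a,b)\mapsto(a,-b),
\end{equation*}
and the outer $V$ does \emph{not} disappear: writing $\prox_\Phi(\upsilon^*,-\xi^*)=(p,q)$, your expression is $(\upsilon^*-p,\,\xi^*+q)$, whereas \eqref{e:projepi} asserts $(\upsilon^*-p,\,-\xi^*-q)$; these agree only if $q=-\xi^*$. So the claim that your formula ``is \eqref{e:projepi}'' is unjustified — and in fact the fault lies with \eqref{e:projepi}, not with your derivation. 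Concretely, for the Kullback--Leibler case ($\varphi^*(\zeta^*)=e^{\zeta^*}-1$, Table~\ref{t:conjFunc}), take $(\upsilon^*,\xi^*)=(0,10)\in\epi\varphi^*$, whose projection must be $(0,10)$ itself; Proposition~\ref{p:proxKL} with $\gamma=1$ gives $\prox_\Phi(0,-10)=(0,0)$ since $e^0=1\not>11$, so \eqref{e:projepi} outputs $(0,-10)\notin\epi\varphi^*$, while your reflected formula outputs $V(0,-10)=(0,10)$. (More structurally: $\prox_\Phi$ takes values in $\dom\Phi\subset\RP^2$, so the second coordinate of the right-hand side of \eqref{e:projepi} is always $\le-\xi^*$, which is absurd for points deep inside the epigraph.) The paper's own proof commits precisely the slip you were about to make: it jumps from $\Phi^*=\iota_{\epi\varphi^*}(\cdot\,,-\cdot)$ to \eqref{e:projepi} ``by the conjugation property'' without tracking the reflection. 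Keep your argument, but state the conclusion as the displayed formula above — equivalently, negate the second component of the right-hand side of \eqref{e:projepi} — and flag that the proposition as printed needs this sign correction.
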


\begin{proof}
	The conjugate function of $\Phi$ is, for every $(\upsilon,\xi)\in \RR^2$,
	\begin{equation}
	\Phi^*(\upsilon^*,\xi^*) =
	\sup_{(\upsilon,\xi)\in \RR^2} \upsilon \upsilon^* + \xi \xi^* - \Phi(\upsilon,\xi).
	\end{equation}
	From the definition of $\Phi$, we deduce that, for all $(\upsilon,\xi)\in \RR^2$,
	\begin{align}
	\iflong
	\else
	&\!\!\!\!\!\Phi^*(\upsilon^*,\xi^*) \nonumber\\
	\fi
	\iflong\Phi^*(\upsilon^*,\xi^*)\fi
	&=\sup\Big\{
	\sup_{(\upsilon,\xi)\in \RP\times ]0,\pinf[} \Big(\upsilon \upsilon^* + \xi \xi^* -\xi \varphi\Big(\frac{\upsilon}{\xi}\Big)\Big), \nonumber\\
	&\qquad\qquad\quad\;\;\;\sup_{\upsilon\in ]0,\pinf[} \Big(\upsilon \upsilon^* -\lim_{\substack{\xi \to 0\\\xi > 0}} \xi \varphi\Big(\frac{\upsilon}{\xi}\Big)\Big),
	0\Big\}\\
	&=\sup\Big\{
	\sup_{(\upsilon,\xi)\in \RP\times ]0,\pinf[} \Big(\upsilon \upsilon^* + \xi \xi^* -\xi \varphi\Big(\frac{\upsilon}{\xi}\Big)\Big),
	0\Big\}\\
	&=\sup\{\iota_{\epi \varphi^*}(\upsilon^*,-\xi^*),0\}\label{e:resintconj}\\
	&= \iota_{\epi \varphi^*}(\upsilon^*,-\xi^*),
	\end{align}
	where the equality in \eqref{e:resintconj} stems from \cite[Example 13.8]{Bauschke_H_2011_book_con_amo}.
	Then, \eqref{e:projepi} follows from the conjugation property of the proximity operator (see Proposition \ref{p:propprox} \ref{p:propproxvi}).
\end{proof}


\section{Experimental results}\label{s:results}
To illustrate the potential of our results, we consider a query optimization problem in database management systems where the optimal query execution plan depends on the accurate estimation of the proportion of tuples that satisfy the predicates in the query. More specifically, every request formulated by a user can be viewed as an event in a probability space $(\Omega,\mathcal{T},\mathcal{P})$, where $\Omega$ is a finite set of size $N$. In order to optimize request fulfillment, it is useful to accurately estimate the probabilities, also called \emph{selectivities}, associated with each element of $\Omega$. To do so, rough estimations of the probabilities of a certain number $P$ of events can be inferred from the history of formulated requests and some a priori knowledge.

Let $x=(x^{(n)})_{1\le n\le N}\in\RR^N$ be the vector of sought probabilities, and let $z=(z^{(i)})_{1\le i\le P} \in [0,1]^P$ be the vector of roughly estimated probabilities. The problem of selectivity estimation is equivalent to the following constrained entropy maximization problem \cite{Markl2007}:
\begin{equation}\label{ex1:baseline}
\minimize{x\in\RR^N} \sum_{n=1}^N x^{(n)} \ln x^{(n)}
\quad\operatorname{s.t.}\quad
\left\{
\begin{aligned}
&Ax=z,\\
&\sum_{n=1}^N x^{(n)} = 1,\\
&x \in [0,1]^N,
\end{aligned}
\right.
\end{equation}
where $A\in\RR^{P\times N}$ is a binary matrix establishing the theoretical link between the probabilities of each event and the probabilities of the elements of $\Omega$ belonging to it.

Unfortunately, due to the inaccuracy of the estimated probabilities, the intersection between the affine constraints $Ax=z$ and the other ones may be empty, making the above problem infeasible. In order to overcome this issue, we propose to jointly estimate the selectivities and the feasible probabilities. Our idea consists of reformulating Problem \eqref{ex1:baseline} by introducing the divergence between $Ax$ and an additional vector $y$ of feasible probabilities. This allows us to replace the constraint $Ax=z$ with an $\ell_k$-ball centered in $z$, yielding
\begin{gather}
\minimize{(x,y)\in\RR^N\times\RR^P} D(Ax,y) + \lambda\sum_{n=1}^N x^{(n)} \ln x^{(n)} \iflong\else\nonumber\\\fi
\quad\operatorname{s.t.}\quad
\left\{
\begin{aligned}
&\textcolor{black}{\|y-z\|_k\le \eta,}\\
&\sum_{n=1}^N x^{(n)} = 1,\\
&x \in [0,1]^N,
\end{aligned}
\right.\label{ex1:proposed}
\end{gather}
where $D$ is defined in \eqref{e:defD}, $\lambda$ and $\eta$ are some positive constants, whereas $k\in [1,\pinf[$ (the choice $k=2$ yields the Euclidean ball).

To demonstrate the validity of this approach, we compare it with the following methods:
\begin{enumerate}
\item a relaxed version of Problem \eqref{ex1:baseline}, in which the constraint $Ax=z$ is replaced with a squared Euclidean distance:
\begin{gather}
\minimize{x\in\RR^N} \|Ax-z\|^2+\lambda\sum_{n=1}^N x^{(n)} \ln x^{(n)} \iflong\else\nonumber\\\fi
\quad\operatorname{s.t.}\quad
\left\{
\begin{aligned}
&\sum_{n=1}^N x^{(n)} = 1,\\
&x \in [0,1]^N,
\end{aligned}
\right.\label{ex1:relaxed}
\end{gather}
\textcolor{black}{
or with $\varphi$-divergence $D$:
\begin{gather}
\minimize{x\in\RR^N} D(Ax,z)+\lambda\sum_{n=1}^N x^{(n)} \ln x^{(n)} \iflong\else\nonumber\\\fi
\quad\operatorname{s.t.}\quad
\left\{
\begin{aligned}
&\sum_{n=1}^N x^{(n)} = 1,\\
&x \in [0,1]^N,
\end{aligned}
\right.
\label{ex1:relaxed_div}
\end{gather}
}
where $\lambda$ is some positive constant;

\item the two-step procedure in \cite{Moerkotte2015}, which consists of finding a solution $\widehat{x}$ to 
\begin{equation}\label{ex1:preprocess}
\minimize{x\in\RR^N} Q_1\big(Ax, z \big)
\quad\operatorname{s.t.}\quad
\left\{
\begin{aligned}
&\sum_{n=1}^N x^{(n)} = 1,\\
&x \in [0,1]^N,
\end{aligned}
\right.
\end{equation}
and then solving \eqref{ex1:baseline} by replacing $z$ with $\widehat{z} = A\widehat{x}$. Hereabove, for every $y\in\RR^P$, $Q_1(y,z) = \sum_{i=1}^P \phi(y^{(i)}/z^{(i)})$ is a sum of quotient functions, i.e.
\begin{equation}
\phi(\xi) = 
\begin{cases}
\xi, &\textrm{if $\xi\ge 1$},\\
\xi^{-1}, &\textrm{if $0< \xi < 1$},\\
+\infty, &\textrm{otherwise}.\\
\end{cases}
\end{equation}
\end{enumerate}

For the numerical evaluation, we adopt an approach similar to \cite{Moerkotte2015}, and we first consider the following low-dimensional setting:
\begin{equation}
A = 
\begin{bmatrix}
1 & 0 & 1 & 0 & 1 & 0 & 1 \\
0 & 1 & 1 & 0 & 0 & 1 & 1 \\
0 & 0 & 0 & 1 & 1 & 1 & 1 \\
0 & 0 & 1 & 0 & 0 & 0 & 1 \\
0 & 0 & 1 & 0 & 1 & 0 & 1 \\
0 & 0 & 0 & 0 & 0 & 1 & 1
\end{bmatrix},
\qquad
z = 
\begin{bmatrix}
0.2114 \\
0.6331 \\
0.6312 \\
0.5182 \\
0.9337 \\
0.0035
\end{bmatrix},
\end{equation} 
for which there exists no $x\in \RP^N$ such that $Ax=z$. To assess the quality of the solutions $x^*$ obtained with the different methods, we evaluate the max-quotient between $Ax^*$ and $z$, that is \cite{Moerkotte2015}
\begin{equation}
Q_\infty(Ax^*, z) = \max_{1\le i\le P}\; \phi\left(\frac{[Ax^*]^{(i)}}{z^{(i)}}\right).
\end{equation}

Table \ref{tab:ex1} collects the $Q_\infty$-scores (lower is better) obtained with the different approaches. For all the considered $\varphi$-divergences,\footnote{Note that the Renyi divergence is not suitable for the considered application, because it tends to favor sparse solutions.} the proposed approach performs favorably with respect to the state-of-the-art, the KL divergence providing the best performance among the panel of considered $\varphi$-divergences. For the sake of fairness, the hyperparameters $\lambda$ and $\eta$ were hand-tuned in order to get the best possible score for each compared method. The good performance of our approach is related to the fact that $\varphi$-divergences are well suited for the estimation of probability distributions. 

Figure~\ref{fig:times} next shows the computational time for solving Problem~\eqref{ex1:proposed} for various dimensions $N$ of the selectivity vector to be estimated, with $A$ and $z$ randomly generated so as to keep the ratio $N/P$ equal to $7/6$. To make this comparison,  the primal-dual proximal method recalled in Algorithm~\ref{e:FBF} was implemented in MATLAB R2015, by using the stopping criterion $\|x_{n+1} - x_n\| < 10^{-7} \|x_n\|$. We then measured the execution times on an Intel i5 CPU at 3.20 GHz with 12 GB of RAM. The results show that all the considered $\varphi$-divergences can be efficiently optimized, with no significant computational time differences between them.

\begin{table}
	\caption{Comparison of $Q_\infty$-scores}
	\label{tab:ex1}
	\centering
	\textcolor{black}{
		\begin{tabular}{ccccc}
			\toprule
			\multicolumn{2}{c}{Problem \eqref{ex1:proposed}} & \eqref{ex1:relaxed_div} & \eqref{ex1:preprocess}+\eqref{ex1:baseline} \cite{Moerkotte2015} & \eqref{ex1:relaxed}\\
			$\varphi \Big.$ &  & & \\
			\midrule
			KL  & \textbf{2.23} & 2.95 & \multirow{4}{*}{2.45} & \multirow{4}{*}{25.84} \\
			Jef  & 2.44          & 3.41 \\
			Hel & 2.42          & 89.02 \\
			Chi & 2.34          & 3.20  \\
			I$_{{1}/{2}}$ & 2.42& 89.02  \\
			\bottomrule
		\end{tabular}
	}
\end{table}

\begin{figure}
	\centering
	\includegraphics[width=0.45\textwidth]{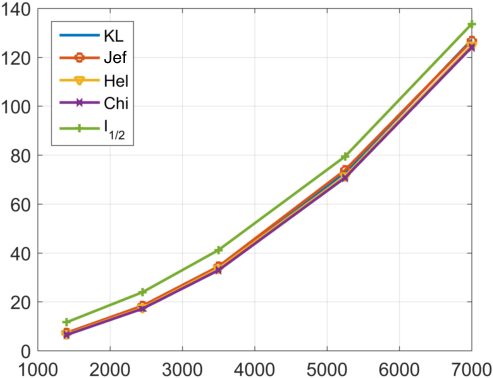}
	\caption{Execution time (in seconds) versus size $N$ in Problem \eqref{ex1:proposed}.}
	\label{fig:times}
\end{figure}

\section{Conclusion}
\label{sec:conclch3}
In this paper, we have shown how to solve convex optimization problems involving discrete information divergences by using proximal methods. We have carried out a thorough study of the properties of the proximity operators of $\varphi$-divergences, which has led us to derive new tractable expressions of them. In addition, we have related these expressions to the projection onto the epigraph of a number of convex functions.

Finally, we have illustrated our results on a selectivity estimation problem, where $\varphi$-divergences appear to be well suited for the estimation of the sought probability distributions. Moreover, computational time evaluations allowed us to show that the proposed numerical methods provide efficient solutions for solving large-scale optimization problems.

\iflong
\appendix
\else
\appendices
\fi

\iflong
\section{Proof of Proposition \ref{p:proxdifsimp}}\label{a:proxdifsimp}
Let $(\overline{\nu},\overline{\xi})\in\RR^2$.
From Proposition~\ref{p:propprox}\ref{p:propproxi}, we know that $\prox_{\Phi}(\overline{\nu},\overline{\xi})\in \RP^2$.
By using Proposition~\ref{p:propprox}\ref{p:propproxii}, we have the following equivalences:
\begin{align}\label{e:difcritder}
\begin{cases}
(\nu,\xi) \in \RPP^2\\
(\nu,\xi) = \prox_{\Phi}(\overline{\nu},\overline{\xi})
\end{cases}
&\Leftrightarrow
\begin{cases}
(\nu,\xi) \in \RPP^2\\
\overline{\nu}-\nu \in \partial \varphi(\nu-\xi)\\
\overline{\xi}-\xi \in -\partial\varphi(\nu-\xi)
\end{cases}\nonumber\\
&\Leftrightarrow
\begin{cases}
(\nu,\xi) \in \RPP^2\\
(\nu,\xi) = \prox_{\tilde{\Phi}}(\overline{\nu},\overline{\xi}),
\end{cases}
\end{align}
where $\tilde{\Phi}\colon (\nu,\xi) \mapsto \varphi(\nu-\xi)$.
By using now Proposition~\ref{p:propprox}\ref{p:propproxv}, we get
\begin{align}
	\eqref{e:difcritder}
	&\Leftrightarrow
	\begin{cases}
	(\nu,\xi) \in \RPP^2\\
	(\nu,\xi) = (\overline{\nu},\overline{\xi})+ \frac12 \big(\prox_{2\varphi}(\overline{\nu}-\overline{\xi})- \overline{\nu}+\overline{\xi}\big) (1,-1)
	\end{cases}\nonumber\\
	&\Leftrightarrow
	\begin{cases}
	(\nu,\xi) \in \RPP^2\\
	\nu = \frac12\big(\overline{\nu}+\overline{\xi}+\prox_{2\varphi}(\overline{\nu}-\overline{\xi})\big)\\
	\xi =\frac12\big(\overline{\nu}+\overline{\xi}-\prox_{2\varphi}(\overline{\nu}-\overline{\xi})\big)
	\end{cases}\nonumber\\
	&\Leftrightarrow
	\begin{cases}
	\left|\prox_{2\varphi}(\overline{\nu}-\overline{\xi})\right| < \overline{\nu}+\overline{\xi}\\
	\nu = \frac12\big(\overline{\nu}+\overline{\xi}+\prox_{2\varphi}(\overline{\nu}-\overline{\xi})\big)\\
	\xi =\frac12\big(\overline{\nu}+\overline{\xi}-\prox_{2\varphi}(\overline{\nu}-\overline{\xi})\big).
	\end{cases}
\end{align}
Similarly, we have
\begin{align}
	&\quad\;\begin{cases}
	\nu = 0, \xi \in \RPP\\
	(\nu,\xi) = \prox_{\Phi}(\overline{\nu},\overline{\xi})
	\end{cases}\nonumber\\
	&\Leftrightarrow
	\begin{cases}
	\nu = 0, \xi \in \RPP\\
	\overline{\nu} - \varphi'(-\xi) \in \partial \iota_{\RP}(0)=]-\infty,0]\\
	\xi-\overline{\xi} - \varphi'(-\xi) = 0
	\end{cases}\nonumber\\
	&\Leftrightarrow
	\begin{cases}
	\nu = 0, \xi \in \RPP\\
	\xi \ge \overline{\nu}+\overline{\xi}\\
	\xi = \prox_{\varphi(-\cdot)}\overline{\xi}
	\end{cases}\nonumber\\
	&\Leftrightarrow
	\begin{cases}
	\prox_{\varphi(-\cdot)}\overline{\xi}\in \RPP\cap [\overline{\nu}+\overline{\xi},\pinf[\\
	\nu = 0\\
	\xi = \prox_{\varphi(-\cdot)}\overline{\xi}
	\end{cases}\nonumber\\
	&\Leftrightarrow
	\begin{cases}
	\prox_{\varphi}\overline{\xi}\in \RPP\cap [\overline{\nu}+\overline{\xi},\pinf[\\
	\nu = 0\\
	\xi = \prox_{\varphi}\overline{\xi},
	\end{cases}
\end{align}
where the last equivalence results from the assumption that $\varphi$ is even.
Symmetrically,
\begin{equation}
	\begin{cases}
	\nu \in \RPP, \xi=0\\
	(\nu,\xi) = \prox_{\Phi}(\overline{\nu},\overline{\xi})
	\end{cases}
	\!\!\!\!\!\Leftrightarrow
	\begin{cases}
	\prox_{\varphi}\overline{\nu}\in \RPP\cap [\overline{\nu}+\overline{\xi},\pinf[\\
	\nu = \prox_{\varphi}\overline{\nu}\\
	\xi = 0.
	\end{cases}
\end{equation}
\fi

\section{Convergence proof of Algorithm \ref{algo:Newton_DKL}}\label{s:algoNewton_DKL}
We aim at finding the unique zero on $]\exp(-\gamma^{-1}\overline{\upsilon}),+\infty[$ of the function $\psi'$ given 
by \eqref{e:derpsiKL} along with its derivatives:
\begin{align}
(\forall \zeta\in \RPP)\quad \psi''({\zeta}) &= 1+\ln\zeta + \gamma^{-1}\overline{\upsilon} + \zeta^{-2},\\[0.5em]
\psi'''({\zeta}) &= \zeta^{-1} - 2\,\zeta^{-3}.
\end{align}
To do so, we employ the Newton method given in Algorithm~\ref{algo:Newton_DKL}, the convergence of which is here established. Assume that
\begin{itemize}
	\item $\Big.\left(\overline{\upsilon},\overline{\xi}\right)\in\RR^2$ are such that $\exp(\gamma^{-1}\overline{\upsilon}) > 1-\gamma^{-1}\overline{\xi}$,
	\item $\Big.\widehat{\zeta}$ is the zero on $\left]\exp(-\gamma^{-1}\overline{\upsilon}),+\infty\right[$ of $\psi'$,
	\item $\Big.(\widehat{\zeta}_n)_{n\in\NN}$ is the sequence generated by Algorithm~\ref{algo:Newton_DKL},
	\item $\epsilon_n = \widehat{\zeta}_n - \widehat{\zeta}\Big.$ for every $n\in\NN$.
\end{itemize}
We first recall a fundamental property of the Newton method, and then we proceed to the actual convergence proof.

\begin{lemma}\label{th:lemma_newton}
	For every $n\in\NN$, 
	\begin{equation}
	\epsilon_{n+1} = \epsilon_{n}^2 \frac{\psi'''(\varrho_n)}{2\psi''(\widehat{\zeta}_n)}
	\end{equation}
	where $\varrho_n$ is between $\widehat{\zeta}_n$ and $\widehat{\zeta}$.
\end{lemma}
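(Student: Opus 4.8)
The plan is to apply Taylor's theorem with the Lagrange form of the remainder to $\psi'$, exploiting that $\widehat{\zeta}$ is a zero of $\psi'$ and that the Newton update reads $\widehat{\zeta}_{n+1} = \widehat{\zeta}_n - \psi'(\widehat{\zeta}_n)/\psi''(\widehat{\zeta}_n)$. The function $\psi$ is thrice differentiable on $\RPP$, as confirmed by the closed forms of $\psi''$ and $\psi'''$ recalled above, and $\psi''(\widehat{\zeta}_n) > 0$ since $\psi$ is strictly convex on $]\exp(-\gamma^{-1}\overline{\upsilon}),\pinf[$ by Lemma~\ref{le:psi}; hence the update is well defined at each step.

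First I expand $\psi'$ about the current iterate $\widehat{\zeta}_n$ and evaluate at the root $\widehat{\zeta}$. By Taylor's theorem there exists $\varrho_n$ between $\widehat{\zeta}_n$ and $\widehat{\zeta}$ such that
\begin{equation}
\psi'(\widehat{\zeta}) = \psi'(\widehat{\zeta}_n) + \psi''(\widehat{\zeta}_n)\,(\widehat{\zeta}-\widehat{\zeta}_n) + \tfrac12\,\psi'''(\varrho_n)\,(\widehat{\zeta}-\widehat{\zeta}_n)^2 .
\end{equation}
Using $\psi'(\widehat{\zeta}) = 0$ together with $\widehat{\zeta}-\widehat{\zeta}_n = -\epsilon_n$, this identity rearranges into
\begin{equation}
\frac{\psi'(\widehat{\zeta}_n)}{\psi''(\widehat{\zeta}_n)} = \epsilon_n - \frac{\psi'''(\varrho_n)}{2\psi''(\widehat{\zeta}_n)}\,\epsilon_n^2 .
\end{equation}

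Second I substitute this into the definition $\epsilon_{n+1} = \widehat{\zeta}_{n+1} - \widehat{\zeta} = \epsilon_n - \psi'(\widehat{\zeta}_n)/\psi''(\widehat{\zeta}_n)$; the two copies of $\epsilon_n$ cancel, leaving exactly $\epsilon_{n+1} = \epsilon_n^2\,\psi'''(\varrho_n)/\big(2\psi''(\widehat{\zeta}_n)\big)$, which is the asserted recurrence. There is essentially no obstacle here: the only points needing care are the smoothness of $\psi$ (supplied by the explicit third derivative) and the nonvanishing of $\psi''(\widehat{\zeta}_n)$ (supplied by strict convexity), both already in hand. The genuine work — bounding $\psi'''/\psi''$ and controlling the iterates so that $\varrho_n$ stays in a region where this quotient is tame, thereby turning this local quadratic relation into a global convergence statement — is deferred to the subsequent use of this lemma and is not part of the present claim.
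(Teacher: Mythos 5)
Your proof is correct and follows essentially the same route as the paper: a second-order Taylor expansion of $\psi'$ about $\widehat{\zeta}_n$ with Lagrange remainder, evaluated at the root $\widehat{\zeta}$ where $\psi'$ vanishes, then substituted into the Newton update $\epsilon_{n+1}=\epsilon_n-\psi'(\widehat{\zeta}_n)/\psi''(\widehat{\zeta}_n)$. The only cosmetic difference is that you solve the Taylor identity for $\psi'(\widehat{\zeta}_n)/\psi''(\widehat{\zeta}_n)$ before substituting, whereas the paper writes $\epsilon_{n+1}$ as a single fraction first; the algebra and the conclusion are identical.
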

\begin{proof}
	The definition of $\epsilon_{n+1}$ yields
	\begin{equation}
	\epsilon_{n+1} = \widehat{\zeta}_n - \frac{\psi'({\widehat{\zeta}_n})}{\psi''({\widehat{\zeta}_n})} -\widehat{\zeta} = \frac{\epsilon_n\psi''({\widehat{\zeta}_n})-\psi'({\widehat{\zeta}_n})}{\psi''({\widehat{\zeta}_n})}.
	\end{equation}
	Moreover, for every $\widehat{\zeta}_n \in \RPP$, the second-order Taylor expansion of $\psi'$ around $\widehat{\zeta}_n$ is
	\begin{equation}
	\psi'(\widehat{\zeta}) = \psi'(\widehat{\zeta}_n) + \psi''(\widehat{\zeta}_n) (\widehat{\zeta}-\widehat{\zeta}_n) + \frac{1}{2} \psi'''(\varrho_n)(\widehat{\zeta}-\widehat{\zeta}_n)^2,
	\end{equation}
	where $\varrho_n$ is between $\widehat{\zeta}_n$ and $\widehat{\zeta}$. From the above equality, we deduce that $\psi'(\widehat{\zeta}) = \psi'(\widehat{\zeta}_n) - \psi''(\widehat{\zeta}_n) \epsilon_n + \frac{1}{2} \psi'''(\varrho_n)\epsilon_n^2=0$.
\end{proof}

\begin{proposition}
The sequence $(\widehat{\zeta}_n)_{n\in\mathbb{N}}$ converges to $\widehat{\zeta}$.
\end{proposition}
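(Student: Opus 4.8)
The plan is to combine the quadratic error recursion of Lemma~\ref{th:lemma_newton} with elementary monotonicity arguments, after first recording the sign structure of the derivatives of $\psi$. By Lemma~\ref{le:psi}, $\psi$ is strictly convex on $]\chi_{-},\pinf[$ with $\chi_{-}=\exp(-\gamma^{-1}\overline{\upsilon})=\widehat{\zeta}_0$, so $\psi''>0$ there; at the left endpoint one checks $\psi''(\chi_{-})=1+\chi_{-}^{-2}>0$ as well. The decisive feature is the factorization $\psi'''(\zeta)=\zeta^{-3}(\zeta^{2}-2)$, which makes $\psi'$ concave on $]0,\sqrt2[$ and convex on $]\sqrt2,\pinf[$. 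This sign change at $\sqrt2$ is precisely why one cannot simply invoke the textbook monotone Newton convergence, and it is the main obstacle of the proof. I would also record that $\psi'(\widehat{\zeta}_0)<0$: indeed $\widehat{\zeta}_0=\chi_{-}$, and by continuity of $\psi'$ at $\chi_{-}\in\RPP$ this is exactly Condition~\ref{as:lepsi1} of Lemma~\ref{le:psi}, which holds in the regime $\exp(\gamma^{-1}\overline{\upsilon})>1-\gamma^{-1}\overline{\xi}$ assumed here. Since $\psi'$ is increasing, it follows that $\widehat{\zeta}_0<\widehat{\zeta}$, and because the step $-\psi'(\widehat{\zeta}_0)/\psi''(\widehat{\zeta}_0)$ is positive, the first iterate moves to the right.

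Next I would track the sign of $\epsilon_n=\widehat{\zeta}_n-\widehat{\zeta}$ through Lemma~\ref{th:lemma_newton}: since $\epsilon_n^2\ge 0$ and $\psi''(\widehat{\zeta}_n)>0$, the sign of $\epsilon_{n+1}$ equals that of $\psi'''(\varrho_n)$, with $\varrho_n$ lying between $\widehat{\zeta}_n$ and $\widehat{\zeta}$. I would then argue by a dichotomy: either every iterate stays in $[\chi_{-},\widehat{\zeta}[$, or there is a first index $m$ with $\widehat{\zeta}_m>\widehat{\zeta}$. In the first case, $\widehat{\zeta}_n<\widehat{\zeta}$ forces $\psi'(\widehat{\zeta}_n)<0$, so every step is nonnegative and $(\widehat{\zeta}_n)$ is nondecreasing and bounded above by $\widehat{\zeta}$; it therefore converges, and passing to the limit in the Newton recursion shows $\psi'$ vanishes at the limit, which by strict convexity of $\psi$ is $\widehat{\zeta}$.

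In the overshoot case, I would first show that an overshoot forces $\widehat{\zeta}>\sqrt2$. Since $\epsilon_m>0$ while $\epsilon_{m-1}<0$, Lemma~\ref{th:lemma_newton} requires $\psi'''(\varrho_{m-1})>0$, i.e. $\varrho_{m-1}>\sqrt2$; as $\varrho_{m-1}\in\,]\widehat{\zeta}_{m-1},\widehat{\zeta}[$, this yields $\widehat{\zeta}>\sqrt2$. Consequently $\psi'''>0$ on all of $[\widehat{\zeta},\pinf[$, and a short induction via Lemma~\ref{th:lemma_newton} gives $\widehat{\zeta}_n>\widehat{\zeta}$ for every $n\ge m$ (the sign of $\epsilon_{n+1}$ stays positive), while $\psi'(\widehat{\zeta}_n)>0$ makes each step negative. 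Hence $(\widehat{\zeta}_n)_{n\ge m}$ is nonincreasing and bounded below by $\widehat{\zeta}$, so it converges, and once again the limit is a zero of $\psi'$, namely $\widehat{\zeta}$.

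Finally I would assemble the two cases: in each scenario $(\widehat{\zeta}_n)$ is eventually monotone, stays in $]\chi_{-},\pinf[$ where $\psi''>0$ keeps the iteration well defined, and converges to a limit $\ell$ satisfying $\psi'(\ell)=0$; strict convexity of $\psi$ then gives $\ell=\widehat{\zeta}$. The only genuine difficulty I anticipate is the bookkeeping around the inflection point $\sqrt2$ — specifically, establishing that an overshoot can occur only when the root itself exceeds $\sqrt2$, which is exactly what lets the convexity of $\psi'$ beyond the root take over and drive the monotone descent back to $\widehat{\zeta}$.
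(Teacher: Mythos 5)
Your proof is correct, and it takes a genuinely different route from the paper's, although both rest on the same two ingredients: the error recursion of Lemma~\ref{th:lemma_newton} and the sign change of $\psi'''$ at $\sqrt2$. The paper proceeds by a three-way case analysis on the positions of $\widehat{\zeta}_0$ and $\widehat{\zeta}$ relative to the inflection point: when $\widehat{\zeta}\le\sqrt2$ it invokes the textbook monotone-convergence result for Newton's method on an increasing concave function; when $\sqrt2\le\widehat{\zeta}_0<\widehat{\zeta}$ it invokes the corresponding convex-case result (one overshoot, then monotone decrease); and when $\widehat{\zeta}_0<\sqrt2<\widehat{\zeta}$ it shows that the Newton step $-\psi'/\psi''$ is bounded below by a positive constant on $[\widehat{\zeta}_0,\sqrt2]$, so the iterates cross $\sqrt2$ in finitely many steps and the second case takes over. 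Your dichotomy --- either no iterate ever exceeds $\widehat{\zeta}$, in which case monotone bounded convergence plus continuity of $\psi'$ and $\psi''$ at the limit finishes the proof, or there is a first overshoot at index $m$, in which case the positivity of $\psi'''(\varrho_{m-1})$ forces $\varrho_{m-1}>\sqrt2$ and hence $\widehat{\zeta}>\sqrt2$, after which induction gives a decreasing sequence bounded below by $\widehat{\zeta}$ --- buys a self-contained argument: you never cite external Newton convergence results, and you dispense with the paper's quantitative step-size bound in its third case; indeed, your overshoot claim re-proves, in this setting, exactly the concave-case fact the paper imports (no overshoot can occur when $\widehat{\zeta}\le\sqrt2$). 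Two trivial points to tidy up: your dichotomy omits the possibility $\widehat{\zeta}_m=\widehat{\zeta}$ for some $m$, in which case the Newton step vanishes and the sequence is stationary at $\widehat{\zeta}$, so convergence is immediate; and in the no-overshoot branch you should note that the iterates, being increasing from $\widehat{\zeta}_0=\chi_-$, remain in the region where $\psi''>0$, so the limit $\ell$ satisfies $\psi''(\ell)>0$ and the passage to the limit in the recursion legitimately yields $\psi'(\ell)=0$.
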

\begin{proof}
	The assumption $\exp(\gamma^{-1}\overline{\upsilon}) >
	1-\gamma^{-1}\overline{\xi}$ implies that $\psi'$ is negative at the initial value $\widehat{\zeta}_0=\exp(-\gamma^{-1}\overline{\upsilon})$, that is 
	\begin{equation}
	\psi'(\widehat{\zeta}_0) = - \exp(\gamma^{-1}\overline{\upsilon}) + 1 - \gamma^{-1}\overline{\xi}
	<0.
	\end{equation} 
	Moreover, $\psi'$ is increasing on $\left[\exp(-\gamma^{-1}\overline{\upsilon}),+\infty\right[$, since
	\begin{equation}
	\big(\forall {\zeta}\in\left[\exp(-\gamma^{-1}\overline{\upsilon}),+\infty\right[\big)\qquad
	\psi''({\zeta}) >0,
	\end{equation}
	and $\sqrt{2}$ is a non-critical inflection point for $\psi'$, since
	\begin{align}
	\big(\forall {\zeta}\in\big]\sqrt{2},+\infty\big[\big)\qquad& 
	\psi'''({\zeta}) > 0,\\[0.5em]
	\big(\forall {\zeta}\in\big]0,\sqrt{2}\big[\big)\qquad& 
	\psi'''({\zeta}) < 0.
	\end{align}
	To prove the convergence, we consider the following cases:
	\begin{itemize}
		\item \emph{Case $\widehat{\zeta} \le \sqrt{2}$}: $\psi'$ is increasing and concave on $[\widehat{\zeta}_0,\sqrt{2}]$. Hence, Newton method initialized at the lower bound of interval $[\widehat{\zeta}_0,\widehat{\zeta}]$ monotonically increases to $\widehat{\zeta}$ \cite{kincaid2002_book}.

		\item \emph{Case $\sqrt{2} \le \widehat{\zeta}_0 < \widehat{\zeta}$}: $\psi'$ is increasing and convex on $[\widehat{\zeta}_0,+\infty[$. Hence, Lemma~\ref{th:lemma_newton} yields $\epsilon_{1} = \widehat{\zeta}_{1} - \widehat{\zeta} > 0$. It then follows from standard properties of Newton algorithm for minimizing an increasing convex function that	$(\widehat{\zeta}_n)_{n\ge1}$ monotonically decreases to $\widehat{\zeta}$ \cite{kincaid2002_book}.
		
		\item \emph{Case $\Big.\widehat{\zeta}_0 < \sqrt{2} < \widehat{\zeta}$}: as $\psi'$ is negative and increasing on $[\widehat{\zeta}_0,\widehat{\zeta}[$, the quantity $-\psi' / \psi''$ is positive and lower bounded on $[\widehat{\zeta}_0,\sqrt{2}]$, that is
		\begin{equation}
		\big(\forall{\zeta}\in[\widehat{\zeta}_0,\sqrt{2}]\big)\quad -\frac{\psi'({{\zeta}})}{\psi''({{\zeta}})} \ge - \frac{\psi'({\sqrt{2}})}{\psi''({\widehat{\zeta}_0})} > 0.
		\end{equation}
		There thus exists $k > 0$ such that $\widehat{\zeta}_0 <  \ldots < \widehat{\zeta}_k$ and  $\widehat{\zeta}_{k} > \sqrt{2}$. Then, the convergence of $(\widehat{\zeta}_n)_{n\ge k}$ follows from the same arguments
		as in the previous case.
	\end{itemize}
\end{proof}

\iflong
\section{Convergence proof of Algorithm \ref{algo:Newton}}
\label{s:algoNewton}
We aim at finding the unique zero on $]W(e^{1-\gamma^{-1}\overline{\upsilon}}),\pinf[$ of the function $\psi'$ given 
by \eqref{e:derJK}, whose derivative reads
\begin{equation}
(\forall \zeta\in \RPP)\quad \psi''(\zeta)  = \ln\zeta + \frac{1}{\zeta} + \frac{1}{\zeta^2} + 2\zeta + \gamma^{-1}\overline{\upsilon}.
\end{equation}
To do so, we employ the projected Newton algorithm, whose global convergence is guaranteed for any initial value by the following condition \cite{global_conv_petersen_2004_JMMOR}: $(\forall a \in ]0,+\infty[)(\forall b \in ]a,+\infty[)$
\begin{equation}
\qquad \psi''(a)+\psi''(b) > \frac{\psi'(b)-\psi'(a)}{b-a},
\end{equation}
which is equivalent to
\begin{equation}\label{n:eq_conv_cond}
(b-a) \psi''(a) + (b-a) \psi''(b) - \psi'(b) + \psi'(a) > 0.
\end{equation}
Condition \eqref{n:eq_conv_cond} can be rewritten as follows
\begin{align}
&(b-a) (\ln a + \frac{1}{a} + \frac{1}{a^2} + 2a + \gamma^{-1}\overline{\upsilon}) \nonumber \\
&+ (b-a) (\ln b + \frac{1}{b} + \frac{1}{b^2} + 2b + \gamma^{-1}\overline{\upsilon}) \nonumber  \\
&+\Big(  (a+1)\ln a + \frac{a}{2} + 1 - \frac{1}{a} + a^2 + \Big(\gamma^{-1}\overline{\upsilon}- \frac{3}{2}\Big)a - \gamma^{-1} \overline{\xi} \Big)\nonumber  \\
&- \Big( (b+1)\ln b + \frac{b}{2} + 1 - \frac{1}{b} + b^2 + \Big(\gamma^{-1}\overline{\upsilon}- \frac{3}{2}\Big)b - \gamma^{-1} \overline{\xi} \Big)\nonumber\\
&> 0,
\end{align}
which, after some simplification, boils down to
\begin{align}\label{n:eq_newExp_psi2}
&(b+1)\ln a - (a+1)\ln b + \frac{b}{a} + \frac{b}{a^2} - \frac{2}{a} - \frac{a}{b} - \frac{a}{b^2}  \nonumber\\
&+ \frac{2}{b} - a^2 + b^2+ \gamma^{-1}\overline{\upsilon}(b-a) + \frac{1}{2}(b-a) > 0.
\end{align}
We now show that Condition \eqref{n:eq_newExp_psi2} holds true because $b>a$ and it is a sum of two terms  
\begin{align}
\label{n:eq_gab_gba}
& (b+1)\ln a - (a+1)\ln b + \frac{b}{a} + \frac{b}{a^2} - \frac{2}{a} - \frac{a}{b} - \frac{a}{b^2}\nonumber \\
& + \frac{2}{b} - a^2 + b^2
>0
\end{align}
and 
\begin{align}
\underbrace{\gamma^{-1}\overline{\upsilon}(b-a) + \frac{1}{2}(b-a)}_{>0}.
\end{align}

Indeed, \eqref{n:eq_gab_gba} can be rewritten as
\begin{equation*}
(\forall a \in ]0,+\infty[)(\forall b \in ]a,+\infty[)\qquad g(a,b)-g(b,a) > 0
\end{equation*}
where
\begin{equation*}
g(x,y) = -(x+1)\ln y  - \frac{x}{y}  + \frac{y}{x^2}  - \frac{2}{x} + y^2.
\end{equation*}
Therefore, we shall demonstrate that, for every $b>a>0$, $g$ is decreasing w.r.t.\ the first argument and increasing w.r.t.\ to the second argument,  i.e.
\begin{equation}\label{n:ineq_arg}
g(a,b)>g(b,b) \qquad\textrm{and}\qquad g(b,b)>g(b,a),
\end{equation} 
which implies that
\begin{equation*}
g(a,b)>g(b,a).
\end{equation*} 
To prove these two inequalities, we will study the derivative of $g$ with respect to its arguments. The conditions in \eqref{n:ineq_arg} are indeed equivalent to 
\begin{align}
& (\forall y \in ]0,+\infty[ ) (\forall x \in ]0,y[)  && \frac{\partial g }{\partial x}(x,y)<0,\label{n:ineq_1arg}\qquad \qquad \qquad \;\;\;\\
& (\forall x \in ]0,+\infty[ ) (\forall y \in ]x,+\infty[)  && \frac{\partial g }{\partial y}(x,y)>0.\label{n:ineq_2arg}\qquad \qquad \qquad \;\;\;
\end{align}
The first and second partial derivatives of $g$ w.r.t.\ $x$ read\\
$(\forall y \in ]0,+\infty[ ) (\forall x \in ]0,y[)$
\begin{align*}
& \frac{\partial g}{\partial x}(x,y)  = -\ln y -\frac{1}{y}-\frac{2y}{x^3}+\frac{2}{x^2}\\
&\frac{\partial^2 g}{\partial x^2}(x,y)  = \frac{6y}{x^4}-\frac{4}{x^3} = \frac{6y-4x}{a^4}>0. 
\end{align*}
Since $\partial^2 g/\partial x^2$ is strictly positive, $\partial g/\partial x$ is strictly increasing w.r.t.\ $x$ and
\begin{align*}
& \lim_{x \rightarrow y}  \frac{\partial g}{\partial x}(x,y)  = -\ln y -\frac{1}{y} = \ln \frac{1}{y} -\frac{1}{y} < 0.
\end{align*}
Therefore, Condition \eqref{n:ineq_1arg} holds, and $g$ is decreasing with respect to $x$. \\
The first and second partial derivatives of $g$ w.r.t.\ $y$ read\\
$(\forall x \in ]0,+\infty[ ) (\forall y \in ]x,+\infty[)$
\begin{align*}
\qquad & \frac{\partial g}{\partial y}(x,y) = \frac{-x}{y}-\frac{1}{y}+\frac{x}{y^2}+\frac{1}{x^2}+2y\\
\qquad& \frac{\partial^2 g}{\partial y^2}(x,y) = \frac{x}{y^2}+\frac{1}{y^2} - \frac{2x}{y^3} +2.
\end{align*}
For every $y \in [1,+\infty[$,
\begin{equation*}
(\forall x \in ]0,y[ ) \qquad \frac{\partial^2 g}{\partial y^2}(x,y) = \frac{x}{y^2}+\frac{1}{y^2} - \underbrace{\frac{2x}{y^3}}_{<1} +2 >0,
\end{equation*}
and $\partial g/\partial y$ is strictly increasing w.r.t.\ $y$ (since $\partial^2 g/\partial y^2$ is strictly positive) and
\begin{align*}
(\forall x \in [1,+\infty[ )&& \quad \lim_{y \rightarrow x} \;\;  & \frac{\partial g}{\partial y}(x,y) = -1 + \frac{1}{x^2}+2x>0,\\
(\forall x \in ]0,1] )&&   &\frac{\partial g}{\partial y}(x,1) =  \frac{1}{x^2}+1>0.
\end{align*}
For every $y \in\; ]0,1[ $, we have
\begin{align*}
(\forall x \in ]0,y[ )  \qquad  \frac{\partial g}{\partial y}(x,y)& =\frac{-x}{y}+\frac{x}{y^2} -\frac{1}{y}+\frac{1}{x^2}+2y,\\
& =\frac{x-xy}{y^2} -\frac{1}{y}+\frac{1}{x^2}+2y,
\end{align*}
since $x<y<1$, this implies that $xy<x$ and $\frac{1}{y}<\frac{1}{x}<\frac{1}{x^2}$ and 
\begin{align*}
(\forall x \in ]0,y[ )  \qquad  \frac{\partial g}{\partial y}(x,y)= \underbrace{\frac{x-xy}{y^2}}_{>0} \underbrace{-\frac{1}{y}+\frac{1}{x^2}}_{>0}+2y>0.
\end{align*}
As Condition \eqref{n:ineq_2arg} holds, $g$ is increasing with respect to $y$. 
\fi

\bibliographystyle{IEEEbib}
\bibliography{abbr,biblio}

\end{document}